\definecolor{nblue}{RGB}{180, 40, 90}
\definecolor{dark-gray}{gray}{0.35}
\theoremstyle{plain}
\newtheorem{theorem}{Theorem}
\newtheorem{lemma}[theorem]{Lemma}
\newtheorem{corollary}[theorem]{Corollary}
\newtheorem{proposition}[theorem]{Proposition}
\theoremstyle{definition}
\newtheorem{definition}[theorem]{Definition}
\theoremstyle{remark}
\newtheorem{remark}[theorem]{Remark}
\pgfplotsset{compat=1.13}
\global\mdfdefinestyle{exampledefault}{%
	linecolor=gray,linewidth=0.6pt,%
	leftmargin=2pt,rightmargin=2pt
}
\newcommand{\myd}{\bm{d}}
\newcommand{\N}{\mathbb{N}}
\newcommand{\R}{\mathbb{R}}
\newcommand{\Z}{\mathbb{Z}}
\newcommand{\Gd}{\mathcal{G}(\myd)}
\newcommand{\Gul}{\mathcal{G}(\bm{\ell},\bm{u})}
\newcommand{\Gmul}{\mathcal{G}_m(\bm{\ell},\bm{u})}
\newcommand{\Dm}{\mathcal{D}_m(\bm{\ell},\bm{u})}
\g@addto@macro\bfseries{\boldmath}
\newenvironment{rtheorem}[3][]{%
\noindent\ifthenelse{\equal{#1}{}}{\textbf{#2} #3.}{ #2 #3 (#1)}%
\begin{it}}{\end{it}}
\newcommand{\mysetminusD}{\hbox{\tikz{\draw[line width=0.6pt,line cap=round] (3pt,0) -- (0,6pt);}}}
\newcommand{\mysetminusT}{\mysetminusD}
\newcommand{\mysetminusS}{\hbox{\tikz{\draw[line width=0.45pt,line cap=round] (2pt,0) -- (0,4pt);}}}
\newcommand{\mysetminusSS}{\hbox{\tikz{\draw[line width=0.4pt,line cap=round] (1.5pt,0) -- (0,3pt);}}}
\newcommand{\mysetminus}{\mathbin{\mathchoice{\mysetminusD}{\mysetminusT}{\mysetminusS}{\mysetminusSS}}}
\DeclareMathOperator{\gap}{Gap}
\numberwithin{theorem}{section}
\numberwithin{equation}{section}
\begin{document}
\title{
Approximate Sampling and Counting of Graphs \\ with Near-Regular Degree Intervals
}
\author{Georgios Amanatidis \\
	 University of Essex\\
	 Colchester, United Kingdom\\
	 \small{\texttt{georgios.amanatidis@essex.ac.uk}}\\
	\and
Pieter Kleer\footnote{\ Part of this work has been carried out while the author was a postdoctoral fellow at the Max Planck Institute for Informatics in Saarbr\"ucken, Germany}\\
Tilburg University\\
Tilburg, The Netherlands\\
\small{\texttt{p.s.kleer@tilburguniversity.edu}}
}

\date{\today}
\begin{titlepage}
	\clearpage
	\maketitle
	\thispagestyle{empty}

\begin{abstract}
\noindent The approximate uniform sampling of graphs with a given degree sequence is a well-known, extensively studied problem in theoretical computer science and has significant applications, e.g., in the analysis of social networks. In this work we study a generalization of the problem, where \emph{degree intervals} are specified instead of  a single degree sequence. We are interested in sampling and counting graphs whose degree sequences satisfy the corresponding degree interval constraints. A natural scenario where this problem arises is in hypothesis testing on networks that are only partially observed. 
We provide the first \emph{fully polynomial almost uniform sampler (FPAUS)} as well as the first \emph{fully polynomial randomized approximation scheme (FPRAS)} for sampling and counting, respectively, graphs with near-regular degree intervals, i.e., graphs in  which every node has a degree from an interval not too far away from a given $r \in \N$. In order to design our FPAUS, we rely on various state-of-the-art tools from Markov chain theory and combinatorics. In particular, by carefully using Markov chain decomposition and comparison arguments, we reduce part of our problem to the recent breakthrough of Anari, Liu, Oveis Gharan, and Vinzant (2019) on sampling a base of a matroid under a strongly log-concave probability distribution, and we provide the first non-trivial algorithmic application of a  breakthrough asymptotic enumeration formula of Liebenau and Wormald (2017).
As a more direct approach, we also study a natural Markov chain  recently introduced by Rechner, Strowick and M\"uller-Hannemann (2018), based on three local operations---switches, hinge flips, and additions/deletions of an edge. We obtain the first theoretical results for this Markov chain, showing it is rapidly mixing for the case of near-regular degree intervals of size at most one.
\end{abstract}

\end{titlepage}

\newpage

\clearpage
\setcounter{tocdepth}{2}
\tableofcontents
\thispagestyle{empty}

\newpage

\setcounter{page}{1}

\newpage
\setcounter{page}{1}

	\section{Introduction}
\noindent The (approximate) uniform sampling and counting of graphs with given degrees has received a lot of attention during the last few decades, see, e.g.,
\cite{AK2019,Bayati2010,BenderC1978,Blitzstein2011,Bollobas1980,CarstensK18,Coolen2017,DyerGKRS21,Cooper2007,CooperDGH17,Cooper2012corrigendum,Jerrum1990,McKay1990,Kannan1999,Steger1999,Feder2006,Kim2006,Erdos2013,Erdos2015decomposition,ErdosMMS2018,Gao2017,Greenhill2017journal,GaoW18,Erdos2022stable,Liebenau2017,McKay1990conjecture}.
Given a degree sequence $\myd = (d_1,\dots,d_n)$, the goal of approximate uniform sampling  is to design a randomized algorithm that outputs a labelled simple undirected graph $G$ with degree sequence $\myd$, according to a distribution that is close to the uniform distribution over the set of all graphs with this  degree sequence. Such an algorithm is called an \emph{approximate (uniform) sampler}. Approximate samplers find applications in fields such as complex network analysis, where they serve as null models for hypothesis testing. Consider, e.g., a social network with edges representing friendships or relationships. One might see a very high number of edges between a certain group of nodes and, based on this, conjecture that these nodes form a \emph{community} of friends or colleagues. In order to test this hypothesis, one would like to be able to generate graphs with \emph{similar characteristics} as the observed network and, based on these generated samples, decide how likely it is that there is a high number of edges between that particular group of nodes by chance alone. Here the characteristic of interest is the degree sequence of the observed network \cite{Olding2014}. For determining how many samples are sufficient in order to test the hypothesis, we also need to be able to count the number of graphs with the given degree sequence.

In practice, it is not always possible to have exact knowledge of the degree sequence of an observed network, due to erroneous measurements. In order to overcome this, there is a need for more robust null models. One such model was proposed by Rechner, Strowick and M\"uller-Hannemann  \cite{Rechner2018}. Instead of a given degree sequence $\myd$, the null models now consist of all graphs with given \emph{degree interval} constraints $[\ell_i,u_i]$, for $i \in [n] = \{ 1,\dots,n \}$. In this case we say that a graph $G$ has degrees in the interval $[\bm{\ell},\bm{u}]$ with $\bm{\ell} = (\ell_1,\dots,\ell_n)$ and $\bm{u} = (u_1,\dots,u_n)$. The algorithmic task at hand then becomes to develop algorithms for sampling and counting graphs from the set $\mathcal{G}(\bm{\ell},\bm{u})$ of all graphs satisfying the interval constraints. An intuitive two-step approach for solving this problem is to first sample \emph{according to the correct proportional distribution} a degree sequence $\myd = (d_1,\dots,d_n)$ from the set of all degree sequences satisfying the interval constraints $\ell_i \leq d_i \leq u_i$, for $i \in [n]$, and then sample uniformly at random a graphical realization from the set $\mathcal{G}(\myd)$ of all graphs with degree sequence $\myd$. A crucial difficulty that arises here is that the probability with which each degree sequence $\myd$ needs to be sampled in the first step is not obvious. This probability should be proportional to the number $|\Gd|$ which is not known in general.

To make the problem more concrete, we give a brief example in the context of the social network application that we started out with. Suppose we have a partially observed network. For a given node $i$, we let $\ell_i$ be the number of observed edges adjacent to $i$, $\delta_i$ the number of missing observations and, thus, $u_i = n - 1 - (\ell_i + \delta_i)$ the number of observed non-edges (i.e., pairs $\{i,j\}$ for which we know there is no edge between nodes $i$ and $j$).
There are now two extreme cases: either all missing observations are non-edges, meaning that node $i$ has degree $\ell_i$, or all missing observations are indeed edges, meaning that node $i$ has degree $u_i$. Hence, we are interested in  sampling (and counting)  graphs for which each node $i$ has a degree in the interval $[\ell_i,u_i]$, for every $i \in [n]$. 
In this and other similar settings, these problems seem to be natural and elegant generalizations of the classic graph sampling and counting problems.

Towards sampling graphs with given degree intervals, Rechner et al.~\cite{Rechner2018} introduced a Markov chain based on three simple operations: \emph{switches}, \emph{hinge flips} and \emph{additions/deletions}. The chain in each step selects one of these operations uniformly at random and  performs it, if possible. We call this chain the \emph{degree interval Markov chain}. The operations are shown in
Figure \ref{fig:add_del} and a formal definition is given in Section \ref{sec:preliminaries} and Appendix \ref{app:interval_chain}.
These three operations are  the ones  described by Coolen et al.~\cite{Coolen2017} as the most commonly used operations in Markov Chain Monte Carlo algorithms for the generation of simple undirected graphs \emph{in practice}.  This serves as additional motivation for rigorously studying Markov chains based on these operations. We will also be interested in the \emph{switch-hinge flip Markov chain} that only uses the switch and hinge flip operations. The hinge flip and switch operations are of particular interest both  in theory and in practice as they preserve the number of edges and the degree sequence of a graph, respectively.

	\begin{figure}[t]
		\centering
		\begin{tikzpicture}[scale=0.3]
		\coordinate (A1) at (0,0);
		\coordinate (A2) at (0,2);
		\coordinate (M1) at (2.5,0);
		\coordinate (M2) at (2.5,2);

		\coordinate (P1) at (4,1);
		\coordinate (P2) at (5.5,1);

		\node at (A1) [circle,scale=0.7,fill=black] {};
		\node (a1) [below=0.1cm of A1]  {$v$};
		\node at (A2) [circle,scale=0.7,fill=black] {};
		\node (a2) [above=0.1cm of A2]  {$w$};
		\node at (M1) [circle,scale=0.7,fill=black] {};
		\node (m1) [below=0.1cm of M1]  {$x$};
		\node at (M2) [circle,scale=0.7,fill=black] {};
		\node (m2) [above=0.1cm of M2]  {$y$};

		\path[every node/.style={sloped,anchor=south,auto=false}]
		(A1) edge[-,very thick] node {} (A2)
		(M1) edge[-,very thick] node {} (M2)
		(P1) edge[<->,very thick] node {} (P2);
		\end{tikzpicture}
		\begin{tikzpicture}[scale=0.4]
		\coordinate (A1) at (0,0);
		\coordinate (A2) at (0,1.5);
		\coordinate (M1) at (2,0);
		\coordinate (M2) at (2,1.5);


		\node at (A1) [circle,scale=0.7,fill=black] {};
		\node (a1) [below=0.1cm of A1]  {$v$};
		\node at (A2) [circle,scale=0.7,fill=black] {};
		\node (a2) [above=0.1cm of A2]  {$w$};
		\node at (M1) [circle,scale=0.7,fill=black] {};
		\node (m1) [below=0.1cm of M1]  {$x$};
		\node at (M2) [circle,scale=0.7,fill=black] {};
		\node (m2) [above=0.1cm of M2]  {$y$};

		\path[every node/.style={sloped,anchor=south,auto=false}]
		(A1) edge[-,very thick] node {} (M2)
		(M1) edge[-,very thick] node {} (A2);
		\end{tikzpicture}
		\qquad\qquad\qquad
		\begin{tikzpicture}[scale=0.3]
		\coordinate (i) at (0,0);
		\coordinate (k) at (4,0);
		\coordinate (j) at (2,2);

		\coordinate (P1) at (5,1);
		\coordinate (P2) at (6.5,1);

		\node at (i) [circle,scale=0.7,fill=black] {};
		\node (i1) [below=0.1cm of i]  {$v$};
		\node at (j) [circle,scale=0.7,fill=black] {};
		\node (j1) [above=0.1cm of j]  {$w$};
		\node at (k) [circle,scale=0.7,fill=black] {};
		\node (k1) [below=0.1cm of k]  {$x$};

		\path[every node/.style={sloped,anchor=south,auto=false}]
		(i) edge[-,very thick] node {} (j)
		(P1) edge[<->,very thick] node {} (P2);
		\end{tikzpicture}
		\begin{tikzpicture}[scale=0.3]
		\coordinate (i) at (0,0);
		\coordinate (k) at (4,0);
		\coordinate (j) at (2,2);

		\node at (i) [circle,scale=0.7,fill=black] {};
		\node (i1) [below=0.1cm of i]  {$v$};
		\node at (j) [circle,scale=0.7,fill=black] {};
		\node (j1) [above=0.1cm of j]  {$w$};
		\node at (k) [circle,scale=0.7,fill=black] {};
		\node (k1) [below=0.1cm of k]  {$x$};

		\path[every node/.style={sloped,anchor=south,auto=false}]
		(j) edge[-,very thick] node {} (k);
		\end{tikzpicture}
		\qquad\qquad\qquad
		\begin{tikzpicture}[scale=0.3]
		\coordinate (i) at (0,0);
		\coordinate (j) at (0,2);

		\coordinate (P1) at (1.5,1);
		\coordinate (P2) at (3,1);

		\node at (i) [circle,scale=0.7,fill=black] {};
		\node (i1) [below=0.1cm of i]  {$v$};
		\node at (j) [circle,scale=0.7,fill=black] {};
		\node (j1) [above=0.1cm of j]  {$w$};

		\path[every node/.style={sloped,anchor=south,auto=false}]
		(i) edge[-,very thick] node {} (j)
		(P1) edge[<->,very thick] node {} (P2);
		\end{tikzpicture}
		\begin{tikzpicture}[scale=0.3]
		\coordinate (i) at (0,0);
		\coordinate (j) at (0,2);

		\coordinate (P1) at (2,1);
		\coordinate (P2) at (3,1);

		\node at (i) [circle,scale=0.7,fill=black] {};
		\node (i1) [below=0.1cm of i]  {$v$};
		\node at (j) [circle,scale=0.7,fill=black] {};
		\node (j1) [above=0.1cm of j]  {$w$};

		\end{tikzpicture}
		\caption{Left to right: switch on $v,w,x,y$; hinge flip on $v,w,x$; edge addition/deletion on $v,w$.}
		\label{fig:add_del}
	\end{figure}
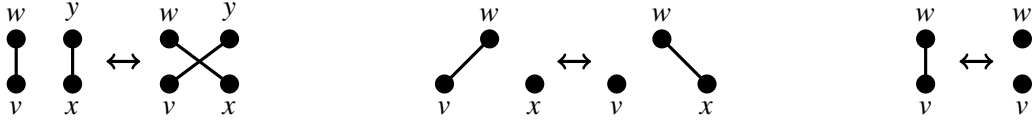

\subsection{Our contributions}
\noindent In this work, we give the first efficient approximate sampler and approximate counter for graphs with so-called \emph{near-regular} degree intervals. Near-regularity here refers to the fact that all graphs have degrees which are close to a common value up to a sublinear margin. To be more precise, we show that there is
a \emph{fully polynomial almost uniform sampler (FPAUS)} and a \emph{fully polynomial randomized approximation scheme (FPRAS)} (for formal definitions see Section \ref{sec:preliminaries}),
in case the degree intervals are close to a common value $r= r(n) \in \N$, i.e., if $[\ell_i,u_i] \subseteq [r - r^\alpha,r+r^\alpha]$ for some $0 < \alpha < \frac{1}{2}$. The parameter $\alpha > 0$ models the maximum length of the degree intervals that we allow; this length of $2 r^\alpha$ should be relatively small compared to $r$.\footnote{\ One should note that an assumption of this kind is to be expected. Otherwise, we would be also solving the problem of (approximately) uniformly sampling a graph with \emph{any} given degree sequence, which is a long-standing open problem.} We also need a minor  technical assumption on the value of $r$ in order to avoid some (arguably not very interesting) boundary cases. The main result of this work is Theorem \ref{thm:main_result} below.

For vectors $\bm{a} = (a_1,\dots,a_n), \bm{b} = (b_1,\dots,b_n) \in \R^n$, we write $\bm{a} \leq \bm{b}$ if $a_i \leq b_i$ for all $i \in [n]$.
Given $\bm{\ell}, \bm{u}\in \N^n$, by $\Gul$ we denote the set of all graphs $G$ whose degree sequence $\myd(G)$ satisfies $\bm{\ell} \leq \myd(G) \leq \bm{u}$.

\begin{theorem}\label{thm:main_result}
	Let $0 < \alpha < 1/2$ and $0 < \sigma < 1$ be fixed. Let $r = r(n)$ with $2 \leq r \leq (1-\sigma)n$. If for every node $i \in [n]$ it holds that $[\ell_i,u_i] \subseteq [r - r^\alpha,r+r^\alpha]$, then there is an FPAUS for the approximate uniform sampling of graphs from $\Gul$ and an FPRAS for approximating $|\Gul|$.
\end{theorem}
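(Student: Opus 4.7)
The plan is a two-stage reduction. To sample uniformly from $\Gul$, one first samples a graphical degree sequence $\myd$ with $\bm{\ell} \leq \myd \leq \bm{u}$ with probability proportional to $|\mathcal{G}(\myd)|$, and then samples uniformly from $\mathcal{G}(\myd)$. Because the window satisfies $[\ell_i,u_i] \subseteq [r - r^\alpha, r + r^\alpha]$ with $\alpha < 1/2$, every candidate $\myd$ is itself near-regular to an extent covered by the existing rapid-mixing analyses of the switch Markov chain on $\mathcal{G}(\myd)$, so the second stage is immediate from the literature. The bulk of the work therefore concentrates on the first stage, where we must sample from a distribution over an exponentially large set of sequences weighted by the unknown counts $|\mathcal{G}(\myd)|$.

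The strategy for the first stage is to combine the two ingredients highlighted in the abstract. First, the Liebenau--Wormald asymptotic formula replaces each unknown $|\mathcal{G}(\myd)|$ by an explicit closed-form proxy $\widetilde{N}(\myd)$ that is accurate to a $(1+o(1))$ multiplicative factor uniformly over $[\bm{\ell},\bm{u}]$. Next, writing $\myd = \bm{\ell} + \mathbf{x}$ with $0 \leq x_i \leq u_i - \ell_i$, we encode $\mathbf{x}$ as a base of a partition matroid built from $\bm{\ell}$ and $\bm{u}$ (with padding elements added so that all bases have a common cardinality) in such a way that weighting each base by the LW proxy induces the desired distribution on sequences. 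The Anari et al.\ framework then supplies an efficient sampler, provided the associated generating polynomial is strongly log-concave. A standard Metropolis filter or a rejection step, powered by the $(1+o(1))$ LW guarantee, converts a sample from the proxy distribution into a sample from the exact one within total variation distance $\varepsilon$.

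For counting, the FPRAS for $|\Gul|$ follows from the FPAUS by the classical self-reducibility argument: one peels off coordinates one at a time, shrinking the intervals to estimate each consecutive ratio $|\mathcal{G}(\bm{\ell},\bm{u})|/|\mathcal{G}(\bm{\ell}',\bm{u}')|$ empirically via the sampler, and telescopes the products. The principal obstacle I anticipate is the strong log-concavity check for the matroid-base distribution weighted by $\widetilde{N}(\myd)$. The LW formula consists of local factors in the individual $d_i$'s multiplied by a Gaussian-type global correction that couples all coordinates, and verifying that this recasts as a strongly log-concave generating polynomial in the sense of Anari et al.\ is the delicate step; this is where the assumptions $\alpha < 1/2$ and $r \leq (1-\rho)n$ must be leveraged in full, and where one must also confirm that every $\myd \in [\bm{\ell},\bm{u}]$ is graphical so that the sampler's support matches the matroid's set of bases.
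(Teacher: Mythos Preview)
Your plan identifies the right two ingredients---the Liebenau--Wormald formula and the Anari et al.\ strong-log-concavity framework---but the packaging differs from the paper's in a way that leaves two concrete gaps. The paper does \emph{not} build an explicit two-stage sampler. It instead proves that a single, count-free Markov chain, the switch--hinge flip chain on $\Gmul$ for a \emph{fixed} edge count $m$, is rapidly mixing, via the Martin--Randall decomposition: the restriction chains are switch chains on each $\Gd$ (handled by prior work), and the projection chain on $\Dm$ is compared to a load-exchange chain whose rapid mixing follows from SLC. Crucially, the SLC verification is carried out only for the $2m$-homogeneous polynomial $\sum_{\myd\in\Dm}\bar z(\myd)\,x^{\myd}$; the paper shows that in the near-regular window the global LW correction $\exp(-s(\myd)^2)$ satisfies $s(\myd)^2\to 0$ and can be discarded, leaving $\prod_i\binom{n-1}{d_i}$ times an $m$-dependent \emph{scalar}, for which SLC is immediate from Proposition~\ref{prop:sufficient}. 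The passage from $\Gmul$ to $\Gul$ is then handled by elementary counting reductions (Appendix~\ref{sec:counting}).

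Relative to this, your scheme has two holes. First, because you do not fix $m$, your padded partition-matroid polynomial now carries the LW prefactor $\big(\mu^{\mu}(1-\mu)^{1-\mu}\big)^{n(n-1)/2}$ as a genuine weight that varies across bases, not as a harmless scalar; the paper's SLC argument does not cover this case, and it is not at all clear that the resulting polynomial is SLC (the paper even notes that the full LW exponent fails to be $M$-convex). Second, the ``Metropolis filter or rejection step'' is not free: the LW guarantee is $(1+o(1))$ with an unspecified rate $\delta(n)$, so the proxy distribution alone does not yield total variation $\le\varepsilon$ for arbitrary $\varepsilon$, while an actual rejection or Metropolis correction requires access to the exact ratios $|\mathcal G(\myd')|/|\Gd|$ (or an FPRAS for them), which you never invoke. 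The paper sidesteps both issues because the switch--hinge flip chain has the \emph{exact} uniform distribution on $\Gmul$ as its stationary law by construction; LW and SLC enter only inside the mixing-time analysis, through Markov-chain comparison, and never in the algorithm itself.
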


For given degree intervals $[\bm{\ell},\bm{u}]$ and $m \in \N$, we write $\Gmul$ for the set of graphs $G$ whose degree sequence $\myd(G)$ satisfies $\bm{\ell} \leq \myd(G) \leq \bm{u}$ \emph{and} $\sum_i d_i = 2m$.
By using reductions between approximate sampling and approximate counting (see Appendix \ref{sec:counting}) we get that to prove Theorem \ref{thm:main_result} it suffices to show the existence of an FPAUS for sampling from $\Gmul$. To this end, we show that the switch-hinge flip Markov chain is rapidly mixing under the conditions of Theorem \ref{thm:main_result}.
This result is summarized in Theorem \ref{thm:main_switch_hinge}.

\begin{theorem}\label{thm:main_switch_hinge} 
	Let $\alpha$, $\sigma$, and $r$ be as in Theorem \ref{thm:main_result}. If $[\ell_i,u_i] \subseteq [r - r^\alpha,r+r^\alpha]$, for all  $i \in [n]$, and  $2m\in \big[\sum_i \ell_i, \sum_i u_i \big]$, then the switch-hinge flip Markov chain is rapidly mixing on $\Gmul$.
\end{theorem}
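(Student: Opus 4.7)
The plan is a Markov chain decomposition argument in the style of Martin--Randall. We partition $\Gmul = \bigsqcup_{\myd \in \Dm} \Gd$, where $\Dm$ denotes the set of graphical degree sequences $\myd$ with $\bm{\ell} \leq \myd \leq \bm{u}$ and $\sum_i d_i = 2m$. Within each stratum $\Gd$, switches keep the chain inside the stratum while hinge flips cross between strata; edge additions/deletions do not appear since they change $m$ and the switch-hinge flip chain excludes them. This yields a natural projection chain $\hat{P}$ on $\Dm$ and restriction chains on each $\Gd$, and it suffices to show both have spectral gap polynomial in $n$ in order to conclude rapid mixing via the decomposition theorem.

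For the restriction chains, the switch-hinge flip chain restricted to $\Gd$ is essentially the switch chain (up to self-loops coming from hinge flips that would leave $\Gd$, which introduce at most polynomial slowdown). Because every $\myd \in \Dm$ lies in $[r - r^\alpha, r + r^\alpha]^n$ with $\alpha < 1/2$ and $2 \le r \le (1-\rho)n$, these sequences are sufficiently near-regular that known rapid-mixing results for the switch chain (e.g., Cooper--Dyer--Greenhill, Greenhill, or Amanatidis--Kleer) apply, giving a polynomial spectral gap uniformly over $\myd \in \Dm$.

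For the projection chain, a transition $\myd \to \myd' = \myd - \bm{e}_w + \bm{e}_x$ is induced by a hinge flip that shifts one unit of degree from $w$ to $x$. The stationary weight satisfies $\hat{\pi}(\myd) \propto |\Gd|$, and $\hat{P}(\myd, \myd')$ equals the expected fraction of hinge flips out of a uniformly random graph in $\Gd$ that yield a graph in $\Gd'$. This is where I expect the main obstacle: controlling these transition probabilities in both directions. The plan is to invoke the Liebenau--Wormald asymptotic formula for the number of graphs with a near-regular degree sequence, which controls the ratio $|\Gd|/|\Gd'|$ by a polynomial factor for neighbouring $\myd, \myd' \in \Dm$ (this pins down $\hat{P}(\myd',\myd)$ in terms of $\hat{P}(\myd,\myd')$ through detailed balance), combined with a direct combinatorial estimate of the number of valid hinge flips realizing each single-coordinate swap, again using near-regularity. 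With these ingredients, $\hat{P}$ is comparable in the Diaconis--Saloff--Coste sense to a symmetric lazy random walk on the integer lattice polytope $\{\myd \in \Z^n : \bm{\ell} \le \myd \le \bm{u}, \sum_i d_i = 2m\}$ (restricted to graphical sequences), a walk that mixes rapidly by a standard coupling argument because the polytope has diameter $O(n r^\alpha)$ under single-coordinate swap moves. A small auxiliary step is to check that $\Dm$ remains connected under these moves, which follows from Erd\H{o}s--Gallai being easily satisfied in the near-regular regime. Plugging the two polynomial spectral gap bounds into Martin--Randall then yields rapid mixing of the switch-hinge flip chain on $\Gmul$.
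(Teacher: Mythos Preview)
Your high-level plan matches the paper exactly: Martin--Randall decomposition of $\Gmul$ into the strata $\Gd$, with the restriction chains being (lazy) switch chains handled by \cite{AK2019}, and the work concentrated in the projection chain on $\Dm$ with stationary weights $\hat\pi(\myd)\propto|\Gd|$. The bounds on $\beta_m,\gamma_m$ and the connectivity of $\Dm$ also go through as you say.

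The gap is in your treatment of the projection chain. You propose to compare $\hat P$ in the Diaconis--Saloff--Coste sense to the \emph{symmetric} lazy walk on the box $\{\myd:\bm\ell\le\myd\le\bm u,\ \sum_i d_i=2m\}$. Neighbour ratios $|\Gd|/|\mathcal G(\myd')|$ are indeed polynomially bounded, so the transition probabilities of $\hat P$ are polynomially comparable to those of the symmetric walk. But the two chains do not have the same stationary distribution, and any comparison theorem across different stationary measures brings in a factor $\max_{\myd}\hat\pi(\myd)/\min_{\myd}\hat\pi(\myd)$. That global ratio is \emph{not} polynomial: in the regime $r=\Theta(n)$, the Liebenau--Wormald approximation gives $|\Gd|\sim C_{n,m}\prod_i\binom{n-1}{d_i}$, and comparing the regular sequence $(r,\dots,r)$ with a sequence having half its coordinates at $r+r^\alpha$ and half at $r-r^\alpha$ yields a ratio of order $\exp(\Theta(n^{2\alpha}))$. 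So the comparison to a uniform-target walk loses a super-polynomial factor and gives no useful bound on $\gap(\hat P)$. A direct coupling for the weighted chain runs into the same obstruction: the weights genuinely vary on a stretched-exponential scale across $\Dm$, and bounded diameter is not enough.

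What the paper does instead is precisely to exploit the \emph{shape} of $|\Gd|$ rather than just neighbour ratios. After the simplification of Lemma~\ref{lem:lieb_simple} (which kills the $e^{-s(\myd)^2}$ factor in this regime), the Liebenau--Wormald weights become, up to a constant, $\prod_i\binom{n-1}{d_i}$; this is exactly of the form in Proposition~\ref{prop:sufficient}, so the generating polynomial is strongly log-concave (Theorem~\ref{thm:approx_slc}). One then compares $\hat P$ not to a symmetric walk but to the load-exchange chain with the \emph{same} stationary distribution (Lemma~\ref{lem:compare_load_mh}), and invokes Anari et al.\ (Corollary~\ref{cor:polymatroid}) to bound its modified log-Sobolev constant. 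In short, the missing idea in your plan is the SLC structure of the approximate weights and the use of \cite{SLC2}; a symmetric-walk comparison cannot substitute for it.
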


A more direct approach for sampling from $\Gul$ than the one behind Theorem \ref{thm:main_result} would be to use the degree interval Markov chain. An interesting open question is whether this chain is rapidly mixing under the assumptions in Theorem \ref{thm:main_result} (or under weaker assumptions). As a first step into this direction, we show rapid mixing when all the degree intervals have size at most one, i.e., when $u_i - 1 \leq \ell_i \leq u_i$.
	\begin{theorem}\label{thm:main_chain}
		Let $\alpha$, $\sigma$, and $r$ be as in Theorem \ref{thm:main_result}. If $[\ell_i,u_i] \subseteq [r - r^\alpha,r+r^\alpha]$ and  $u_i - 1 \leq \ell_i \leq u_i$, for all  $i \in [n]$, then the degree interval Markov chain is rapidly mixing on $\Gul$.
	\end{theorem}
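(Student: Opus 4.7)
The plan is to decompose $\Gul$ by edge count and reduce rapid mixing to Theorem~\ref{thm:main_switch_hinge} via a standard Markov chain decomposition argument. Write $\Gul = \bigsqcup_{m \in M} \Gmul$ where $M = \{m : \Gmul \neq \emptyset\}$. Because $u_i - \ell_i \in \{0,1\}$ for every $i$, we have $\sum_i u_i - \sum_i \ell_i \leq n$, so $|M| \leq n/2 + 1$. I would then invoke the Madras--Randall decomposition theorem, which bounds the spectral gap of the full chain below by (roughly) the product of (i) the minimum spectral gap of the chains restricted to each $\Gmul$, and (ii) the spectral gap of the projected chain $\bar P$ on $M$ with stationary mass $\pi_{\bar P}(m) = |\Gmul|/|\Gul|$. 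The restricted chain on $\Gmul$ differs from the switch--hinge flip chain only by self-loops coming from rejected addition/deletion attempts, so its spectral gap is within a constant factor of the one guaranteed by Theorem~\ref{thm:main_switch_hinge}.

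It then remains to show $\bar P$ mixes in polynomial time. Since $\bar P$ is a birth--death chain on the $O(n)$ values in $M$, a standard conductance argument for birth--death chains reduces the task to verifying (a)~$|\Gmul|/|\mathcal{G}_{m+1}(\bm{\ell},\bm{u})|$ is bounded by $\mathrm{poly}(n)$ for consecutive $m, m+1 \in M$, and (b)~the projected transition
$$
\bar P(m,m+1) \;=\; \frac{1}{|\Gmul|} \sum_{G \in \Gmul} P\!\left(G,\,\mathcal{G}_{m+1}(\bm{\ell},\bm{u})\right)
$$
is at least $1/\mathrm{poly}(n)$. For (a), I would appeal to the Liebenau--Wormald asymptotic formula, which already underlies Theorem~\ref{thm:main_switch_hinge}: writing $|\Gmul|$ as a sum of $|\mathcal{G}(\myd)|$ over near-regular $\myd$ compatible with $[\bm{\ell},\bm{u}]$ and $\sum_i d_i = 2m$, each such term changes by a polynomially bounded factor when one moves from the $m$-slice to the $(m+1)$-slice by incrementing two coordinates of $\myd$ that are strictly below their upper bounds, with the near-regularity hypothesis $[\ell_i,u_i]\subseteq [r-r^\alpha, r+r^\alpha]$ keeping this factor uniform across admissible $\myd$. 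For (b), a typical $G \in \Gmul$ with $m < \max M$ admits $\Omega(n^2)$ non-edges $(v,w)$ satisfying $d_G(v) < u_v$ and $d_G(w) < u_w$: by definition of $M$, at least two vertices must lie strictly below their upper bound in any such graph (up to a small set of ``extremal'' graphs whose stationary mass can be controlled via Liebenau--Wormald), and because $r \leq (1-\rho)n$ most pairs of such vertices are non-adjacent.

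The principal obstacle is the comparison in (a): although Liebenau--Wormald is explicit, obtaining uniform polynomial ratios between $|\Gmul|$ and $|\mathcal{G}_{m+1}(\bm{\ell},\bm{u})|$ requires carefully tracking the error terms simultaneously across all admissible $\myd$, and the hypothesis $\alpha < 1/2$ is what keeps those error terms uniformly tame. A secondary subtlety is making (b) genuinely uniform in $G$: near-extremal graphs whose vertices are mostly at their upper bound could in principle admit few valid addition moves, and dealing with them cleanly likely requires isolating such graphs and bounding their total weight in $\Gmul$ via the same asymptotic formula.
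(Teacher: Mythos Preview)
Your decomposition by edge count and the treatment of the restricted chains via Theorem~\ref{thm:main_switch_hinge} match the paper exactly. The gap is in the analysis of the projected birth--death chain $\bar P$ on $M$.

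The claim that rapid mixing of $\bar P$ reduces to (a) polynomial ratios $|\Gmul|/|\mathcal{G}_{m+1}(\bm{\ell},\bm{u})|$ and (b) polynomial lower bounds on $\bar P(m,m+1)$ is false. On a path of length $\Theta(n)$, conditions (a) and (b) permit a stationary distribution like $\pi(k)\propto n^{|k-n/2|}$, for which the conductance at the midpoint is $n^{-\Theta(n)}$. The canonical-path bound for a birth--death chain has the form $\max_z \pi(z)^{-1}\sum_{i\le z<j}\pi(i)\pi(j)$ (up to polynomial factors), and controlling this sum is exactly where the paper does real work: it proves (Theorem~\ref{thm:log-concavity}) that the sequence $w_m=|\Gmul|$ is \emph{log-concave}, via a Jerrum--Sinclair style injection $\mathcal{G}_{m+1}\times\mathcal{G}_{m-1}\hookrightarrow \mathcal{G}_m\times\mathcal{G}_m$ based on a canonical decomposition of symmetric differences into alternating paths. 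This is precisely where the hypothesis $u_i-\ell_i\le 1$ is used---it guarantees that the canonical decomposition is preserved under the path-flipping map---and your proposal never invokes that hypothesis in any essential way (the bound $|M|\le n/2+1$ is cosmetic). Log-concavity then yields $w_iw_j\le w_z w_p$ for some $p$ whenever $i\le z<j$, which collapses the congestion sum and gives Theorem~\ref{thm:mixing_concave}.

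Two smaller points. First, for (a) the paper does not go through Liebenau--Wormald at all; it uses strong stability (Proposition~\ref{prop:strong} and Lemma~\ref{cor:short_alt_cycle}) to get short alternating cycles, which directly yields polynomial ratios between adjacent slices---this is both simpler and avoids the error-tracking you flag as the ``principal obstacle''. Second, your concern about near-extremal graphs in (b) is handled the same way: rather than isolating bad graphs and bounding their mass, Lemma~\ref{cor:short_alt_cycle} shows every $G\in\Gmul$ is within bounded symmetric difference of a graph admitting the desired addition, so the boundary $\partial_m(\mathcal{G}_{m+1})$ is a polynomial fraction of $\Gmul$ uniformly (Lemma~\ref{lem:partial1}).
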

	
	The technical novelty of our work lies in the highly nontrivial combination of state-of-the-art tools from Markov chain theory and combinatorics. An overview of our proof approach is given in Section \ref{sec:main_result}. It relies on Markov chain decomposition and comparison techniques of Martin and Randall \cite{Martin2006}, rapid mixing results for the switch Markov chain by Amanatidis and Kleer \cite{AK2019}, the breakthrough work of Anari et al.~\cite{SLC2} on strongly log-concave probability distributions, and the work of Liebenau and Wormald \cite{Liebenau2017} regarding asymptotic enumeration formulas for the number of near-regular graphs.
	
	\begin{remark} 
		Our theorems---and all the building blocks used in their proofs---are shown to be true for all $n \geq n_0$, where $n_0 \in \N$ is a constant that depends on the other constant parameters involved. It is straightforward that for $n < n_0$ our results are always true.
	\end{remark}

\subsection{Related work}\label{sec:related}
\noindent	For the problem of sampling graphs with a given degree sequence,
	there is an extensive literature, particularly on Markov Chain Monte Carlo (MCMC) methods.
	Jerrum and Sinclair \cite{Jerrum1990} provide an approximate uniform sampler and an approximate counter for  \emph{$P$-stable} degree sequences, for which the  number of graphical realizations of a given degree sequence does not vary too much under small perturbations of the sequence. A first step beyond $P$-stability was recently made by Erd\H{o}s et al. \cite{Erdos2021}.
	Jerrum, Sinclair and Vigoda \cite{Jerrum2004} provide an approximate sampler (and counter) for arbitrary  bipartite degree sequences by reducing the problem to sampling perfect matchings in an appropriate graph representation of the given instance. 
	The work of B\'ezakova, Bhatnagar and Vigoda \cite{Bezakova2007} provides a more direct approach. There are also various non-MCMC methods available in the literature, see, e.g., \cite{Bayati2010,Gao2017,GaoW18,Steger1999,Kim2006,McKay1990}.
	One MCMC approach that has received considerable attention is the \emph{switch Markov chain}, based on the switch operation in Figure \ref{fig:add_del}. This is a simpler, more direct approach than reducing the problem to sampling a perfect matching from a large auxiliary graph. The chain was first analyzed by Kannan, Tetali and Vempala \cite{Kannan1999}, and has been extensively studied,  see, e.g., \cite{Cooper2007,Erdos2013,AK2019,Erdos2022stable}. The state of the art on its mixing time is the work of Erd\H{o}s et al.~\cite{Erdos2022stable}, who show that the chain is rapidly mixing for all $P$-stable degree sequences. 
	
	Rechner et al.~\cite{Rechner2018} introduce the degree interval Markov chain for the \emph{bipartite} version of the problem of sampling graphs with given degree intervals and show its irreducibility  for arbitrary degree intervals.
	Very recently, Erd{\H{o}}s, Mezei and Mikl{\'{o}}s \cite{ErdosMM22} generalized our Theorem \ref{thm:main_chain} to intervals of length $1$ centered around $P$-stable degree sequences. We consider the fact that their meticulous direct approach does not go beyond length $1$ as another indication of the difficulty of directly arguing about the degree interval Markov chain.
	
	The decomposition theorem of Martin and Randall \cite{Martin2006} we use (Theorem \ref{thm:decomposition}), based on the decomposition method of Madras and Randall \cite{Madras2002}, also appeared in an unpublished manuscript by Caracciolo, Pelissetto and Sokal \cite{Caracciolo}.
	Erd\H{o}s et al.~\cite{Erdos2015decomposition} use a related decomposition approach for sampling \emph{balanced joint degree matrix} realizations.

	The result of Liebenau and Wormald \cite{Liebenau2017} builds on a long line of work on asymptotic expressions for the number of graphs with given degrees. Indicatively, Bender and Canfield \cite{BenderC1978} gave a formula for bounded degree sequences and Bollob\'as \cite{Bollobas1980} for $r$-regular sequences with $r = O(\sqrt{\log(n)})$. McKay and Wormald gave expressions both for sparse sequences with maximum degree $o(n^{1/2})$ \cite{McKay1990} and for a certain dense regime \cite{McKay1990conjecture}.
	
	Anari et al. \cite{SLC2}, in a breakthrough recent work,  gave the first polynomial time algorithm for approximate sampling a base of a matroid under a strongly log-concave probability distribution.
	The theory of strongly log-concave (or Lorentzian) polynomials dates back to the work of Gurvits \cite{Gurvits2009}, and was further developed by Anari, Oveis Gharan and Vinzant \cite{AnariGV18} and Br\"and\'en and Huh \cite{BH2019}.
	In another recent work, Kleer \cite{Kleer2021Gibbs} made a connection between asymptotic enumeration formulas and strongly log-concave polynomials for a case of sparse bipartite graphs where only the degrees on one side of the bipartition can vary.
	
\subsection{Outline}
\noindent	In Section \ref{sec:preliminaries} we provide all the necessary preliminaries.  We then continue with a proof overview for Theorems \ref{thm:main_switch_hinge} and \ref{thm:main_chain} in Section \ref{sec:main_result}. For readers with some familiarity regarding Markov chains and some intuition about degree sequence problems, it should be possible to go through (most of) Section \ref{sec:main_result} without delving into (the admittedly long) Section \ref{sec:preliminaries} first. As one of the main building blocks of the proof of Theorem \ref{thm:main_switch_hinge}, we show in Section \ref{sec:slc_lw} that the asymptotic formula of Liebenau and Wormald \cite{Liebenau2017}, when restricted to the degree interval regime of Theorem \ref{thm:main_result}, approximately gives rise to a so-called strongly log-concave polynomial, a result which might be of independent interest.
	Section \ref{sec:decomposition} contains all of the remaining arguments about the Markov chains used in our proofs.

\section{Preliminaries}
\label{sec:preliminaries}
\noindent We need a variety of preliminaries for this work, that are collected in this section (except for the details on the modified log-Sobolev constant, which are deferred to Appendix \ref{sec:mlsc}.)

\subsection{$M$-convexity and strongly log-concave polynomials}
\label{sec:matroid}
\noindent We start with the notion of $M$-convexity for functions \cite{Murota1998,Murota2009}.
Let $\nu : \Z_{ \geq 0}^n \rightarrow \R \cup \{\infty\}$ be a function. The \emph{effective domain} of $\nu$ is given by
$
\text{dom}(\nu) = \{{\bm{\alpha}} \in \N^n : v({\bm{\alpha}}) <  \infty\}.
$
The function $\nu$ is called \emph{$M^\sharp$-convex} if it satisfies the \emph{(symmetric) exchange property}: For any ${\bm{\alpha}}, {\bm{\beta}} \in \text{dom}(\nu)$ and any $i \in [n]$ satisfying $\alpha_i > \beta_i$, there exists a $j \in [n]$ such that $\alpha_j < \beta_j$ and 
\[
\nu({\bm{\alpha}}) + \nu({\bm{\beta}}) \geq \nu({\bm{\alpha}} - \bm{e}_i + \bm{e}_j) + \nu({\bm{\beta}} + \bm{e}_i - \bm{e}_j) \,,
\]
where $\bm{e}_k$ is defined as $\bm{e}_k(\ell) = 1$ if $k = \ell$ and $\bm{e}_k(\ell) = 0$ otherwise.
The function $\nu$ is called \emph{$M$-convex} if it is $M^{\sharp}$-convex \emph{and}  there is an $d \in \N$ such that $\text{dom}(\nu) \subseteq \{{\bm{\alpha}}: \sum_i \alpha_i = d\}$.
A subset $C \subseteq  \Z_{ \geq 0}^n$ is called $M$-convex if the indicator function $\nu_C :  \Z_{ \geq 0}^n \rightarrow \R \cup \{\infty\}$, given by $\nu_C({\bm{\alpha}}) = 1$ if ${\bm{\alpha}} \in C$ and $\nu_C({\bm{\alpha}}) = 0$ otherwise, is $M$-convex.  

We write $\R[x_1,\dots,x_n]$ to denote the set of all polynomials in $x_1,\dots,x_n$ with real coefficients.
We consider polynomials $p \in \R[x_1,\dots,x_n]$ with non-negative coefficients. For a vector ${\bm{\beta}} = (\beta_1,\dots,\beta_n) \in  \Z_{ \geq 0}^n$, we write
$
\partial^{\bm{\beta}} = \prod_{i=1}^n \partial_{x_i}^{\beta_i}
$
to denote the partial differential operator that differentiates a function $\beta_i$ times with respect to $x_i$ for $i = 1,\dots, n$. 
For ${\bm{\alpha}} \in  \Z_{ \geq 0}^n$, we write $x^{\bm{\alpha}}$ to denote $\prod_{i =1}^n x_i^{\alpha_i}$. Furthermore, we write ${\bm{\alpha}}! = \prod_i \alpha_i!$, and for ${\bm{\alpha}}, {\bm{\kappa}} \in  \Z_{ \geq 0}^n$ with $\alpha_i \leq \kappa_i$ for all $i$, we write
$
\binom{{\bm{\kappa}}}{{\bm{\alpha}}} = \prod_{i = 1}^n \binom{\kappa_i}{\alpha_i}.
$
For a constant $c \in \N$ with $c \geq \max_i \alpha_i$, we write $\binom{c}{{\bm{\alpha}}} = \prod_{i = 1}^n \binom{c}{\alpha_i}$.
Let ${\bm{\kappa}} \in  \Z_{ \geq 0}^n$ and the Cartesian product  $K = \times_i \{0,\dots,\kappa_i\}$. Let $w : K \rightarrow \R_{\geq 0}$ be a weight function. The \emph{generating polynomial} of $w$ is
$g_{{\bm{\kappa}}}(x) = \sum_{{\bm{\alpha}} \in K} w({\bm{\alpha}})x^{{\bm{\alpha}}}$.
The \emph{support} of $g_{{\bm{\kappa}}}$ is the set $\text{supp}(g_{{\bm{\kappa}}}) = \allowbreak \{{\bm{\alpha}} \in K : w({\bm{\alpha}}) > 0\}$. 
The  polynomial $g_{{\bm{\kappa}}}$ is called $d$-homogeneous if $|{\bm{\alpha}}| = \sum_i \alpha_i = d$ for all 
$\bm{\alpha} \in \text{supp}(g_{{\bm{\kappa}}}).$

\begin{definition}[Strong log-concavity \cite{Gurvits2009}]
	A polynomial $p \in \R[x_1,\dots,x_n]$  with non-negative coefficients is called \emph{log-concave} on a subset $S \subseteq \R^n_{\geq 0}$ if its  Hessian $\nabla^2 \log(p)$  is negative semidefinite on $S$. A polynomial $p$ is called \emph{strongly log-concave (SLC)} on $S$ if for any ${\bm{\beta}} \in \N^n$, we have that $\partial^{\bm{\beta}} p$ is log-concave.
\end{definition}
\noindent For convenience, the zero polynomial is defined to be SLC always. 
Finally, if the generating polynomial $g_{{\bm{\kappa}}}$ is SLC, then the probability distribution $\pi({\bm{\alpha}}) \propto w({\bm{\alpha}})$ is called SLC as well.
We next state some properties of SLC polynomials that will be used in this work. 

\begin{proposition}[\!\cite{BH2019}]\label{prop:scalar}
	If $p \in \R[x_1,\dots,x_n]$ is SLC and $\gamma \in \R_{\geq 0}$, then $\gamma  p$ is SLC.
\end{proposition}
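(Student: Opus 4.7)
The plan is to unwind the definition of strong log-concavity and observe that multiplication by a non-negative scalar interacts very cleanly with both partial differentiation and taking the logarithm. First I would dispose of the degenerate case: if $\gamma = 0$, then $\gamma p$ is the zero polynomial, which is SLC by convention. So I may assume $\gamma > 0$ from now on.

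Next I would fix an arbitrary $\bm{\beta} \in \N^n$ and verify that $\partial^{\bm{\beta}}(\gamma p)$ is log-concave on the relevant subset $S \subseteq \R^n_{\geq 0}$. Since partial differentiation is linear, $\partial^{\bm{\beta}}(\gamma p) = \gamma\, \partial^{\bm{\beta}} p$. On the interior where $\partial^{\bm{\beta}} p > 0$, we have $\log(\gamma\, \partial^{\bm{\beta}} p) = \log \gamma + \log(\partial^{\bm{\beta}} p)$, and the additive constant $\log \gamma$ vanishes under two differentiations, so $\nabla^2 \log(\gamma\, \partial^{\bm{\beta}} p) = \nabla^2 \log(\partial^{\bm{\beta}} p)$. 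The right-hand side is negative semidefinite by the assumption that $p$ is SLC (applied to $\bm{\beta}$), hence the left-hand side is too. On points where $\partial^{\bm{\beta}} p$ vanishes, $\gamma\, \partial^{\bm{\beta}} p$ vanishes as well, so log-concavity is inherited directly. Varying $\bm{\beta}$ gives the conclusion.

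There is no real obstacle here: the only thing worth being careful about is the boundary/zero locus, which is handled by the convention that the zero polynomial counts as SLC and by noting that scaling by a positive constant does not change the zero set. The proof is therefore essentially a one-line computation on the Hessian, packaged with the two case distinctions above.
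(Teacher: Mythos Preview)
Your argument is correct: the case split on $\gamma = 0$ versus $\gamma > 0$, linearity of $\partial^{\bm{\beta}}$, and the Hessian computation $\nabla^2 \log(\gamma q) = \nabla^2 \log q$ for $\gamma > 0$ are exactly what is needed, and your handling of the zero locus is fine since for $\gamma > 0$ the set on which $\partial^{\bm{\beta}} p$ is positive coincides with that of $\gamma\,\partial^{\bm{\beta}} p$.

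As for comparison: the paper does not actually supply a proof of this proposition. It is stated as a citation to Br\"and\'en and Huh \cite{BH2019} and used as a black box (specifically, in the proof of Theorem~\ref{thm:approx_slc} to strip off the constant prefactors $\sqrt{2}$ and $\left(\mu^{\mu}(1-\mu)^{1-\mu}\right)^{n(n-1)/2}$). Your direct verification from the definition is the natural way to see it and is entirely self-contained; there is no meaningful difference in approach to discuss beyond the fact that you wrote one out and the paper did not.
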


\begin{proposition}[Following from \cite{BH2019}]\!\!\!\footnote{\ Our $g_{{\bm{\kappa}}}$ is a slight variant the corresponding function of Theorem 3.14 of \cite{BH2019} with $q = 1/e$. The statement for this  $g_{{\bm{\kappa}}}$ follows 
		from a simple transformation of $f_{{\bm{\kappa}}}$ that preserves strong log-concavity, namely the operator that maps $x^{\bm{\alpha}}$ to ${\bm{\alpha}}! \binom{\gamma}{{\bm{\alpha}}}x^{\bm{\alpha}}$ \cite{Branden_communication}.}
	\label{prop:sufficient}
	Let $\nu : \Z_{\geq 0}^n \rightarrow \R \cup \{\infty\}$ with $\text{dom}(\nu) \subseteq \{0,1,\dots,n-1\}^n$ and let
	\begin{equation}\label{eq:standard_lorentzian}
		f_{\bm{\kappa}}(x) = \sum_{{\bm{\alpha}} \in \text{dom}(\nu)} \frac{1}{{\bm{\alpha}}!} e^{-\nu({\bm{\alpha}})}x^{{\bm{\alpha}}} \text{\ \ \ and \ \ \ }  g_{\bm{\kappa}}(x) = \sum_{{\bm{\alpha}} \in \text{dom}(\nu)} \binom{\gamma}{{\bm{\alpha}}} e^{-\nu({\bm{\alpha}})}x^{{\bm{\alpha}}}\,,
	\end{equation}
	be $2m$-homogeneous polynomials, where $\gamma = (n-1,\dots,n-1)$. If $\nu$ is $M$-convex, then $f_{\bm{\kappa}}$ and $g_{\bm{\kappa}}$ are SLC.
\end{proposition}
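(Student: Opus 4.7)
My plan is first to deduce the SLC of $f_{\bm{\kappa}}$ directly from Theorem 3.14 of \cite{BH2019} applied with parameter $q = 1/e$, and then to transfer strong log-concavity to $g_{\bm{\kappa}}$ by showing that the linear operator $T : \R[x_1,\dots,x_n] \to \R[x_1,\dots,x_n]$ defined by $T(x^{\bm{\alpha}}) := {\bm{\alpha}}!\binom{\gamma}{{\bm{\alpha}}}x^{\bm{\alpha}}$ (and extended linearly) preserves SLC. Comparing coefficients term by term shows that $T(f_{\bm{\kappa}}) = g_{\bm{\kappa}}$, so the conclusion for $g_{\bm{\kappa}}$ follows once such an SLC-preservation is established. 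To make the latter concrete, I would use an auxiliary multi-affine polynomial in the spirit of a polarization.

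Introduce fresh variables $y = (y_{i,j})_{i \in [n],\, j \in [n-1]}$ and set
$$q(y) \;=\; \sum_{{\bm{\alpha}} \in \text{dom}(\nu)} e^{-\nu({\bm{\alpha}})} \prod_{i=1}^n e_{\alpha_i}(y_{i,1},\ldots,y_{i,n-1}),$$
where $e_k$ denotes the $k$-th elementary symmetric polynomial. The identity $e_{\alpha_i}(x_i,\ldots,x_i) = \binom{n-1}{\alpha_i}x_i^{\alpha_i}$ gives $q(y)\bigl|_{y_{i,j}=x_i} = g_{\bm{\kappa}}(x)$. Since identifying a collection of non-negative variables is well known to preserve strong log-concavity (a standard closure property of Lorentzian polynomials, see \cite{BH2019}), it suffices to show that $q$ is SLC.

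The polynomial $q$ is $2m$-homogeneous and multi-affine in the $y_{i,j}$, with coefficient of $y^S$ (for $S \subseteq [n]\times[n-1]$) equal to $e^{-\tilde{\nu}(S)}$, where $\tilde{\nu}(S) := \nu(|S_1|,\ldots,|S_n|)$ and $S_i := \{j : (i,j) \in S\}$. To invoke Theorem 3.14 of \cite{BH2019} on $q$ I must verify the $M$-convexity of $\tilde{\nu}$ on $\{0,1\}^{n(n-1)}$. Given $S, T$ in the support of $\tilde{\nu}$ and an element $(i,k) \in S \setminus T$, I would seek some $(i',k') \in T \setminus S$ for which the exchange inequality holds. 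If $T_i \setminus S_i \neq \emptyset$, take $i' = i$ and any $k' \in T_i \setminus S_i$: both resulting configurations preserve the cardinality vectors $(|S_j|)_j$ and $(|T_j|)_j$, so the inequality holds with equality. Otherwise $|S_i| > |T_i|$, and the $M$-convexity of $\nu$ applied to $\bm{\alpha} = (|S_j|)_j$, $\bm{\beta} = (|T_j|)_j$, and coordinate $i$ produces some $j \neq i$ with $|S_j| < |T_j|$ along which $\nu(\bm{\alpha}) + \nu(\bm{\beta}) \geq \nu(\bm{\alpha} - \bm{e}_i + \bm{e}_j) + \nu(\bm{\beta} + \bm{e}_i - \bm{e}_j)$. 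Any $(j,k') \in T \setminus S$ then gives the desired $(i',k')$.

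The main obstacle is precisely this combinatorial verification, which lifts the $M$-convexity of $\nu$ on $\text{dom}(\nu) \subseteq \N^n$ to that of $\tilde{\nu}$ on the Boolean cube $\{0,1\}^{n(n-1)}$. Once it is in place, Theorem 3.14 of \cite{BH2019} yields that $q$ is SLC, and the variable-identification step delivers the SLC of $g_{\bm{\kappa}}$, completing the proof.
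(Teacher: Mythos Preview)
Your proposal is correct. For $f_{\bm{\kappa}}$ you proceed exactly as the paper indicates, namely a direct invocation of Theorem~3.14 of \cite{BH2019} with $q=1/e$.

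For $g_{\bm{\kappa}}$ your route differs from the paper's. The paper simply records, as a fact communicated by Br\"and\'en, that the diagonal operator $T:x^{\bm{\alpha}}\mapsto \bm{\alpha}!\binom{\gamma}{\bm{\alpha}}x^{\bm{\alpha}}$ preserves strong log-concavity, and hence carries the SLC polynomial $f_{\bm{\kappa}}$ to an SLC $g_{\bm{\kappa}}$. You instead give a self-contained polarization argument. In fact your auxiliary polynomial $q$ is precisely the polarization $\Pi_{\gamma}g_{\bm{\kappa}}$ in the sense of \cite{BH2019}: the binomial coefficients $\binom{\gamma}{\bm{\alpha}}$ cancel against the normalizing factors in the polarization, leaving exactly $\sum_{\bm{\alpha}} e^{-\nu(\bm{\alpha})}\prod_i e_{\alpha_i}(y_{i,\cdot})$. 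So your argument amounts to (i) lifting the $M$-convexity of $\nu$ to that of $\tilde{\nu}$ on the Boolean cube, (ii) applying the multi-affine case of Theorem~3.14 to obtain that $q$ is SLC, and (iii) diagonalizing back. Your case analysis for the exchange property of $\tilde{\nu}$ is clean and correct. Note that this argument never actually uses the SLC of $f_{\bm{\kappa}}$; your opening framing in terms of the operator $T$ is therefore incidental to what you really do. Both approaches rest on the same Br\"and\'en--Huh machinery; yours makes the polarization step explicit and verifies the $M$-convex lifting by hand, while the paper packages all of this inside a cited SLC-preserving operator.
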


\subsection{Markov chains and mixing times}
\label{sec:prelim_markov}
\noindent Let $\mathcal{M} = (\Omega,P)$ be an ergodic, time-reversible Markov chain with state space $\Omega$, transition matrix $P$, and stationary distribution $\pi$. We write $P^t(x,\cdot)$ for the distribution over $\Omega$ at time step $t$ with initial state $x \in \Omega$.
The \emph{total variation distance} of this distribution from stationarity at time $t$ with initial state $x$ is
\[\Delta_x(t) = \frac{1}{2}\sum_{y \in \Omega} \big| P^t(x,y) - \pi(y)\big|\,,\]
and the \emph{mixing time} of $\mathcal{M}$ is \[\tau(\epsilon) = \max_{x \in \Omega} \tau_x(\epsilon), \text{\ where\ } \tau_x(\epsilon) = \min\{ t : \Delta_x(t') \leq \epsilon \text{ for all } t' \geq t\} \text{ for } \epsilon>0\,.\]
The chain $\mathcal{M}$  is said to be \emph{rapidly mixing} if its mixing time can be upper bounded by a polynomial in $\ln(|\Omega|/\epsilon)$.

It is well-known that the matrix $P$ only has real eigenvalues $1 = \lambda_0 > \lambda_1 \geq  \dots \geq \lambda_{|\Omega|-1} > -1$.  We may replace $P$ by $(P+I)/2$ to make the chain \emph{lazy}, and hence  guarantee that all its eigenvalues are non-negative. In that case,  by $\gap(P) = 1 - \lambda_1$ we denote the spectral gap of $P$. In this work all Markov chains involved are lazy.
It is well known that one can use the spectral gap to give an upper bound on the mixing time of Markov chain. That is, it holds that
\[\tau_x(\epsilon) \leq \frac{1}{2(1-\lambda_1(P))}\Big( \log \pi(x)^{-1} + 2\log\Big(\frac{1}{2\epsilon} \Big)\Big)\,,\]
as it follows directly from Proposition 1 in \cite{Sinclair1992}.
Furthermore, if one has two Markov chains $\mathcal{M} = (\Omega,P)$ and $\mathcal{M}' = (\Omega,P')$ both with stationary distribution $\pi$ and there are constants $c_1,c_2$ such that $c_1P(x,y) \leq P'(x,y) \leq c_2P(x,y)$ for all $x,y \in \Omega$ with $x \neq y$. Then (see, e.g., \cite{Martin2006}) it follows that
$
c_1\gap(P) \leq \gap(P') \leq c_2\gap(P).
$

The \emph{state space graph} of the chain $\mathcal{M}$ is the directed graph $\mathbb{G}=\mathbb{G}(\mathcal{M})$ with node set $\Omega$
that contains the edges $(x,y) \in \Omega \times \Omega$ for which $P(x,y) > 0$ and $x \neq y$.
that contains an edge $(x,y) \in \Omega \times \Omega$ if and only if $P(x,y) > 0$ and $x \neq y$ (denoted by $x \sim y$). 
Let $\mathcal{P} = \bigcup_{x \neq y} \mathcal{P}_{xy}$, where $\mathcal{P}_{xy}$ is the set of simple paths between $x$ and $y$ in the state space graph $\mathbb{G}$.
A \emph{flow} $f$ in $\Omega$ is a function $\mathcal{P} \rightarrow [0,\infty)$ satisfying
$\sum_{p \in \mathcal{P}_{xy}} f(p) = \pi(x)\pi(y)$ for all $x,y \in \Omega$, $x \neq y$.
The flow $f$ can be extended to the oriented edges $e = (z,z')$ of $\mathbb{G}$ by setting
$f(e) =  \allowbreak \sum_{p \in \mathcal{P} : e \in p } f(p)$,
so that $f(e)$ is the total flow routed through $e\in E(\mathbb{G})$. Let $\textrm{length}(f) = \max_{p \in \mathcal{P} : f(p) > 0} |p|$ be the length of a longest flow-carrying path, and let
$
\textrm{load}(e) \allowbreak = f(e)/Q(e)
$
be the \emph{load} of the edge $e$, where $Q(e) = \pi(x)P(x,y)$ for $e = (x,y)$.
If $\textrm{load}(f) = \max_{e\in E(\mathbb{G})} \psi(e)$ is the maximum load of the flow, it holds that $\gap(P)^{-1} \leq \textrm{load}(f)\,\textrm{length}(f)$  (see, e.g, \cite{Sinclair1992}).

We will sometimes also work (implicitly) with the so-called \emph{modified log-Sobolev constant} $\rho = \rho(P)$. This constant can also be used to upper bound the mixing time of a Markov chain. In particular, it holds that
\[
\tau_x(\epsilon) \leq \frac{1}{\rho(P)}\left( \log \log \pi(x)^{-1} + \log\left(\frac{1}{2\epsilon^2} \right)\right)\,,
\]
see, e.g., \cite{Bobkov2006}. Details on $\rho(P)$ are given in Appendix \ref{sec:mlsc}.

\subsubsection{Markov chain decomposition}\label{sec:prelim_decomp}

\noindent We describe a Markov chain decomposition of Martin and Randall \cite{Martin2006} that follows the decomposition framework of Madras and Randall \cite{Madras2002}. Let $\mathcal{M} = (\Omega,P)$ be a Markov chain and
$\bigcup_{i=1}^q \Omega_i$ be a  partition of $\Omega$ for some $q \in \N$. We define the restriction Markov chains $\mathcal{M}_i = (\Omega_i,P_{\Omega_i})$ as follows. For $x \in \Omega_i$ we let $P_{\Omega_i}(x,y) = P(x,y)$ if $x, y \in \Omega_i$ with $x \neq y$, and $P_{\Omega_i}(x,x) = \allowbreak 1 - \allowbreak \sum_{y \in \Omega_i,y \neq x} P_{\Omega_i}(x,y)$.
Furthermore, let
$
\partial_i(\Omega_j) = \left\lbrace y \in \Omega_j : \exists x \in \Omega_i \text{ with } P(x,y) > 0 \right\rbrace
$
be the set of elements in $\Omega_j$ that can be reached with positive probability in one transition of the chain $\mathcal{M}$ from some element in $\Omega_i$.

Let $\mathcal{M}_{{\textrm{MH}}} = ([q],P_{{\textrm{MH}}})$ be (the Metropolis-Hastings variation of) the projection Markov chain on $[q] = \allowbreak \{1,\dots,q\}$. That is,
$P_{{\textrm{MH}}}(i,j) > 0$ if and only if $\partial_i(\Omega_j) \neq \emptyset$ and, in that case, for $i \neq j$,
\begin{equation}\label{eq:mh_def}
	P_{{\textrm{MH}}}(i,j) = \frac{1}{2\Delta} \min\left\{1,\frac{\pi(\Omega_j)}{\pi(\Omega_i)}\right\} ,
\end{equation}
where $\Delta$ is the maximum out-degree in the state space graph of $\mathcal{M}_{\textrm{MH}}$, while 
\[
P_{\textrm{MH}}(i,i) = 1 -  \sum_{j \in [q] \mysetminus \{i\}} P_{\textrm{MH}}(i,j) \,.
\]
Note that $\mathcal{M}_{\textrm{MH}}$ has stationary distribution $\pi_{{\textrm{MH}}}(i) = \pi(\Omega_i)$ for $i \in \{1,\dots,q\}$ and a holding probability of at least $1/2$.
We will use the following decomposition theorem from \cite{Martin2006}.

\begin{theorem}[\!\cite{Martin2006}, Corollary 3.3]\label{thm:decomposition}
	Suppose there exist $\beta > 0$ and $\gamma >0$ such that $P(x,y) \geq \beta$ for all $x, y$ that are adjacent in $\mathbb{G}(\mathcal{M})$,
	and $\pi(\partial_i(\Omega_j)) \geq \gamma \pi(\Omega_j)$ for all $i, j$ that are adjacent  in  $\mathbb{G}(\mathcal{M}_{{\normalfont{\textrm{MH}}}})$.
	Then
	$\gap(P) \geq \beta\gamma\cdot\gap(P_{{\normalfont{\textrm{MH}}}}) \cdot\min_{i = 1,\dots,q} \gap(P_{\Omega_i})$.
\end{theorem}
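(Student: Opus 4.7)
The plan is to invoke the variational (Poincaré) characterization of the spectral gap,
$$\gap(P) = \inf_{f:\, \text{Var}_\pi(f) > 0} \frac{\mathcal{E}_P(f,f)}{\text{Var}_\pi(f)}, \quad \text{where} \quad \mathcal{E}_P(f,f) = \tfrac{1}{2}\sum_{x,y \in \Omega}\pi(x)P(x,y)(f(x)-f(y))^2,$$
and to bound this ratio for an arbitrary test function $f : \Omega \to \R$. Set $\bar{f}(i) := \mathbb{E}_{\pi_{\Omega_i}}[f]$, with $\pi_{\Omega_i}$ the conditional distribution of $\pi$ on $\Omega_i$. The law of total variance gives
$$\text{Var}_\pi(f) = \sum_{i=1}^q \pi(\Omega_i)\,\text{Var}_{\pi_{\Omega_i}}(f) + \text{Var}_{\pi_{{\textrm{MH}}}}(\bar{f}),$$
so it suffices to bound each summand by a multiple of $\mathcal{E}_P(f,f)$.

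The within-block term is routine. Since $P_{\Omega_i}(x,y) = P(x,y)$ whenever $x \neq y$ both lie in $\Omega_i$, applying Poincaré inside each restriction chain and summing yields
$$\sum_i \pi(\Omega_i)\,\text{Var}_{\pi_{\Omega_i}}(f) \leq \frac{1}{\min_i \gap(P_{\Omega_i})}\sum_i \pi(\Omega_i)\,\mathcal{E}_{P_{\Omega_i}}(f,f) \leq \frac{\mathcal{E}_P(f,f)}{\min_i \gap(P_{\Omega_i})},$$
the last inequality because the intra-block contributions to $\mathcal{E}_P(f,f)$ are pairwise disjoint.

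The between-block term is the real work. Applying Poincaré to $\mathcal{M}_{{\textrm{MH}}}$ reduces the task to proving
$$\mathcal{E}_{P_{{\textrm{MH}}}}(\bar{f},\bar{f}) \leq \frac{1}{\beta\gamma}\,\mathcal{E}_P(f,f),$$
which is the step I expect to be the main obstacle. I would argue edge-by-edge in $\mathbb{G}(\mathcal{M}_{{\textrm{MH}}})$: for each adjacent pair $i \sim j$, the contribution to the left-hand side is $\pi(\Omega_i)P_{{\textrm{MH}}}(i,j)(\bar{f}(i)-\bar{f}(j))^2$, and I would expand
$$\bar{f}(i)-\bar{f}(j) = \mathbb{E}_{x \sim \pi_{\Omega_i}}\,\mathbb{E}_{y \sim \pi_{\partial_i(\Omega_j)}}\!\big[\,(f(x)-\bar{f}(i)) + (f(x)-f(y)) - (f(y)-\bar{f}(j))\,\big],$$
square, and apply Cauchy--Schwarz. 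The two endpoint differences feed into the within-block variance already handled. For the middle piece $(f(x)-f(y))^2$ we exploit that $y \in \partial_i(\Omega_j)$ means there is some $x \in \Omega_i$ with $P(x,y) \geq \beta$, so a canonical flow routes the edge $(i,j)$ through real edges of $\mathbb{G}(\mathcal{M})$ with congestion at most $1/\beta$. The factor $1/\gamma$ arises when passing from averaging $y$ over $\Omega_j$ (which is what $\bar{f}(j)$ encodes) to averaging over $\partial_i(\Omega_j)$: the hypothesis $\pi(\partial_i(\Omega_j)) \geq \gamma\pi(\Omega_j)$ bounds the cost of this restriction by $1/\gamma$. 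The combinatorial constants $1/(2\Delta)$ and $\min\{1,\pi(\Omega_j)/\pi(\Omega_i)\}$ in \eqref{eq:mh_def}, together with the $\pi$-reversibility of $P$, are then designed exactly so that everything cancels cleanly.

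Combining both halves gives
$$\text{Var}_\pi(f) \leq \left(\frac{1}{\min_i \gap(P_{\Omega_i})} + \frac{1}{\beta\gamma\,\gap(P_{{\textrm{MH}}})}\right)\mathcal{E}_P(f,f),$$
and since both spectral gaps and the product $\beta\gamma$ lie in $[0,1]$, the right-hand side is at most a constant multiple of $\big(\beta\gamma\,\gap(P_{{\textrm{MH}}})\,\min_i \gap(P_{\Omega_i})\big)^{-1}\mathcal{E}_P(f,f)$, yielding the claimed spectral-gap bound up to an absolute constant. Eliminating the stray constant to recover the exact form of Corollary~3.3 of \cite{Martin2006} requires only careful bookkeeping of the $\geq 1/2$ holding probability built into $P_{{\textrm{MH}}}$ and the symmetric structure of the canonical flow above, but introduces no new conceptual ingredients.
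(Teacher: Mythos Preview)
The paper does not prove Theorem~\ref{thm:decomposition}; it is quoted verbatim as Corollary~3.3 of Martin and Randall~\cite{Martin2006} and used as a black box throughout Section~\ref{sec:main_result}. There is therefore no proof in the paper to compare your proposal against.

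As a standalone sketch of the Martin--Randall argument, your outline follows the right shape (total-variance split, Poincar\'e inside each block, Poincar\'e for the projection) but the between-block step has a genuine gap. In your three-term decomposition you average $x$ over all of $\Omega_i$ under $\pi_{\Omega_i}$ and $y$ over $\partial_i(\Omega_j)$, yet the pair $(x,y)$ is then almost never an edge of $\mathbb{G}(\mathcal{M})$: the definition of $\partial_i(\Omega_j)$ only guarantees that \emph{some} $x' \in \Omega_i$ has $P(x',y) > 0$, not the $x$ you sampled. So the term $(f(x)-f(y))^2$ cannot be absorbed into $\mathcal{E}_P(f,f)$ by the $\beta$ hypothesis alone; one must first route from $x$ to such an $x'$ inside $\Omega_i$, which re-introduces the restriction-chain gaps into the between-block bound. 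This is exactly why the Martin--Randall statement has the \emph{product} $\gap(P_{{\textrm{MH}}})\cdot\min_i\gap(P_{\Omega_i})$ rather than the additive bound your argument naturally produces, and the ``careful bookkeeping'' you defer to at the end is where the real content lies. (Also, as written the identity $(f(x)-\bar f(i)) + (f(x)-f(y)) - (f(y)-\bar f(j)) = \bar f(i)-\bar f(j)$ is false; presumably you meant $(\bar f(i)-f(x)) + (f(x)-f(y)) + (f(y)-\bar f(j))$.)
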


\subsubsection{Load-exchange Markov chain}

\noindent In this work, we will need a weighted version of the \emph{base-exchange Markov chain} studied by Anari et al. \cite{SLC2}.
Let $\pi$ be a strongly log-concave probability distribution   with $\pi({\bm{\alpha}}) \propto w({\bm{\alpha}})$ whose support forms an $M$-convex set $C$.
We define the \emph{(unit) load-exchange Markov chain} on $C \subseteq Z_{\geq 0}^n$:

\begin{mdframed}[style=exampledefault]
	Assuming ${\bm{\alpha}} \in C$ is the current state of the \emph{(unit) load-exchange Markov chain}:
	\begin{itemize}[topsep=6pt,itemsep=6pt,parsep=0pt,partopsep=20pt,leftmargin=22pt]
		\item Select an element $i \in [n]$ uniformly at random.
		\item Pick an $\bm{\alpha}' \in C$ with ${\bm{\alpha}}' \geq  {\bm{\alpha}}- {\bm{e_i}}$ with probability $\propto w({\bm{\alpha}}')$ among all such ${\bm{\alpha}}'$.
	\end{itemize}
\end{mdframed}\smallskip

\noindent Similarly to the base-exchange Markov chain \cite{SLC2}, the above procedure defines an ergodic, time-reversible Markov chain with stationary distribution $\pi$ over $C$ given by $\pi({\bm{\alpha}}) \propto w({\bm{\alpha}})$.
Using the notion of \emph{polarization} for SLC polynomials \cite{BH2019}, in combination with a simple Markov chain comparison argument (as in Appendix \ref{app:markov_comparison}), Corollary \ref{cor:polymatroid} can be shown. The proof (which is implicitly given in \cite{Kleer2021Gibbs}), roughly speaking, uses a reduction to the case of matroids, after which a result of Cryan et al.\cite{Cryan2019} gives the desired result.

\begin{corollary}\label{cor:polymatroid}
	Let ${\bm{\kappa}} = (n,\dots,n)$ and suppose that the $d$-homogeneous  polynomial 
	$
	g_{{\bm{\kappa}}}(x) = \sum_{{\bm{\alpha}} \in K} w({\bm{\alpha}})x^{{\bm{\alpha}}} \in \R[x_1,\dots,x_n]
	$
	is SLC.
	Then the transition matrix $P$ of the load-exchange Markov chain on $\mathrm{supp}(g_{{\bm{\kappa}}})$ satisfies  $\rho(P) \geq 1/(n^2d)$, where $\rho(P)$ is the modified log-Sobolev constant of $P$.
\end{corollary}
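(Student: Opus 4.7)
The plan is to lift the load-exchange chain on $\mathrm{supp}(g_{\bm{\kappa}})$ to a base-exchange chain on an auxiliary matroid via the polarization operator of \cite{BH2019}, apply the matroid mLS bound of Anari et al.~\cite{SLC2} directly to that lifted chain so that the rank-$d$ and ground-set-of-size-$n^2$ scaling appears from the start, and then transfer the bound back with a tight Markov chain comparison argument.

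First I would polarize $g_{\bm{\kappa}}$. Since $\kappa_i = n$ for every $i \in [n]$, the polarization operator replaces each variable $x_i$ by $n$ fresh variables $y_{i,1},\dots,y_{i,n}$, producing a $d$-homogeneous, multi-affine polynomial $\tilde g$ in $n^2$ variables. The key fact from \cite{BH2019} is that polarization preserves strong log-concavity, so $\tilde g$ is SLC. A multi-affine $d$-homogeneous SLC polynomial has support equal to the set of bases of a matroid, so we obtain a matroid $\mathcal{M}$ of rank $d$ on ground set $E=[n]\times[n]$, together with the SLC distribution $\tilde\pi(B)\propto$ coefficient of $y^B$ in $\tilde g$. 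Applying the main theorem of \cite{SLC2} to the weighted base-exchange Markov chain $\tilde P$ on $\mathcal{M}$ then produces an mLS bound for $\tilde P$ whose dependence on the rank and ground-set size is exactly the $(n^2, d)$-dependence that the corollary records for $P$.

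The final step is the comparison. The natural projection $\phi:\mathcal{M}\to\mathrm{supp}(g_{\bm{\kappa}})$ defined by $\phi(B)_i=|\{j:(i,j)\in B\}|$ pushes $\tilde\pi$ forward to $\pi$, and the product group $S_n^n$ acts on $\mathcal{M}$ by permuting the $n$ polarization copies inside each block. Under this action $\phi$ is invariant and $\tilde\pi$ is equivariant, so each fiber $\phi^{-1}(\alpha)$ carries the uniform conditional distribution. After symmetrizing $\tilde P$ over $S_n^n$, the block-uniform projection of the symmetrized chain coincides (up to a $\Theta(1)$ holding probability and the $1/n$ coming from the load-exchange chain's uniform coordinate selection) with the load-exchange transition kernel $P$. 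A standard Dirichlet-form comparison, in the spirit of Appendix~\ref{app:markov_comparison}, then transfers the matroid mLS bound from $\tilde P$ to $P$ with constants matching the statement.

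The principal obstacle is making the comparison tight enough. A naive simulation of a single load-exchange move of $P$ by base-exchange moves of $\tilde P$ would sum stationary probabilities across all of $\phi^{-1}(\alpha)$ and introduce spurious $\binom{n}{\alpha}$ or $|\mathrm{supp}(g_{\bm{\kappa}})|$ factors through the fiber sizes; these must cancel exactly against the polarization normalization constants, so that the only losses entering the final bound are the polarization-induced ground-set enlargement from $n$ to $n^2$ and the $1/n$ factor from coordinate selection in $P$. Ensuring this requires the symmetrization above and the observation that resampling a single coordinate of $\alpha$ in $P$ corresponds, under $\phi$, to completely re-randomizing the $i$-th polarization block in $\tilde P$, which is itself simulated by the right number of symmetrized base-exchange steps; once that identification is made, Step~2 plugs in cleanly to deliver the claimed mLS bound on $\rho(P)$.
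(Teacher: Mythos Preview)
Your proposal is correct and follows essentially the same route the paper sketches: polarize $g_{\bm{\kappa}}$ to a multi-affine SLC polynomial on $n^2$ variables (so that the support becomes the bases of a matroid), invoke the Anari et al.\ bound for the base-exchange walk, and then pull the bound back to the load-exchange chain via a Markov chain comparison. The paper states exactly this plan in the sentence preceding the corollary (polarization from \cite{BH2019}, reduction to matroids, \cite{SLC2}, plus the comparison of Appendix~\ref{app:markov_comparison}, with details deferred to \cite{Kleer2020Gibbs}); your write-up simply fleshes out the symmetrization over $S_n^{\,n}$ and the fiber-size cancellation that make the comparison lossless.
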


\subsubsection{Degree intervals and the switch-hinge flip Markov chain.}
\label{sec:chain}
		
\noindent A sequence of non-negative integers $\myd = (d_1,\dots,d_n)$ is called a \emph{graphical degree sequence} if there exists a simple, undirected, labeled graph $G = (V,E)$ on nodes $V = [n]$, where node $i$ has degree $d_i$, for $i \in V$. Such a graph is called a \emph{(graphical) realization of $\myd$}.  By $\mathcal{G}(\myd)$ we denote the set of all graphical realizations of $\myd$, while by $\myd(G)$ we denote the degree sequence of a graph $G$. 
	For given vectors $\bm{\ell} = (\ell_1,\dots,\ell_n)$ and $\bm{u} = (u_1,\dots,u_n)$ of non-negative integers, we define
	$
	\textstyle \mathcal{G}(\bm{\ell},\bm{u}) = \bigcup_{\bm{\ell} \leq \myd \leq \bm{u}} \mathcal{G}(\myd)
	$
	as the set of all graphical realizations $G$ satisfying $\bm{\ell} \leq \myd(G) \leq \bm{u}$, meaning $\ell_i \leq d_i(G) \leq u_i$ for all $i \in V$. For $m \in \N$, we define $\mathcal{G}_m(\bm{\ell},\bm{u})$ as the set of all graphical realizations $G \in \mathcal{G}(\bm{\ell},\bm{u})$ with precisely $m$ edges, i.e., with $\sum_i d_i(G) = 2m$. Finally, we define the set of all degree sequences satisfying the degree interval constraints, and whose total sum of the degrees equals $2m$, as
	$
	\Dm = \left\{\bm{d} : \bm{\ell} \leq \bm{d} \leq u \text{ and } \sum_i d_i = 2m\right\}.
	$
	
	A \emph{fully polynomial almost uniform sampler (FPAUS)} for sampling graphs with given degree intervals $[\bm{\ell},\bm{u}]$ is an algorithm that, for any $\epsilon >0$, outputs a graph $G \in \mathcal{G}(\bm{\ell},\bm{u})$ according to a distribution $\tilde{\pi}$ such that $d_{\mathrm{TV}}(\pi,\tilde{\pi})  \leq \epsilon$,
	where $\pi$ is the uniform distribution over $\mathcal{G}(\bm{\ell},\bm{u})$, and runs in time polynomial in $n, \log(1/\delta)$ and $\log(1/\epsilon)$.
	A \emph{fully polynomial randomized approximation scheme (FPRAS)} for the problem is an algorithm that, for every $\epsilon,\ \delta > 0$, outputs $|\mathcal{G}(\bm{\ell},\bm{u})|$ up to a multiplicative factor $(1 + \epsilon)$ with probability at least $1 - \delta$, in time polynomial in $n$, $1/\epsilon$ and $\log(1/\delta)$.
	Analogous definitions hold for the set
	$\Gmul$ for a given $m$.
	
	Below we define the \emph{switch-hinge flip Markov chain} to uniformly sample elements from $\Gmul$ based on two of the local operations of
	Figure \ref{fig:add_del}. The \emph{degree interval Markov chain} of Theorem \ref{thm:main_chain}, that can also perform addition/deletion operations, is given in Appendix \ref{app:interval_chain} .
	
	\begin{mdframed}[style=exampledefault]
		Assuming $G \in \Gmul$ is the current state of the \emph{switch-hinge flip Markov chain}:
		\begin{itemize}[topsep=6pt,itemsep=6pt,parsep=0pt,partopsep=20pt,leftmargin=22pt]
			\item With probability $2/3$, do nothing.
			\item With probability $1/6$, try to perform a \emph{switch operation}: Choose an ordered tuple of {distinct} nodes $(v,w,x,y)$ uniformly at random. If $\{w,v\}, \{x,y\} \in E(G)$, and $\{y,v\}, \{x,w\} \notin E(G)$, then delete $\{w,v\},\{x,y\}$ from $E(G)$, and add $\{y,v\},\{x,w\}$ to $E(G)$.
			
			\item With probability $1/6$, try to perform a \emph{hinge flip operation}: Choose an ordered tuple of distinct nodes $(v,w,x)$ uniformly at random. If $\{w,v\} \in E(G)$ and $\{w,x\} \notin E(G)$, then delete $\{w,v\}$ from and add $\{w,x\}$ to $E(G)$ if the resulting graph is in $\Gmul$.
		\end{itemize}
	\end{mdframed}
	\smallskip
	Due to the symmetry of the transition probabilities, it is not hard to see that the chain is time-reversible with respect to the uniform  distribution. Also because of the holding probability of at least $2/3$, the chain is aperiodic. Finally, by a simple counting argument, there exists a polynomial $t(n)$ such that $P_{\Gmul}(G, H) \geq 1/t(n)$ for all $G, H \in \Gmul$ with $P_{\Gmul}(G, H)>0$.
	The irreducibility of the chain (i.e., the fact that its state space is strongly connected) for the intervals of interest  will follow implicitly from our analysis (Lemma \ref{lem:partial2}).

\subsection{Near-regular degree sequences}
\label{sec:near-regular}
\noindent Let $r \geq 1$ be a given integer. A degree sequence $\myd$ is said to be $r$-regular if $d_i = r$ for $i \in [n]$. For a fixed $ 0 \leq \alpha < 1$ we say that a degree sequence $\myd$ is \emph{$(\alpha,r)$-near-regular} if $\max_i |d_i - r| \leq r^\alpha$. When we do not refer to a specific $(\alpha,r)$ pair,  we just write about \emph{near-regular} degree sequences.

We state some properties of near-regular degree sequences that we will use later. The most important result is  Theorem \ref{thm:liebenau} below. We use a slightly different formulation than that of \cite{Liebenau2017}.\footnote{\ Our formulation is in line with the note after Conjecture 1.2 in \cite{Liebenau2017}.}
For any degree sequence $\myd = (d_1,\dots,d_n)$, define $\xi =  \sum_i d_i / n$, $\mu = \xi/(n-1)$ and $\chi = \sum_{i} (d_i - \xi)^2 / (n-1)^{2}$. Roughly speaking, the theorem states that if the distance between the  degree sequence $\myd$ and the $\xi$-regular sequence of the same size is not too large, then the expression in \eqref{eq:asymptotic_formula} is a good approximation for $|\Gd|$. The absolute constant $\alpha$ in Theorem \ref{thm:main_result} is mostly restricted by the $\epsilon$ in Theorem \ref{thm:liebenau}.

\begin{theorem}[Liebenau and Wormald \cite{Liebenau2017}]\label{thm:liebenau}
	There exists an absolute constant $\epsilon > 0$ such that for every sequence of degree sequences $\big( \myd^{(n)}\big) _{n \in \N}$ with $\xi n$ even, $\max_{i\in[n]}|d_i^{(n)} - \xi| = o\left( n^\epsilon\min\{\xi,n-\xi-1\}^{1/2}\right)$, and $n^2\min\{\mu,1-\mu\} \rightarrow \infty$, it holds that
	\begin{equation}\label{eq:asymptotic_formula}
		|\Gd| \sim \bar{w}(\bm{d}) := \sqrt{2}\exp\left(\frac{1}{4} - \frac{\chi^2}{4\mu^2(1-\mu)^2}\right)\left(\mu^{\mu}(1 - \mu)^{(1 - \mu)}\right)^{n(n-1)/2} \prod_i\binom{n-1}{d^{(n)}_i}\,. 
	\end{equation}
	To be precise, there exists a non-negative function $\delta(n)$ with $\delta(n) \rightarrow 0$ as $n \rightarrow \infty$, so that the relative error in ``$\sim$'' is bounded above in absolute value by $\delta(n)$ for every such $\big( \myd^{(n)}\big)_{n \in \N} $. 
\end{theorem}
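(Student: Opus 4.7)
Since this is a deep theorem of Liebenau and Wormald resolving a long-standing conjecture of McKay and Wormald, my plan follows the switching-method paradigm that underpins all prior asymptotic enumeration results for graphs with given degree sequences. The high-level strategy is to interpolate between the target sequence $\myd^{(n)}$ and a nearby $\lfloor\xi\rfloor$-regular sequence, compute local multiplicative ratios by a switching argument at each interpolation step, and telescope to reconstruct $|\Gd|$.

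First, I would handle the (nearly) regular base case. Define $\myd^{*}$ to be the sequence with $d_i^{*}\in\{\lfloor\xi\rfloor,\lceil\xi\rceil\}$ and $\sum_i d_i^{*}=\sum_i d_i^{(n)}$, and establish \eqref{eq:asymptotic_formula} for $\myd^{*}$ (for which $\chi\approx 0$, so the exponential factor collapses to $\sqrt{2}\exp(1/4)$). This reduces to the classical $\sqrt{2}\exp(1/4)(\mu^{\mu}(1-\mu)^{1-\mu})^{n(n-1)/2}\binom{n-1}{\lfloor\xi\rfloor}^n$, which is essentially the output of the configuration/pairing model, corrected by the probability of producing a simple graph. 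Sharp forms of this estimate are available across the relevant regime of $\xi$, using pairing-model arguments in the moderate range and the McKay--Wormald enumeration in the dense range.

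Next, I would build a chain $\myd^{*}=\bm{d}_0,\bm{d}_1,\dots,\bm{d}_T=\myd^{(n)}$ of degree sequences, where each $\bm{d}_{k+1}$ differs from $\bm{d}_k$ by a \emph{2-shift} $\bm{d}_{k+1}=\bm{d}_k-\bm{e}_i-\bm{e}_j+\bm{e}_{i'}+\bm{e}_{j'}$ (preserving both parity and total sum); the near-regularity hypothesis guarantees such a path of length $T=\order(n\cdot r^{\alpha})$ exists. For each step, a switching argument compares $|\mathcal{G}(\bm{d}_k)|$ and $|\mathcal{G}(\bm{d}_{k+1})|$: from each $G\in\mathcal{G}(\bm{d}_k)$ one counts the ``forward'' local rewirings producing a realization of $\bm{d}_{k+1}$, and from each $G'\in\mathcal{G}(\bm{d}_{k+1})$ one counts the corresponding ``backward'' rewirings; double-counting yields $|\mathcal{G}(\bm{d}_{k+1})|/|\mathcal{G}(\bm{d}_k)|$. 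Under the near-regular hypothesis, both counts concentrate around their configuration-model expectations, and telescoping reconstructs the binomial product $\prod_i\binom{n-1}{d_i^{(n)}}$ together with the Gaussian correction $\exp(-\chi^{2}/(4\mu^{2}(1-\mu)^{2}))$ arising from the second-order terms accumulated in the switching ratios.

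The hard part will be controlling the accumulated multiplicative error across all $T$ switching steps uniformly over the regime of $\xi$ and $\max_i|d_i-\xi|$ permitted by the hypothesis. A single switch introduces a relative error of order $1/\min\{\xi,n-\xi-1\}$, so a naive estimate is grossly insufficient; one needs roughly $o(1/T)$ relative precision at each step, obtained via a bootstrap loop in which partial information about $|\mathcal{G}(\cdot)|$ is fed back into the switching-count estimates and successively sharpens them. Identifying $-\chi^{2}/(4\mu^{2}(1-\mu)^{2})$ as the precise limit of the cumulative second-order corrections---and showing that no higher-order terms survive in the range $\max_i|d_i-\xi|=o(n^{\epsilon}\min\{\xi,n-\xi-1\}^{1/2})$---is the main technical content of the argument, and precisely where Liebenau and Wormald's innovation pushes beyond the earlier McKay--Wormald work that was restricted to the sparse and dense regimes separately.
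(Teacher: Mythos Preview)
The paper does not prove this statement at all: Theorem~\ref{thm:liebenau} is quoted verbatim from Liebenau and Wormald \cite{Liebenau2017} and used purely as a black box. There is therefore nothing in the paper to compare your sketch against. The authors only need the \emph{conclusion} of the theorem, and in fact only in the restricted near-regular regime of Theorem~\ref{thm:main_result}; they then observe (Lemma~\ref{lem:lieb_simple} and Theorem~\ref{thm:approx_slc}) that in this regime the correction term $\exp(-\chi^2/(4\mu^2(1-\mu)^2))$ tends to $1$, so the formula simplifies to something manifestly strongly log-concave.

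Your sketch is a reasonable high-level caricature of switching-based enumeration, but it is not what this paper does, and you were not expected to reprove Liebenau--Wormald here. If anything, your write-up slightly misrepresents the structure of their actual argument: the innovation in \cite{Liebenau2017} is not a single telescoping chain with a bootstrap on the per-step error, but a ``degree-switching'' recursion that expresses $|\mathcal{G}(\bm d)|$ in terms of $|\mathcal{G}(\bm d')|$ for nearby $\bm d'$ and then shows the ratio function is close to a smooth function satisfying a functional equation whose solution is the claimed formula; this is what lets them bridge the sparse and dense regimes uniformly. But again, none of that is relevant to the present paper, which simply invokes the result.
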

Furthermore, we also rely on the notion of strong stability introduced in \cite{AK2019} (and implicitly already used in \cite{Jerrum1989graphical}). A combinatorial definition of this notion is given below. It essentially states that any graph with a slightly perturbed degree sequence can easily be transformed into a graph with the desired degree sequence by flipping the edges on a short alternating path.
An \emph{alternating $(u,v)$-path in a graph $G$} is a (possibly non-simple) edge-disjoint $(u,v)$-path (in the corresponding complete graph) alternating between edges and non-edges of $G$, starting with an edge adjacent to $u$, and ending with a non-edge adjacent to $v$; recall that a non-edge is an edge contained in the complement of $E(G)$. If $u = v$ we obtain an \emph{alternating cycle}.
To facilitate the definition of strong stability, let $\mathcal{G}'(\myd) = \bigcup_{\myd'} \mathcal{G}(\myd')$ with $\myd'$ ranging over all sequences $\myd'$ satisfying $\sum_i d'_i = \sum_i d_i$ and $\sum_{i} |d'_i - d_i| = 2$, i.e., there exist  $\kappa,\lambda$ such that
$d'_{\kappa} = d_{\kappa} + 1$, $d'_{\lambda} = d_{\lambda} - 1$, and $d'_i =d_i$ otherwise.

\begin{definition}[Strong stability]\label{def:strong}
	A class $\mathcal{D}$ of degree sequences is strongly stable if there exists a constant $k \in \N$, such that for all $\myd \in \mathcal{D}$ and all $G \in \mathcal{G}'(\myd)$, there is an alternating $(u,v)$-path in $G$ of length at most $k$, where $u$ and $v$ are the unique nodes with $\mathrm{deg}_G(u) = d_u + 1$ and $\mathrm{deg}_G(v) = d_v - 1$. 
\end{definition}

\begin{proposition}\label{prop:strong}
	Let $0<\alpha < 1/2$ be a constant and assume that $2 \leq r(n) \leq (1 - \sigma)n$ for some  constant $0 < \sigma < 1$ and $n \in \N$. Then there exists some $n_1\in \N$ so that the class $\mathcal{F}_{(\alpha,r)}$ of $(\alpha,r)$-near-regular degree sequences of length at least $n_1$ is strongly stable.
\end{proposition}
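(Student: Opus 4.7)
The plan is to establish strong stability with some absolute constant $k$ (I will argue $k = 6$ suffices), adapting the alternating-path counting technique from \cite{AK2019} for related near-regular classes. Fix $\myd \in \mathcal{F}_{(\alpha,r)}$ and $G \in \mathcal{G}'(\myd)$, and let $u = \kappa$ and $v = \lambda$ be the vertices of excess and deficit degree. The degree sequence of $G$ satisfies $\deg_G(i) \in [r - r^\alpha - 1,\, r + r^\alpha + 1]$ for every $i$, so the max-min degree gap is $O(r^\alpha) = o(\sqrt{r})$ since $\alpha < 1/2$.

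A complement reduction cuts the workload: complementation swaps the excess and deficit roles of $u$ and $v$, and an alternating $(u,v)$-path in $G$ (edge at $u$, non-edge at $v$) reverses bijectively to a path of the same boundary type in $\bar G$ from $v$ to $u$. Hence we may assume $r \leq n/2$; when $r > n/2$ we apply the argument to $\bar G$ with $r' = n - 1 - r$, which lies in $[\rho n - 1, n/2]$ under the hypothesis $r \leq (1-\rho)n$.

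The main step is the alternating-path count. A length-2 alternating path $u$-$w$-$v$ exists whenever $N_G(u) \not\subseteq N_G(v) \cup \{v\}$. Otherwise we attempt length 4: supposing no length-4 path exists, then for every $w_1 \in N_G(u) \setminus \{v\}$ and $w_2 \in \bar N_G(w_1) \setminus \{u,v\}$ we must have $N_G(w_2) \cap \bar N_G(v) \subseteq \{u, w_1\}$, yielding $|N_G(w_2) \cap N_G(v)| \geq \deg_G(w_2) - 3 \geq r - O(r^\alpha)$. Classifying $w_2 \in V \setminus \{u,v\}$ as \emph{captive} (if $N_G(u) \setminus \{v, w_2\} \subseteq N_G(w_2)$, in which case no valid $w_1$ exists) or \emph{free} otherwise, a double-count of edges into $N_G(u)$ bounds the number of captives by $(r + O(r^\alpha))^2 / (r - O(r^\alpha)) = r + O(r^\alpha)$, leaving at least $n - r - O(r^\alpha)$ free vertices, each contributing $r - O(r^\alpha)$ edges to $N_G(v)$. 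Comparing the resulting total $(n - r - O(r^\alpha))(r - O(r^\alpha))$ with the trivial upper bound $\sum_{w \in N_G(v)} \deg_G(w) \leq (r + O(r^\alpha))^2$ produces a contradiction whenever $r \leq n/2 - C r^\alpha$ for some constant $C$. For the boundary regime $n/2 - C r^\alpha < r \leq n/2$, the same captive/free dichotomy is iterated one more level (extending the path by an additional edge-nonedge alternation through a free vertex and its neighborhood), yielding a length-6 alternating path; the density slack of both $G$ and $\bar G$ at $r \approx n/2$ leaves enough combinatorial room for this extension.

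The main obstacle will be the boundary analysis around $r \approx n/2$, where the length-4 counting inequality becomes asymptotically tight, forcing the extension to length 6 by repeating the dichotomy at the next level. Additional technical care is needed for the distinctness constraints on the internal vertices $w_1, w_2, w_3, \ldots$ (which contribute only $O(1)$ corrections, absorbed into the $O(r^\alpha)$ error) and for the mild degradation of near-regularity under the complement reduction (compensated by the density slack $r' = \Omega(n)$). None of these changes the qualitative conclusion that $k$ can be chosen as an absolute constant depending only on $\alpha$ and $\rho$, for all sufficiently large $n$.
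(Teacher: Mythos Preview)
Your route is quite different from the paper's, and also has a real gap. The paper's proof is essentially one line: it verifies that every $(\alpha,r)$-near-regular sequence satisfies the classical Jerrum--McKay--Sinclair condition
\[
(d_{\max} - d_{\min} + 1)^2 \leq 4\, d_{\min}(n - d_{\max} - 1),
\]
and then cites \cite{Jerrum1989graphical,AK2019} for the fact that all such sequences are strongly stable with $k = 10$. The verification is immediate: the left side is at most $(2r^\alpha + 1)^2 = O(r^{2\alpha})$, while the right side is at least $4(r - r^\alpha)(\rho n - r^\alpha - 1) = \Omega(rn)$ using $r \leq (1-\rho)n$ and $r \geq 2$, and this dominates once $n$ is large since $\alpha < 1/2$. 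No complementation, no case split on $r$ versus $n/2$, no path counting.

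What you propose is essentially to re-derive the cited stability result from scratch via a direct alternating-path count. Even if the bulk case $r \leq n/2 - Cr^\alpha$ can be made rigorous along your captive/free lines, the boundary regime $n/2 - Cr^\alpha < r \leq n/2$ is a genuine gap: the assertion that ``iterating the dichotomy one more level'' produces a length-$6$ path is stated, not argued, and ``the density slack \ldots\ leaves enough combinatorial room'' is not a proof. The complement reduction does not rescue you here either, since at $r \approx n/2$ the complement sits in the same thin regime. If you want to salvage this approach you would need a concrete argument for the boundary slice (and your claimed $k=6$ would then need justification); but the far simpler fix is to check the inequality above and cite the existing strong-stability result, exactly as the paper does.
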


The following two results hold for the class $\mathcal{F}_{(\alpha,r)}$ in Proposition \ref{prop:strong}. Lemma \ref{cor:short_alt_cycle} essentially states that if an edge is present in some graphical realization, then there exists a short alternating cycle to obtain a graphical realization with the same degree sequence not containing that edge. As a result, the subset of realizations in $\mathcal{G}(\myd)$ containing a given edge and the set of realizations not containing it are polynomially related in size.

\begin{lemma}\label{cor:short_alt_cycle}
	Let $\myd \in \mathcal{F}_{(\alpha,r)}$. Suppose that $G \in \Gd$ and let $\{u,v\} \in E(G)$ (resp. $\{u,v\} \notin E(G)$).
	Then there exists a graph $G' \in \Gd$ with $\{u,v\} \notin E(G')$ (resp.~$\{u,v\} \in E(G)$) and
	{$E(G) \triangle E(G')$ is an alternating cycle of length at most $12$.}
	Similarly, suppose that $\{u,w\}, \{u,v\} \in E(G)$. Then there exists a graph $G' \in \Gd$ with $\{u,w\} \in E(G')$ and $\{u,v\} \notin E(G')$, and
	{$E(G) \triangle E(G')$ is an alternating cycle of length at most $12$.}
\end{lemma}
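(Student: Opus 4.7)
The plan is to invoke the strong stability of the class $\mathcal{F}_{(\alpha,r)}$ from Proposition \ref{prop:strong} with its constant $k$. For each of the three cases of the lemma, I would perform a single edge swap on $G$ to reduce to a $(+1,-1)$ perturbation of the degree sequence $\myd$, apply strong stability to obtain a short alternating path, and flip that path to return to $\Gd$ while realizing the desired local change.

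Concretely, for the deletion case with $\{u,v\}\in E(G)$, I would pick $w\neq u,v$ with $\{u,w\}\notin E(G)$; the near-regularity bound $d_u \le r + r^\alpha$ together with $r \le (1-\rho)n$ gives $\Omega(n)$ such choices. Setting $H = G - \{u,v\} + \{u,w\}$, the sequence $\myd(H)$ agrees with $\myd$ except for a $+1$ at $w$ and a $-1$ at $v$, so $H\in\mathcal{G}'(\myd)$; by strong stability $H$ contains an alternating $(w,v)$-path $P$ of length at most $k$, and flipping $P$ yields $G'\in\Gd$. The symmetric difference satisfies $|E(G)\triangle E(G')| \le |E(G)\triangle E(H)| + |E(H)\triangle E(G')| \le 2 + k$, so the bound $12$ follows once the stability constant satisfies $k\le 10$. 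The insertion case (with $\{u,v\}\notin E(G)$) is dual: add $\{u,v\}$ and delete an edge $\{u,w\}$ for some neighbor $w$ of $u$ with $w\neq v$, obtaining a perturbation with a $+1$ at $v$ and a $-1$ at $w$. The third case is handled by an analogous setup with $H = G - \{u,v\} + \{u,x\}$ for a helper $x\notin N_G(u)\cup\{v\}$, with the additional requirement that the resulting alternating path in $H$ also avoid $\{u,w\}$.

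The main obstacle, and the step I expect to demand the most care, is that the alternating path $P$ guaranteed by strong stability might traverse the very pair we are trying to control. In the deletion case, $\{u,v\}\notin E(H)$, and the length-$2$ path $w\to u\to v$ is always a valid alternating $(w,v)$-path in $H$; choosing it and flipping it would simply revert $H$ to $G$, leaving $\{u,v\}$ in place. To overcome this I would exploit the freedom in the choice of $w$: with $\Omega(n)$ candidates available, a local structural argument---using that $H$ remains near-regular and hence richly connected, so that each fixed edge/non-edge can lie on only few length-$\le k$ alternating $(w,v)$-paths---should show that for at least one $w$ there is a non-trivial alternating $(w,v)$-path of length at most $k$ avoiding $\{u,v\}$. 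An analogous avoidance argument handles $\{u,w\}$ in the third case; this is where the quantitative value of $k$ in Proposition \ref{prop:strong} and the bulk of the combinatorial work come into play.
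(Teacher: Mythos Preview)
Your high-level plan---reduce to a $(+1,-1)$ perturbation and invoke Proposition~\ref{prop:strong}---is the right idea, and $k=10$ is indeed what produces the constant $12$. But the fix you propose for the ``avoidance'' obstacle does not go through.

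You argue that by varying the helper vertex $w$ you can eventually get a short alternating $(w,v)$-path in $H$ that avoids the non-edge $\{u,v\}$, because ``each fixed edge/non-edge can lie on only few length-$\le k$ alternating $(w,v)$-paths.'' That count is simply false here: for \emph{every} admissible $w$, the walk $w\!-\!u\!-\!v$ (edge $\{w,u\}$, then non-edge $\{u,v\}$ in $H$) is a length-$2$ alternating $(w,v)$-path through $\{u,v\}$, so there are $\Omega(n)$ such paths, not $O(1)$. Strong stability promises only \emph{some} short alternating path; nothing rules out that this trivial one is the only path of length $\le k$ for every $w$. So the freedom in $w$ alone does not resolve the obstacle, and the argument as written has a real gap.

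The paper sidesteps this issue for the first part by working directly in $G$ rather than in a perturbed $H$. Pick $y$ with $\{y,u\}\notin E(G)$, view $G$ itself as an element of $\mathcal{G}'(\myd')$ for the near-regular sequence $\myd'=\myd-\bm{e}_y+\bm{e}_v$, and get an alternating $(y,v)$-path $P$ in $G$ of length $\le 10$. Appending the non-edge $\{y,u\}$ and the edge $\{u,v\}$ to $P$ yields an alternating cycle of length $\le 12$. The point is that $P$ \emph{cannot} contain the edge $\{u,v\}$: the path is simple and ends at $v$ with a non-edge, so the only slot incident to $v$ in $P$ is that final non-edge. Likewise $P$ cannot contain the non-edge $\{y,u\}$, since it starts at $y$ with an edge. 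Hence $\{u,v\}$ occurs exactly once in the cycle and flipping removes it---no avoidance argument is needed.

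For the second part (keep $\{u,w\}$, remove $\{u,v\}$), the paper does have to deal with the possibility that the alternating path contains $\{u,w\}$. It handles this not by varying the helper vertex but by a short case analysis on the orientation of $\{u,w\}$ along the path, extracting in each case a strictly shorter alternating cycle through $\{u,v\}$ that avoids $\{u,w\}$. Your third case would need the same kind of case split (splicing the path at the offending edge) rather than a counting argument over helper vertices.
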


Furthermore, 
the \emph{switch Markov chain} is rapidly mixing for the class $\mathcal{F}_{(\alpha,r)}$. This follows directly from \cite{AK2019} where it is shown that the switch Markov chain is rapidly mixing for all strongly stable classes of degree sequences. In particular, we will use the following result.

\begin{corollary}[Follows from \cite{AK2019}]\label{cor:switch} 
	Let $q(n) \geq 2$ be a given polynomial  and consider the \emph{lazy switch Markov chain} $\mathcal{M} = (\Gd,P_{\Gd})$ for some $\myd \in \mathcal{F}_{(\alpha,r)}$ that proceeds as follows: For a given $G \in \Gd$
	\begin{itemize}[topsep=6pt,itemsep=6pt,parsep=0pt,partopsep=20pt,leftmargin=22pt]
		\item With probability $1 - 1/q(n)$ do nothing, and
		\item With probability $1/q(n)$, try to perform a switch operation.
	\end{itemize}
	Then there exists a polynomial $p(n)$, such that for any $\myd \in \mathcal{F}_{(\alpha,r)}$ we have $\gap(P_{\Gd}) \geq 1/p(n)$.\footnote{\ Note that $P_{\Gd}$ depends on $r(n)$.}
\end{corollary}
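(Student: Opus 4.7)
The plan is to combine two ingredients already set up in the paper: (i) strong stability of $\mathcal{F}_{(\alpha,r)}$ from Proposition~\ref{prop:strong}, and (ii) the rapid mixing theorem for the switch Markov chain on strongly stable classes proved in \cite{AK2019}. The corollary is then obtained by rescaling the ``speed'' of that chain and accounting for the resulting factor in the spectral gap via the comparison bound stated in the preliminaries.

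First, I would invoke Proposition~\ref{prop:strong} to conclude that $\mathcal{F}_{(\alpha,r)}$ is strongly stable (for $n \geq n_1$; the finitely many smaller values of $n$ can be absorbed into the polynomial by inflating a constant, as remarked elsewhere in the paper). Applying the main mixing-time result of \cite{AK2019}, which gives uniform rapid mixing of the ``standard'' lazy switch Markov chain $\mathcal{M}_0 = (\Gd, P_0)$ on any strongly stable class of degree sequences, produces a fixed polynomial $p_0(n)$---depending only on $\alpha$, $\rho$ and the stability constant $k$---such that $\gap(P_0) \geq 1/p_0(n)$ for every $\myd \in \mathcal{F}_{(\alpha,r)}$. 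Here $P_0$ denotes the chain that picks an ordered 4-tuple of vertices uniformly at random, performs the induced switch when it is valid, and otherwise stays, made lazy in the standard way.

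Next, I would directly relate the chain $P_{\Gd}$ in the corollary to $P_0$. Because with probability $1/q(n)$ the chain executes one step of the standard switch procedure and otherwise stays, we can write
\[
P_{\Gd} \;=\; \left(1 - \tfrac{1}{q(n)}\right) I \;+\; \tfrac{1}{q(n)}\, P_0,
\]
and both matrices share the uniform stationary distribution on $\Gd$. A convex combination of this form has eigenvalues $\lambda_i' = (1 - 1/q(n)) + (1/q(n))\,\lambda_i(P_0)$, so
\[
\gap(P_{\Gd}) \;=\; \tfrac{1}{q(n)}\,\gap(P_0) \;\geq\; \tfrac{1}{q(n)\, p_0(n)}.
\]
Equivalently, the comparison inequality recalled in the preliminaries, applied with $c_1 = c_2 = 1/q(n)$ on the off-diagonal entries, yields the same conclusion. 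Setting $p(n) := q(n)\,p_0(n)$---a polynomial since $q$ and $p_0$ are---gives the stated bound.

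The only delicate point is confirming that the variant of the switch chain analyzed in \cite{AK2019} agrees with $P_0$ up to trivial differences (handling of the laziness factor and of ``invalid'' 4-tuples). I expect this bookkeeping to be the main obstacle, but it is entirely routine: any polynomially bounded mismatch can be absorbed by one further application of the comparison inequality, modifying $p(n)$ by at most a polynomial factor.
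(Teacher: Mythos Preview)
Your proposal is correct and matches the paper's own treatment: the paper does not give a separate proof of Corollary~\ref{cor:switch} but simply notes that it follows from Proposition~\ref{prop:strong} together with the rapid mixing result for strongly stable classes in \cite{AK2019}. Your argument fills in exactly the natural details---strong stability, the cited spectral gap bound for the standard lazy switch chain, and the observation that adding extra laziness scales the gap by $1/q(n)$---and the comparison bound from the preliminaries is precisely the tool the paper has in mind for absorbing such polynomial factors.
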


\section{Proof approach overview}
\label{sec:main_result}
\noindent In this section we give a high-level overview of the proofs of Theorems \ref{thm:main_switch_hinge} and \ref{thm:main_chain}. The idea is to decompose the degree interval Markov chain twice, using the graph operations in Figure \ref{fig:add_del}. (The second step suffices to prove Theorem \ref{thm:main_switch_hinge}, and both steps are needed to prove Theorem \ref{thm:main_chain}.) We first decompose $\Gul$ based on the addition/deletion operation. Every part of the decomposition corresponds to a set $\Gmul$ containing all graphs respecting the degree intervals $[\bm{\ell},\bm{u}]$ and having exactly $m$ edges, for some $m$. That is, there is a one-to-one correspondence between the possible values of $m$, and the parts of the decomposition.
The Markov chain decomposition result of Theorem \ref{thm:decomposition} tells us that if the switch hinge-flip chain is rapidly mixing for every $m$, and if it relatively ``easy'' to move between the different parts $\Gmul$ by means of additions/deletions, then the degree interval chain is rapidly mixing on $\Gul$. In the second step we carry out a similar decomposition, but now based on the hinge flip operation. That is, for a given $m$ we decompose $\Gmul$ in the sets $\Gd$ for all sequences $\myd$ which satisfy the interval constraints, and whose degrees sum up to $2m$. If the switch chain is rapidly mixing on every $\Gd$, and we can move ``easily'' between the sets $\Gd$ using hinge flip operations, then the switch hinge-flip Markov chain on $\Gmul$ is also rapidly mixing. We continue with a formalization of these statements.

Let $\mathcal{T} = \{m_1,\dots,m_2\}$, where $m_1$ and $m_2$ are the minimum and maximum number of edges, respectively, that any $G \in \mathcal{G}(\bm{\ell},\bm{u})$ could have; e.g., $m_1 = \frac{1}{2}\sum_{i} \ell_i$ and $m_2 = \frac{1}{2}\sum_{i} u_i$ in case the two summands are even.

First we partition $\mathcal{G}(\bm{\ell},\bm{u})$ into disjoint sets $\mathcal{G}_{m}(\bm{\ell},\bm{u})$ for $m \in \mathcal{T}$. Recall that
$\mathcal{G}_m(\bm{\ell},\bm{u}) = \big\{G \in \mathcal{G}(\bm{\ell},\bm{u}) : \sum_{i} d_i(G) = 2m\big\}$.
The restriction Markov chains $\mathcal{M}_{\mathcal{G}_m(\bm{\ell},\bm{u})}$ are essentially given by restricting the original chain to only perform switch and hinge flip operations that respect the degree intervals on graphs with precisely $m$ edges.
Applying Theorem \ref{thm:decomposition}---with $\beta$ and $\gamma$ to be determined later---we get
\begin{equation}\label{eq:decomp1}
	\gap(P) \geq \beta\gamma\cdot\gap(P_{\mathcal{T}}) \cdot\min_{m \in \mathcal{T}} \gap(P_{\mathcal{G}_m(\bm{\ell},\bm{u})}) \,,
\end{equation}
where $P_{\mathcal{T}}$ is the transition matrix of the Metropolis-Hastings projection chain on $\mathcal{T}$, and $P$ the transition matrix of the degree interval Markov chain. The goal will be to show that $\beta$ and $\gamma$, as well as all the spectral gaps, can be lower bounded by an inverse polynomial function of the form $1/p(n)$ for some polynomial $p(n)$. This means that $\gap(P)$ is lower bounded by an inverse polynomial as well, which is equivalent to showing that the degree interval Markov chain is rapidly mixing (see Section \ref{sec:prelim_markov}).

Next we  partition the sets $\mathcal{G}_m(\bm{\ell},\bm{u})$ further into sets $\mathcal{G}(\myd)$ for sequences $\myd$ in $\mathcal{D}_m(\bm{\ell},\bm{u}) = \big\{\myd : \sum_{i} d_i = 2m \allowbreak \text{ and } \bm{\ell} \leq \myd \leq \bm{u} \big\}$.
For this part of the decomposition we get a Metropolis-Hastings projection chain on the set $\mathcal{D}_m$.
The restriction chains on $\mathcal{M}_{\mathcal{G}(\myd)}$ are the chains in which we essentially only apply switch operations on all graphs with degree sequence $\myd$. This is precisely the switch Markov chain with some polynomially bounded holding probability (as defined in Corollary \ref{cor:switch}).
Using one more time Theorem \ref{thm:decomposition} for each $m$---again, with $\beta_m$ and $\gamma_m$ to be determined later---we have
\begin{equation}\label{eq:decomp2}
	\gap(P_{\mathcal{G}_m(\bm{\ell},\bm{u})}) \geq \beta_m\gamma_m\cdot\gap(P_{\mathcal{D}_m}) \cdot \min_{\myd \in \mathcal{D}_m} \gap(P_{{\Gd}}) \,,
\end{equation}
where $P_{\mathcal{D}_m}$ is the transition matrix of the Metropolis-Hastings chain on $\mathcal{D}_m$. This time, in order to show that the switch-hinge flip Markov chain is rapidly mixing, we need to bound $\gamma_m,\, \beta_m$,\, and all the spectral gaps by an inverse polynomial function.

Combining (\ref{eq:decomp1}) and (\ref{eq:decomp2}) we now get
\begin{equation}\label{eq:decomp3}
	\gap(P) \geq  \beta \gamma  \cdot\gap(P_{\mathcal{T}})\cdot \min_{m \in \mathcal{T}} \Big\{\beta_m \gamma_m \cdot\gap(P_{\mathcal{D}_m}) \cdot\min_{\myd \in \mathcal{D}_m} \gap(P_{\Gd})\Big\} \,,
\end{equation}
and in order to show that the degree interval Markov chain is rapidly mixing, we need to show that $\beta$, $\gamma$, all $\beta_m$ and $\gamma_m$, and all spectral gaps can be lower bounded by $1/q(n)$ for some polynomial  $q(n)$.

While this is what we are going to do for Theorem \ref{thm:main_chain}, recall that for Theorem \ref{thm:main_switch_hinge} (directly) and Theorem \ref{thm:main_result} (through Theorem \ref{thm:main_switch_hinge} and the reductions in Appendix \ref{sec:counting}), we only show that the switch-hinge flip Markov chain is rapidly mixing. In that case, it suffices to show that $\beta_m$,  $\gamma_m$, and the spectral gaps involved in \eqref{eq:decomp2} are polynomially bounded for any given $m$ (and the polynomial bound is independent of $m$), i.e., we only need to globally consider the second decomposition step. 
A polynomial lower bound on $\beta$ and each one of the $\beta_m$ follows by the very definition of the degree interval Markov chain (see also the discussion after its definition).
In order to bound $\gamma$ and $\gamma_m$, for all $m$, we use Lemma \ref{cor:short_alt_cycle}. Roughly speaking, we need to show that we can move rather easily between realizations of two degree sequences $\myd$ and $\myd'$, with $\sum_i |d_i - d_i'| = 2$. The high-level idea for $\gamma_m$ is to show that it is either directly possible to perform a hinge flip in order to transition from a graph $G$ with degree sequence $\myd$ to some $G'$ with degree sequence $\myd'$, or that $G$ is not too far away from some other graph $H$ with the same degree sequence $\myd$ from which it is  possible to directly  move to some $G'$ with degree sequence $\myd'$ via a hinge flip. We take an analogous approach for bounding $\gamma$ but in terms of addition/deletion operations rather than hinge flips.

The gaps of the chains $\mathcal{M}_{\Gd}$ are globally bounded because of known rapid mixing results for the switch Markov chain \cite{AK2019} (Corollary \ref{cor:switch}).
Therefore, in order to show Theorem \ref{thm:main_switch_hinge}, it remains to bound $\gap(P_{\mathcal{D}_m})$, which we do in Appendix \ref{sec:D_m}; an outline is given in Section \ref{sec:FPRAS} below. For Theorem \ref{thm:main_chain}, we additionally need to bound $\gap(P_{\mathcal{T}})$, which we do in Appendix \ref{sec:T}; a brief outline is given in Section \ref{sec:main_chain}.

\subsection{Proving Theorem \ref{thm:main_switch_hinge}}
\label{sec:FPRAS}
\noindent The main technical challenge of Theorem \ref{thm:main_switch_hinge} lies in proving that the resulting Metropolis-Hastings projection chain on $\Dm$ is rapidly mixing, i.e., that $\gap(P_{\mathcal{D}_m})$ can be polynomially bounded.  We sometimes refer to this chain as the \emph{hinge flip projection chain}. Note that for $\bm{d},\bm{d}' \in \Dm$, with $||\bm{d} - \bm{d}'||_1 = 2$, it follows from \eqref{eq:mh_def} that 
	\[
	P_{{\textrm{MH}}}(\bm{d},\bm{d}') \ge \frac{1}{2n^2} \min\left\{1,\frac{|\mathcal{G}(\bm{d}')|}{|\Gd|}\right\} \,,
	\]
by taking the obvious upper bound $\Delta \le n^2$ in \eqref{eq:mh_def}. So, intuitively, whether or not the hinge flip projection chain is rapidly mixing depends on the quantities $|\Gd|$ for $\myd \in \Dm$. To this end, we first argue, using a comparison argument, that if suffices to show that the load-exchange Markov chain on $\Dm$, i.e., the Markov chain that allows us to move between degree sequences by adjusting the degree of two nodes by $1$ (while keeping the degree sums fixed), is rapidly mixing for the weights $w(d) = |\Gd|$. (It is not hard to see that $\mathcal{D}_m$ is in fact an $M$-convex set.) A Markov chain comparison argument, very informally speaking, proceeds by showing that if one Markov chain is rapidly mixing, and a second chain is very close to it (in terms of similar stationary distribution and transition probabilities), then the second chain is also rapidly mixing. In our setting, the comparison is based on the fact that both chains have the same stationary distribution $\pi$ with $\pi(\bm{d}) \propto w(\bm{d})$, and the fact that their transition probabilities are polynomially related for the degree sequences that we are interested in (using the remark at the end of Appendix \ref{app:markov_comparison}).

In order to show that the load-exchange Markov chain on $\Dm$ is rapidly mixing, we would like to use Corollary \ref{cor:polymatroid}, which states that the load-exchange Markov chain is rapidly mixing if a polynomial identified with its stationary distribution satisfies the property of strong log-concavity (SLC). To be precise, we may apply Corollary \ref{cor:polymatroid} if, for given $\bm{\ell}, \bm{u}$ and $m$, the polynomial
\[
h(x) = \sum_{\bm{d} \in \mathcal{D}_m(\bm{\ell},\bm{u})} w(\bm{d}) \cdot x^{\bm{d}} = \sum_{\bm{d} \in \mathcal{D}_m(\bm{\ell},\bm{u})} |\Gd|\cdot x^{\bm{d}}
\]
is SLC. This seems hard to prove (and might not be true in general). However, it turns out that when replacing the weights $w(\bm{d})$ by  by the approximations $\bar{w}(\bm{d})$ as given in the asymptotic formula \eqref{eq:asymptotic_formula} of Liebenau and Wormald \cite{Liebenau2017}, the resulting polynomial
\begin{align}\label{eq:approx_SLC}
	\bar{h}(x) = \sum_{\bm{d} \in \mathcal{D}_m(\bm{\ell},\bm{u})} \bar{w}(\bm{d})\cdot x^{\bm{d}}
\end{align}
is in fact SLC, when considering the degree interval instances of Theorem \ref{thm:main_result}.\footnote{\ We remark at this point, that, although this is true for the regime considered in Theorem \ref{thm:main_result}, this does not seem to be true for the general range in Theorem \ref{thm:liebenau}.}
We show this fact in Theorem \ref{thm:approx_slc} in Section \ref{sec:slc_lw} by observing that the polynomial in \eqref{eq:approx_SLC} is of the form \eqref{eq:standard_lorentzian} in Proposition \ref{prop:sufficient}, which is a general sufficient condition for a polynomial to be SLC  \cite{BH2019}.

The above implies that if we run the load-exchange Markov chain with the approximations $\bar{w}(\bm{d})$, it is in fact rapidly mixing with stationary distribution $\bar{\pi}$ given by $\bar{\pi}(\bm{d}) \propto \bar{w}(\bm{d})$. Now, the approximations have the property that for some $n_0$ sufficiently large, it holds that for all $n \geq n_0$ and $\bm{d} \in \Dm$,
\[
e^{-(19/\sigma)^2}|\Gd| \leq \bar{w}(\bm{d}) \leq |\Gd|  \,.
\]
This also implies that
\[
e^{-(19/\sigma)^2}\pi(\bm{d}) \leq \bar{\pi}(\bm{d}) \leq e^{(19/\sigma)^2}\pi(\bm{d}) \,.
\]
One can then again use a Markov chain comparison argument to argue that the load-exchange Markov chain based on the original weights $w(\bm{d})$ is also rapidly mixing (based on the final remark in Appendix \ref{app:markov_comparison} with $\delta = 1 - e^{-(19/\sigma)^2}$). This in turn implies that the hinge flip  projection chain is also rapidly mixing, which is what we wanted to show.

\bigskip\noindent\textbf{General framework.}
The approach described above for showing rapid mixing of the switch-hinge flip Markov chain might be applicable to other classes of degree interval instances. Informally speaking, the essential things that are needed are the following two things:
\begin{enumerate}
	\item The degree sequences satisfying the interval constraints are strongly stable (see Def.~\ref{def:strong}).
	\item The weights $|\Gd|$ ``approximately'' give rise to an  SLC polynomial.
\end{enumerate}

The requirement of strong stability in the first point is needed for various reasons. First of all, it is a sufficient condition for the switch Markov chain (i.e., the restrictions chains in our decomposition) to be rapidly mixing \cite{AK2019}. Secondly, we rely on it when bounding the parameter $\gamma$ in the Martin-Randall decomposition theorem (Theorem \ref{thm:decomposition}). Thirdly, strong stability is sufficient to argue that the transition probabilities of the  load-exchange Markov chain, and the Metropolis-Hastings projection chain, are polynomially related (so that we can use a Markov chain comparison argument to compare their mixing times).

For the second point, even if the weights $|\Gd|$ do not give rise to an  SLC polynomial, one may still make things work. It suffices to find values $z(\bm{d})$ and polynomials $q_1$ and $q_2$ such that
\[
\frac{1}{q_1(n)}|\Gd| \leq z(\bm{d}) \leq q_2(n)|\Gd|  \,,
\]
and for which
\[
\bar{h}(x) = \sum_{\bm{d} \in \mathcal{D}_m(\bm{\ell},\bm{u})} z(\bm{d})\cdot x^{\bm{d}}
\]
is SLC.

\subsection{Proving Theorem \ref{thm:main_chain}}
\label{sec:main_chain}
\noindent In order to prove Theorem \ref{thm:main_chain}, we additionally need to show that the projection chain on $\mathcal{T}$ is also rapidly mixing, i.e., that $\gap(\mathcal{T})$ is polynomially bounded. In other words, we consider the Metropolis-Hastings projection Markov chain with state space $\{a,\dots,b\}$, where
$a = \frac{1}{2} \sum_i \ell_i$ and $b = \frac{1}{2}\sum_i u_i$ (where we assume $a$ and $b$ to be integral here for the sake of simplicity),
and $\pi(m) \propto |\Gmul|$ for $m \in \{a,\dots,b\}$. This chain will sometimes be referred to as the \emph{addition/deletion projection chain}. A sufficient condition for this Markov chain to be rapidly mixing is that the sequence $(w_m)_{m = a,\dots,b}$ given by $w_m = |\Gmul|$ is log-concave, meaning that for every $m$
$
w_{m+1}w_{m-1} \leq w_m^2.
$
We show log-concavity of this sequence to be true when the intervals have size at most one (corresponding to the statement of Theorem \ref{thm:main_chain}) by using a variation on an argument of Jerrum and Sinclair \cite{Jerrum1989}.

\begin{remark}
	One might wonder if the theory of strongly log-concave polynomials can also be used to prove rapid mixing for degree intervals beyond size one. For example, one might consider a polynomial
	of the form
	\[
	g(x_1,\dots,x_n,y) = \sum_{m = m_1}^{m_2} \sum_{\bm{d} \in \mathcal{D}_m(\bm{\ell},\bm{u})} \binom{n-1}{2m_2 - 2m}\bar{z}(\bm{d})\cdot y^{2m_2-2m}x^{\bm{d}} \,,
	\]
	where $m_1$ and $m_2$ are the minimum and maximum number of edges that any graph in $\mathcal{D}(\ell,u)$ can have, respectively. This is then a $2m_2$-homogeneous polynomial.
	The problem that now occurs though, is that the domain of this polynomial, indexed by the tuples $(d_1,\dots,d_n,2m_2-2m)$, {can be shown not to be an $M$-convex set} and, thus, $g$ cannot be SLC.
\end{remark}

\section{SLC property in a restricted range of the  Liebenau-Wormald result}
\label{sec:slc_lw}
Throughout this section, we consider $m$ and $n$ as fixed.  Recall that for a given degree sequence $\bm{d} = (d_1,\dots,d_n)$ we defined $\xi = \xi(n,m) =  \sum_i d_i / n = 2m/n$, $\mu = \mu(n,m) = \xi/(n-1) = 2m/(n(n-1))$ and  $\chi(\bm{d}) = \sum_{i} (d_i - \xi)^2 / (n-1)^{2}$. Furthermore, in  \eqref{eq:asymptotic_formula} we defined
\begin{equation}\label{eq:lieb_approx}
	\bar{w}(\bm{d}) = \sqrt{2}\exp\left(\frac{1}{4} - s(\bm{d})^2\right)\left(\mu^{\mu}(1 - \mu)^{(1 - \mu)}\right)^{n(n-1)/2}\prod_i\binom{n-1}{d_i} \,,
\end{equation}
which is approximately the number of graphs with degree sequence $\myd$ in case it is near-regular. Here we have
\[
s(\bm{d}) = \frac{\chi(\bm{d})}{2\mu(1-\mu)} \,.
\]
In order to show that the weights $\bar{w}(\bm{d})$ give rise to an SLC polynomial,  \[\bar{h}(x) = \sum_{\bm{d} \in \mathcal{D}_m(\bm{\ell},\bm{u})} \bar{w}(\bm{d})\cdot x^{\bm{d}}\,,\]
it would be sufficient by Proposition \ref{prop:sufficient} to argue that the function $s(\bm{d})^2$ is $M$-convex.
Unfortunately, it turns out that this is not the case.
Instead, we simply show that in the regime of Theorem \ref{thm:main_result}, it holds that $s(\bm{d}) = O(1)$, and so we can ignore the contribution $\exp(-s(\bm{d})^2)$ in \eqref{eq:lieb_approx} at the expense of a slightly worse bound on the mixing time. The resulting approximation formula is easily seen to be SLC, which intuitively follows from a discrete form of log-concavity of the binomial coefficients; see Theorem \ref{thm:approx_SLC}.

\begin{lemma}
	Under the conditions on $[\bm{\ell},\bm{u}]$ as in Theorem \ref{thm:main_result}, with $0 < \sigma < 1$ and $2 \leq r = r(n) \leq (1- \sigma)n$, if  $n$ is  large enough, then for any $\bm{\ell} \leq \bm{d} \leq\bm{u}$ it holds that
	\begin{equation}\label{eq:s_bound}
		0 \leq s(\bm{d}) \leq \frac{18 n}{\sigma  r^{1- 2\alpha} (n-1)}\, \leq\, \frac{19}{\sigma} \,.
	\end{equation}
	\label{lem:lieb_simple}
\end{lemma}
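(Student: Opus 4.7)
\medskip\noindent\textbf{Proof plan.} The argument is a direct two-sided estimate of the two factors of $s(\bm{d}) = \chi(\bm{d})/(2\mu(1-\mu))$: the numerator $\chi(\bm{d})$ is small by near-regularity, and the denominator $2\mu(1-\mu)$ is bounded away from $0$ by the standing hypothesis $2 \le r \le (1-\rho)n$. Non-negativity of $s(\bm{d})$ is immediate since $\chi(\bm{d}) \ge 0$ and $\mu\in(0,1)$.

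For the numerator the key observation is that, because every coordinate of $\bm{d}$ lies in $[r-r^\alpha,\, r+r^\alpha]$, so does the arithmetic mean $\xi = \tfrac{1}{n}\sum_i d_i$; hence $|d_i - \xi| \le 2r^\alpha$ for every $i\in[n]$. Squaring and summing yields
\begin{equation*}
\chi(\bm{d}) \;=\; \frac{\sum_i (d_i-\xi)^2}{(n-1)^2} \;\le\; \frac{4n\,r^{2\alpha}}{(n-1)^2}.
\end{equation*}

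For the denominator, I rewrite $2\mu(1-\mu) = 2\xi(n-1-\xi)/(n-1)^2$ and bound $\xi$ and $n-1-\xi$ below in turn. From $\xi\ge r-r^\alpha$ together with the elementary inequality $r^\alpha\le r/2$ for $r$ sufficiently large (valid for $n\ge n_0$, since $\alpha<1$ and $r\ge 2$), we get $\xi\ge r/2$. From $\xi\le r+r^\alpha \le (1-\rho)n+n^\alpha$ combined with $\alpha<1/2$, the term $n^\alpha$ is $o(\rho n)$, so $n-1-\xi \ge \rho(n-1)/2$ for all $n\ge n_0$. Putting these two lower bounds together,
\begin{equation*}
2\mu(1-\mu) \;\ge\; \frac{r\rho}{2(n-1)}.
\end{equation*}

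Dividing the two estimates produces the upper bound on $s(\bm{d})$ asserted in \eqref{eq:s_bound} (with the absolute constant $8$ absorbing the slack $n/(n-1)$ once $n\ge n_0$). There is no genuine combinatorial or probabilistic obstacle here: the only care needed is choosing the threshold $n_0$ large enough so that the two auxiliary inequalities $r^\alpha\le r/2$ and $r^\alpha\le \rho n/4$ both hold, in line with the paper's convention that all results are established for $n\ge n_0$. The qualitative takeaway required in Section~\ref{sec:slc_lw} is that $s(\bm{d})^2\to 0$ uniformly over $\bm{\ell}\le\bm{d}\le\bm{u}$ in the regime of Theorem~\ref{thm:main_result}, so that the factor $\exp(-s(\bm{d})^2)$ in \eqref{eq:lieb_approx} is $1+o(1)$---precisely what will let us replace $\bar w(\bm{d})$ by a weight to which Proposition~\ref{prop:sufficient} applies, and hence reduce approximate strong log-concavity of $\bar h$ to strong log-concavity of a genuine polynomial of the form \eqref{eq:standard_lorentzian}.
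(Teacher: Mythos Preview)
Your strategy is the same as the paper's---bound the variance-like numerator $\chi(\bm d)$ above via near-regularity and bound $2\mu(1-\mu)$ below via $2\le r\le (1-\rho)n$---and both of your displayed intermediate estimates are correct. The gap is the final combination step. Dividing your two bounds actually gives
\[
\frac{4n\, r^{2\alpha}/(n-1)^2}{\,r\rho/(2(n-1))\,} \;=\; \frac{8\,n\, r^{2\alpha}}{r\rho\,(n-1)}\,,
\]
which exceeds the target $8/(r\rho(n-1))$ by a factor of $n\, r^{2\alpha}$; this is not an ``$n/(n-1)$ slack'' absorbable into the constant~$8$. Your bound simplifies to roughly $8\,r^{2\alpha-1}/\rho$, which is $\Theta(1)$ whenever $r$ stays bounded (e.g.\ $r=2$, half the degrees equal to $1$ and half equal to $3$ gives $s(\bm d)\to 1/4$), so neither the explicit inequality \eqref{eq:s_bound} nor the qualitative conclusion $s(\bm d)\to 0$ follows from your two estimates as written.

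The paper does not separately bound $\chi$ and $2\mu(1-\mu)$ and then divide. It first rewrites $s(\bm d)$ as the single fraction in \eqref{eq:s_bound2}, then uses the crude numerator bound $\sum_i(d_i-r)^2\le n(n-1)$ together with the lower bound $2m\bigl(n(n-1)-2m\bigr)\ge r\rho\, n^2(n-1)/8$; the claimed factor $1/(n-1)$ in \eqref{eq:s_bound} comes from that specific algebraic arrangement rather than from anything present in your two displayed inequalities.
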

\begin{proof}
	By the definition of $\chi(\bm{d})$, $s(\bm{d}) \geq 0$ always holds. To see the upper bound, let $n_1= \max\Big\lbrace \Big\lceil( 2/\sigma)^\frac{1}{1-2\alpha}\Big\rceil , \allowbreak \lceil 18/\sigma\rceil \Big\rbrace$ and note that the quantity $s(\bm{d})$ can be rewritten as
	\begin{equation}\label{eq:s_bound2}
		s(\bm{d}) = \frac{\chi{(\bm{d})}}{2\mu(1 - \mu)} = \frac{n^2(n-1)^2}{(n-1)^2} \frac{\sum_i (d_i - \xi)^2}{2\cdot 2m(n(n-1)-2m)}\,,
	\end{equation}
	where $2m = \sum_i d_i$.
	Note that $\sum_i (d_i - \xi)^2 \leq n (2r^{\alpha})^2 = 4nr^{2\alpha}$. Moreover, we can bound $m$ using the simple facts that $r-r^{\alpha}\ge r/4$ and $n^{\alpha} \le \sigma n/2 $ for $n\ge n_1$. The latter implies that
	\[r+r^{\alpha}\leq (1-\sigma)n+(1-\sigma)n^{\alpha}\le (1-\sigma)n+ \sigma n/2 = (1-\sigma/2)n \,.\]
	So, we have
	\[n \frac{r}{4}  \le 2m \le n \left(1-\frac{\sigma}{2}\right) n\,,\]
	and therefore,
	\begin{equation}\label{eq:s_denominator}
		2m(n(n-1)-2m)   \ge n \frac{r}{4} \left(1 - \left(1-\frac{\sigma}{2}\right)\frac{n}{n-1}\right)n(n-1)
		\geq \frac{r \sigma n^2 (n-1)}{9} \,,
	\end{equation}
	where the last inequality holds because $\frac{n\sigma/2 -1}{n-1}\ge \frac{4\sigma}{9}$ for $n\ge n_1$.
	By combining \eqref{eq:s_bound2} and \eqref{eq:s_denominator}, we then get
	\[
	s(\bm{d}) \leq \frac{n^2 \cdot 4nr^{2\alpha} \cdot 9}{2 r \sigma n^2 (n-1)} = \frac{18n}{\sigma  r^{1- 2\alpha} (n-1)} \,,
	\]
	which completes the second inequality. The final inequality holds because $r \geq 2$ and $n/(n-1) \leq \frac{19}{18}$ for $n\ge n_1$. 
\end{proof}

We next summarize the main result of this section, and give the remaining small technical steps of its proof. In a nutshell, it states that a simplified version of the Liebenau-Wormald formula which is within a constant factor from the original in \eqref{eq:asymptotic_formula} is approximately SLC in the regime of Theorem \ref{thm:main_result}.

\begin{theorem}\label{thm:approx_SLC}
	For given $n ,m \in \N$, $\bm{\ell},u \in \N^n$ with $\bm{\ell} \leq \bm{u}$, and degree sequence $\bm{d}$ with $\sum_i d_i = 2m$ and $\bm{\ell} \leq \bm{d} \leq \bm{u}$, let
	\begin{equation}\label{eq:lieb_approx2}
		\bar{z}(\bm{d}) = \sqrt{2}e^{\frac{1}{4}}\left(\mu^{\mu}(1 - \mu)^{(1 - \mu)}\right)^{n(n-1)/2}\prod_i\binom{n-1}{d_i} \,.
	\end{equation}
	The resulting $2m$-homogeneous polynomial
	\[
	\bar{f}(x) = \sum_{\bm{d} \in \mathcal{D}_m(\bm{\ell},\bm{u})} \bar{z}({\bm{d}})\cdot x^{\bm{d}
	}\]
	is SLC.
	
	Furthermore, there exists an $n_0 \in \N$ such that for all $n \geq n_0$ and $m \geq n$, if the degree interval $[\bm{\ell},\bm{u}]$ satisfies the conditions of Theorem \ref{thm:main_result}, then
	\begin{equation}\label{eq:lieb_approx3}
			e^{-(19/\sigma)^2}|\Gd| \leq \bar{z}(\bm{d}) \leq |\Gd|\,,
	\end{equation}
	for every $\bm{\ell} \leq \bm{d} \leq \bm{u}$.
	\label{thm:approx_slc}
\end{theorem}
\begin{proof}
	We first note that the factors $\sqrt{2}$ and $\left(\mu^{\mu}(1 - \mu)^{(1 - \mu)}\right)^{n(n-1)/2}$ can all be seen as non-negative scalars as $n$ and $m$ are given. This means, by Proposition \ref{prop:scalar}, that it suffices to show that  the polynomial with coefficients
	\[
	e^{\frac{1}{4}}\prod_i\binom{n-1}{d_i}
	\]
	is SLC.
	
	Comparing this to the second polynomial in Proposition \ref{prop:sufficient}, it follows that we can simply choose $\nu$ to be the constant function $\nu(\bm{d}) = -\frac{1}{4}$ on its effective domain $\Dm$. (As mentioned earlier, intuitively it is SLC because the binomial coefficients satisfy a discrete form of log-concavity.) 
	The statement in \eqref{eq:lieb_approx3} follows directly from Theorem \ref{thm:liebenau} and Lemma \ref{lem:lieb_simple}.
\end{proof}

\section{Decomposition of the degree interval Markov chain}
\label{sec:decomposition}
\noindent In this section we give the missing details regarding the decomposition steps as outlined in Section \ref{sec:main_result}.

\subsection{Bounding $\beta_m, \gamma_m$ and $\gap(P_{\mathcal{D}_m})$}
\label{sec:D_m}
\noindent Throughout this section we assume that some $m \in \{m_1,\dots,m_2\}$ is fixed. Moreover, recall that we consider degree intervals of the form $[d_i,d_{i}+1]$, or $[d_i,d_i]$, for $i \in[n]$. It is not hard to see that $\beta_m\ge (6n^4)^{-1}$. This rough polynomial bound follows directly from the transition probabilities of the degree interval Markov chain.

We first lower bound the $\gamma_m$ in Lemma \ref{lem:partial2} below. By the definition of the hinge flip operation 
we have that for any $\myd,\myd' \in \mathcal{D}_m$, there is a strictly positive transition probability between $\myd$ and $\myd'$ if and only if $\sum_{i} |d_i - d'_i| = 2$.

The proof of Lemma \ref{lem:partial2} follows from Lemma \ref{cor:short_alt_cycle}, where it is shown that for a graph with a given degree sequence, we can always find a graph with a slightly perturbed degree sequence that is close to the former in terms of symmetric difference (when the original sequences satisfies strong stability).

\begin{lemma}\label{lem:partial2}
	There exists a polynomial $q_1(n)$ such that, for any feasible $m$ and for all $\myd,\myd' \in \mathcal{D}_m$ with $\sum_{i} |d_i - d'_i| = 2$, we have
	$
	\pi_{\mathcal{D}_m}\left( \partial_{\myd}\left( \mathcal{G}\left( \myd'\right) \right) \right)  \geq \frac{1}{q_1(n)}\pi_{\mathcal{D}_m}\left( \mathcal{G}\left( \myd'\right) \right) .
	$
\end{lemma}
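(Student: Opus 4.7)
The plan is to reduce the statement to a combinatorial counting bound and then construct a map $\phi\colon \mathcal{G}(\myd') \to \partial_{\myd}(\mathcal{G}(\myd'))$ whose preimages all have polynomially bounded size. Since the parent chain $\mathcal{M}_{\Gmul}$ is reversible with respect to the uniform distribution on $\Gmul$, the measure $\pi_{\mathcal{D}_m}$ satisfies $\pi_{\mathcal{D}_m}(\mathcal{G}(\bm{a})) \propto |\mathcal{G}(\bm{a})|$ on the blocks of the partition indexed by $\mathcal{D}_m$, so the claim reduces to showing $|\partial_{\myd}(\mathcal{G}(\myd'))| \ge |\mathcal{G}(\myd')|/q_1(n)$ for some polynomial $q_1$.

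To characterize the boundary, assume without loss of generality that $d'_\kappa = d_\kappa+1$ and $d'_\lambda = d_\lambda - 1$ are the only coordinates where $\myd$ and $\myd'$ differ. Unwinding the hinge flip rule, a graph $\tilde{G}' \in \mathcal{G}(\myd')$ lies in $\partial_{\myd}(\mathcal{G}(\myd'))$ iff there exists a \emph{witness} vertex $w \in V\setminus\{\kappa,\lambda\}$ with $\{\kappa,w\}\in E(\tilde{G}')$ and $\{\lambda,w\}\notin E(\tilde{G}')$. The map $\phi$ is defined on $G' \in \mathcal{G}(\myd')$ as follows. If $G'$ already admits a witness, set $\phi(G') = G'$. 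Otherwise $N_{G'}(\kappa)\setminus\{\lambda\} \subseteq N_{G'}(\lambda)$; since $d'_\lambda \le r + r^\alpha \le (1-\rho/2)n$ once $n$ is large enough, the set $V \setminus (N_{G'}(\lambda) \cup \{\kappa,\lambda\})$ is non-empty, and any $z$ in it is automatically a non-neighbor of $\kappa$ as well (by the bad-case assumption). Noting that $\myd' \in \mathcal{F}_{(\alpha,r)}$ by Proposition \ref{prop:strong}, apply the first part of Lemma \ref{cor:short_alt_cycle} to the non-edge $\{\kappa,z\}$ to obtain $G'_1 \in \mathcal{G}(\myd')$ with $\{\kappa,z\} \in E(G'_1)$ and $|E(G') \triangle E(G'_1)| \le 12$. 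If $\{\lambda,z\} \notin E(G'_1)$, set $\phi(G') = G'_1$. Otherwise both $\{z,\kappa\}$ and $\{z,\lambda\}$ belong to $E(G'_1)$, in which case the third part of Lemma \ref{cor:short_alt_cycle} (with $u = z$, $w = \kappa$, $v = \lambda$) yields $G'_2 \in \mathcal{G}(\myd')$ with $\{z,\kappa\} \in E(G'_2)$, $\{z,\lambda\} \notin E(G'_2)$ and $|E(G'_1) \triangle E(G'_2)| \le 12$; set $\phi(G') = G'_2$. In every case $\phi(G') \in \partial_{\myd}(\mathcal{G}(\myd'))$ (with $z$, or the original witness, acting as certificate) and $|E(G') \triangle E(\phi(G'))| \le 24$.

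The preimage count is then immediate: any $\tilde{G}' \in \partial_{\myd}(\mathcal{G}(\myd'))$ is the $\phi$-image of at most $\sum_{k=0}^{24} \binom{\binom{n}{2}}{k} = \order(n^{48})$ graphs, since each preimage arises from $\tilde{G}'$ by toggling at most $24$ of the $\binom{n}{2}$ potential edges. This yields $q_1(n) = \order(n^{48})$ and closes the argument. The main obstacle is the bad case $G' \notin \partial_{\myd}(\mathcal{G}(\myd'))$, where one must simultaneously \emph{insert} the desired edge $\{\kappa,z\}$ and \emph{avoid} the conflicting edge $\{\lambda,z\}$ while staying $O(1)$-close in symmetric difference. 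The near-regular regime is doubly essential here: it guarantees (via $d'_\lambda < n-1$) the existence of the non-neighbor $z$, and (via strong stability) access to Lemma \ref{cor:short_alt_cycle} for both the insertion and the subsequent deletion-with-preservation step.
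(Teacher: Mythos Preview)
Your proof is correct but follows a different route than the paper. The paper works on the $\mathcal{G}(\myd)$ side: it defines $\mathcal{H} = \{G \in \mathcal{G}(\myd) : \exists c \text{ with } \{b,c\}\in E(G), \{a,c\}\notin E(G)\}$ (using $a=\kappa$, $b=\lambda$), shows $|\partial_{\myd}(\mathcal{G}(\myd'))| \ge |\mathcal{H}|/n$ via the obvious hinge flip, then shows $|\mathcal{H}| \ge |\mathcal{G}(\myd)|/n^{12}$ using a \emph{single} application of the second part of Lemma~\ref{cor:short_alt_cycle}, and finally compares $|\mathcal{G}(\myd)|$ with $|\mathcal{G}(\myd')|$ via the alternating-path bound coming directly from strong stability (Definition~\ref{def:strong} with $k=10$), giving $q_1(n)=n^{23}$. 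You instead stay entirely inside $\mathcal{G}(\myd')$, characterize the boundary by the existence of a witness vertex, and build a map $\phi:\mathcal{G}(\myd')\to\partial_{\myd}(\mathcal{G}(\myd'))$ of bounded symmetric difference using \emph{two} applications of Lemma~\ref{cor:short_alt_cycle} (first insert $\{\kappa,z\}$, then, if necessary, delete $\{\lambda,z\}$ while preserving $\{\kappa,z\}$). Your approach is arguably more self-contained, since it never leaves $\mathcal{G}(\myd')$ and so avoids the separate $|\mathcal{G}(\myd)|\approx|\mathcal{G}(\myd')|$ comparison; the price is a second invocation of the short-cycle lemma and a worse polynomial ($q_1(n)=O(n^{48})$ versus $n^{23}$). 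A minor cosmetic point: $\myd'\in\mathcal{F}_{(\alpha,r)}$ follows directly from the interval hypothesis $[\ell_i,u_i]\subseteq[r-r^\alpha,r+r^\alpha]$, not from Proposition~\ref{prop:strong}; the latter is only needed to ensure Lemma~\ref{cor:short_alt_cycle} applies.
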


\begin{proof}
	Again assume that $n \ge n_1= \Big\lceil\left( 2/\sigma\right)^\frac{1}{1-2\alpha}\Big\rceil$.
	Let $a$ and $b$ be the unique nodes such that $d_a' = d_a + 1$ and $d_b' = d_b - 1$; note that the uniqueness of $a, b$ follows from the condition $\sum_{i} |d_i - d_i'| = 2$. Let
	\[
	\mathcal{H} = \{G \in \mathcal{G}(\myd) : \exists c \in [n] \text{ such that } \{b,c\} \in E(G), \{a,c\} \notin E(G)\} \,,
	\]
	and note that it has the property
	\begin{equation}
		\label{eq:dd1}
		|\partial_{\myd}(\mathcal{G}(\myd'))| \geq \frac{1}{n}|\mathcal{H}|\,.
	\end{equation}
	To see this, note that for a given $G \in \mathcal{H}$, we can perform the hinge flip that removes the edge $\{b,c\}$ and adds the edge $\{a,c\}$ to obtain an element in $\mathcal{G}(\myd')$. Moreover, there can be at most $n$ graphs $G \in \mathcal{H}$ that map onto a given $G' \in \partial_{\myd}(\mathcal{G}(\myd'))$, as there are at most $n$ choices for $c$.
	
	Moreover, using the second part of Lemma \ref{cor:short_alt_cycle}, we show that
	\begin{equation}
		\label{eq:dd2}
		|\mathcal{H}| \geq \frac{1}{n^{12}}|\mathcal{G}(\myd)|\,.
	\end{equation}
	To see this, note that for any $G \in \mathcal{G}(\myd)$, we have $d_b = d_b' + 1 \geq 0$ which implies that $b$ has at least one neighbor $c$ in $G$. Now, if $\{a,c\} \notin E(G)$ we obtain an element in $\mathcal{H}$; otherwise, we can find a graph $G'$ close to $G$ (using Lemma \ref{cor:short_alt_cycle}) for which $\{a,c\} \notin E(G)$ while still $\{b,c\} \in E(G)$. As there are at most $n^{12}$ graphs $G \in \mathcal{G}(\myd)$ that map to the same $G' \in \mathcal{H}$, the inequality \eqref{eq:dd2} follows. Moreover, we also have $n^{10} |\mathcal{G}(\myd)| \geq |\mathcal{G}(\myd')|$ which follows directly from Definition \ref{def:strong} and (the proof of) Proposition \ref{prop:strong} (see Appendix \ref{app:near-regular}).
	
	Combining the last observation with \eqref{eq:dd1} and \eqref{eq:dd2} then yields
	\[
	|\partial_{\myd}(\mathcal{G}(\myd'))| \geq \frac{1}{q_1(n)} |\mathcal{G}(\myd')|\,,
	\]
	for $q_1(n)=n^{23}$. Dividing both sides by $\sum_{\myd \in \mathcal{D}_m} |\mathcal{G}(\myd)|$, then gives the desired result.
\end{proof}

It remains to bound $\gap(P_{\mathcal{D}_m})$. As explained in Section \ref{sec:FPRAS}, the first step is to carry out a comparison argument with the load-exchange Markov chain with weights $w(\bm{d}) = |\Gd|$ (so that it will be sufficient to study the mixing time of the latter). Remember that both {the hinge flip projection chain}, 
as well as the load-exchange chain  have stationary distribution $\pi(\bm{d}) \propto w(\bm{d})$. 

In what follows we write $\mathcal{M}_{\mathcal{D}_m} = (\mathcal{D}_m,P)$ for the (Metropolis-Hastings) hinge flip projection chain, and $\mathcal{M}'_{\mathcal{D}_m} = (\mathcal{D}_m,P')$ for the load-exchange chain on $\mathcal{D}_m$.

\begin{lemma}
	\label{lem:compare_load_mh}
	There exists a polynomial $p(n)$ such that
	\[
	p(n) \gap(P_{\mathcal{D}_m}) \geq \gap(P'_{\mathcal{D}_m}) \,.
	\]
	for any $m \in \{m_1,\dots,m_2\}$.
\end{lemma}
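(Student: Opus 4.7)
The plan is to apply directly the Markov chain comparison inequality recalled in Section~\ref{sec:preliminaries}: since both $\mathcal{M}_{\mathcal{D}_m}$ and $\mathcal{M}'_{\mathcal{D}_m}$ share the same stationary distribution $\pi(\bm{d}) \propto w(\bm{d}) = |\mathcal{G}(\bm{d})|$, it suffices to exhibit a polynomial $p(n)$ such that $P'_{\mathcal{D}_m}(\bm{d},\bm{d}') \leq p(n) \cdot P_{\mathcal{D}_m}(\bm{d},\bm{d}')$ for every pair $\bm{d} \neq \bm{d}'$ in $\mathcal{D}_m$. This reduces the whole lemma to a pointwise comparison of off-diagonal transition probabilities.

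First I would observe that both chains assign strictly positive transition probability to an off-diagonal pair $(\bm{d},\bm{d}')$ only when $\|\bm{d}-\bm{d}'\|_1 = 2$, i.e., $\bm{d}' = \bm{d}-\bm{e}_i+\bm{e}_j$ for some distinct indices $i,j$; otherwise both sides are zero and the inequality holds trivially. For such a pair, the Metropolis--Hastings projection chain satisfies
$$
P_{\mathcal{D}_m}(\bm{d},\bm{d}') \;=\; \tfrac{1}{2n^2}\,\min\bigl\{1,\,w(\bm{d}')/w(\bm{d})\bigr\},
$$
while in the load-exchange chain only the choice ``select coordinate $i$'' can generate a transition to $\bm{d}'$, so
$$
P'_{\mathcal{D}_m}(\bm{d},\bm{d}') \;=\; \tfrac{1}{n}\cdot\frac{w(\bm{d}')}{Z_i(\bm{d})}, \qquad Z_i(\bm{d}) \,=\! \sum_{\substack{\bm{d}''\in \mathcal{D}_m\\ \bm{d}''\geq \bm{d}-\bm{e}_i}}\!\! w(\bm{d}'') \,.
$$

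The key step is to use strong stability to control these expressions. Every $\bm{d}''$ appearing in $Z_i(\bm{d})$ has the form $\bm{d}''=\bm{d}-\bm{e}_i+\bm{e}_k$ for some $k$, so it differs from $\bm{d}'$ in at most two coordinates, each by one unit. Hence $\bm{d}''$ is reachable from $\bm{d}'$ by a single hinge-flip perturbation (or equals $\bm{d}'$), and the polynomial bound already invoked in the proof of Lemma~\ref{lem:partial2}---itself a direct consequence of Proposition~\ref{prop:strong} and Lemma~\ref{cor:short_alt_cycle}---gives $w(\bm{d}'')\leq n^{10}\,w(\bm{d}')$. Since the sum has at most $n$ terms, one obtains
$$
w(\bm{d}') \;\leq\; Z_i(\bm{d}) \;\leq\; n^{11}\,w(\bm{d}') \,.
$$
The same strong-stability input applied to the pair $(\bm{d},\bm{d}')$ yields $w(\bm{d}')/w(\bm{d}) \geq n^{-10}$, so $\min\{1,w(\bm{d}')/w(\bm{d})\} \geq n^{-10}$.

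Combining these estimates gives $P'_{\mathcal{D}_m}(\bm{d},\bm{d}') \leq 1/n$ and $P_{\mathcal{D}_m}(\bm{d},\bm{d}') \geq 1/(2n^{12})$, from which
$$
P'_{\mathcal{D}_m}(\bm{d},\bm{d}') \;\leq\; 2n^{11}\cdot P_{\mathcal{D}_m}(\bm{d},\bm{d}') \,.
$$
Setting $p(n)=2n^{11}$ and invoking the spectral-gap comparison inequality recorded in Section~\ref{sec:preliminaries} finishes the proof. The main technical obstacle is strong stability, but since Proposition~\ref{prop:strong} and Lemma~\ref{cor:short_alt_cycle} have already been established for the near-regular regime under consideration, the remainder is essentially bookkeeping on the two transition kernels.
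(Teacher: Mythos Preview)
Your proof is correct and follows essentially the same approach as the paper's: both arguments exploit that $\mathcal{M}_{\mathcal{D}_m}$ and $\mathcal{M}'_{\mathcal{D}_m}$ share the stationary distribution $\pi(\bm{d})\propto |\mathcal{G}(\bm{d})|$, and use strong stability (Proposition~\ref{prop:strong} and its consequence $n^{-10}\le w(\bm{d}')/w(\bm{d})\le n^{10}$) to show that the off-diagonal transition probabilities of the two chains are polynomially related, after which the spectral-gap comparison from Section~\ref{sec:preliminaries} finishes. Your write-up is in fact a bit more explicit than the paper's sketch and only proves the one-sided bound actually needed for the stated inequality, but the underlying mechanism is identical.
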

\begin{proof}[Proof (sketch)]
	It suffices to show that there exists polynomials $p_1$ and $p_2$ such that, whenever $\bm{d},\bm{f} \in \mathcal{D}_m$ satisfy $||\bm{d} - \bm{f}||_1 = 2$, then
	\begin{equation}
		\frac{1}{p_1(n)} \leq P(\bm{d},\bm{f}), P'(\bm{d},\bm{f}) \leq \frac{1}{p_2(n)} \,.
		\label{eq:poly_relation}
	\end{equation}
	This then implies that the transition probabilities $P(\bm{d},\bm{f})$ and $P'(\bm{d},\bm{f})$ are themselves polynomially related. In turn, 
	this implies the existence of the desired polynomial $p(n)$ as both chains have the same stationary distribution and therefore their spectral gaps are polynomially related (see Appendix \ref{app:markov_comparison}).
	
	The existence of the polynomials in \eqref{eq:poly_relation} follows from the fact that all near-regular degree sequences are strongly stable. First of all, it holds that the term $|\mathcal{G}(\bm{d}')|/|\Gd|$ in the Metropolis-Hastings hinge flip projection chain can always be upper and lower bounded by a polynomial because of strong stability. Furthermore, in the load-exchange Markov chain we pick (in the second step) a new degree sequence $\bm{d}'$ proportional to $w(\bm{d}')$ over all possible choices of $\bm{d}'$ with $||\bm{d} - \bm{d}'||_1 = 2$ that respect the degree interval bounds. For a given $\bm{d}$, let $N(\bm{d})$ be the set of all such sequences $\bm{d}'$. Then the probability of transitioning to $\bm{d}'$ is (up to an additional polynomial factor because of the first step of the load-exchange Markov chain) equal to
	\[
	\frac{|\mathcal{G}(\bm{d}')|}{\displaystyle\sum_{\bm{f} \in N(\bm{d})} |\mathcal{G}(\bm{f})|}\,,
	\]
	which can again be upper and lower bounded by a polynomial because of strong stability, and because $|N(\bm{d})| \leq n^2$.
\end{proof}

Lemma \ref{lem:compare_load_mh} implies that we may focus on bounding $\gap(P'_{\mathcal{D}_m})$. Now, by the arguments given in Section \ref{sec:FPRAS} in combination with another simple comparison argument using \eqref{eq:mlsc_comparison} and Theorem \ref{thm:approx_slc}, it suffices to bound $\gap(P''_{\mathcal{D}_m})$ where $P''$ is the transition matrix of the hinge flip Markov chain in which we replace the weights $w(\bm{d})$ by the approximations $\bar{z}(\bm{d})$ as in \eqref{eq:lieb_approx2}.

In Section \ref{sec:slc_lw}, we showed that the polynomial in \eqref{eq:lieb_approx3} is in fact SLC, so then Corollary \ref{cor:polymatroid} implies that the modified log-Sobolev constant of this chain can be lower bounded by a polynomial, which implies the same for the spectral gap by \eqref{eq:spectral_sobolev}. This completes this section, and shows in particular that the switch-hinge flip Markov chain is rapidly mixing, which in turn completes the proof of Theorem \ref{thm:main_switch_hinge}.

\subsection{Bounding $\beta, \gamma$ and $\gap(P_{\mathcal{T}})$}
\label{sec:T}
\noindent Recall that $\mathcal{M}_{\mathcal{T}}$ is the Metropolis-Hastings chain on the index set $\mathcal{T} = \{m_1,\ldots,\allowbreak m_2\}$. For simplicity, we use
$
w_m = |\mathcal{G}_m(\bm{\ell},\bm{u})|
$
to denote the number of feasible graphical realizations with $m$ edges. Note that for any $m \in \mathcal{T}$ we have
$\pi_{\mathcal{T}}(m) = w_m/\sum_{i \in \mathcal{T} } w_i$,
and that $P_{\mathcal{T}}(m,m') > 0$ if and only if $|m - m'| \leq 1$. From the definition of the degree interval Markov chain, it immediately follows that $\beta \geq 1/q(n)$ for some polynomial $q(n)$. We  lower bound $\gamma$ in the following lemma following the same approach as for Lemma \ref{lem:partial2}.

\begin{lemma}\label{lem:partial1}
	There exists a polynomial $q_2(n)$ such that, for all $m,m' \in \mathcal{T}$ with $|m - m'| = 1$, we have
	$
	\pi_{\mathcal{T}}\left( \partial_m\left( \mathcal{G}_{m'}\right) \right)  \geq \frac{1}{q_2(n)}\pi_{\mathcal{T}}\left( \mathcal{G}_{m'}\right)
	$.
\end{lemma}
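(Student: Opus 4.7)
The plan is to mirror the argument of Lemma \ref{lem:partial2}, replacing the hinge flip operation by edge addition/deletion. I will present the case $m' = m+1$; the case $m' = m-1$ is entirely symmetric (swap the roles of $\bm{\ell}$ and $\bm{u}$, and use Lemma \ref{cor:short_alt_cycle} in its ``delete a specified edge'' form instead of its ``add a specified non-edge'' form). Observing that a one-step transition from $\mathcal{G}_m$ to $\mathcal{G}_{m+1}$ can only be the addition of an edge $\{u,v\}$ with $d_{G'}(u) > \ell_u$ and $d_{G'}(v) > \ell_v$, one sees that $G' \in \partial_m(\mathcal{G}_{m+1})$ if and only if $G'$ contains such an edge. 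After dividing by the normalizing constant of $\pi_{\mathcal{T}}$, it then suffices to establish the counting bound $|\partial_m(\mathcal{G}_{m+1})| \geq |\mathcal{G}_{m+1}|/q_2(n)$.

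I will next exploit the interval-size-at-most-one assumption of Theorem \ref{thm:main_chain}, which is the setting throughout Section \ref{sec:T}. Under this assumption, the slack $d_{G'}(v) - \ell_v$ at each vertex lies in $\{0,1\}$, so the cardinality of the set $U = \{v : d_{G'}(v) > \ell_v\}$ of \emph{non-tight} vertices equals the total slack $2(m+1) - \sum_v \ell_v$, which is at least $2$ because both $m$ and $m+1$ lie in $\mathcal{T}$. Hence $|U| \geq 2$ for every $G' \in \mathcal{G}_{m+1}$, guaranteeing a candidate pair of vertices at every step.

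The core step is the construction of a map $\phi: \mathcal{G}_{m+1} \to \partial_m(\mathcal{G}_{m+1})$ with polynomially bounded preimages. If $G' \in \partial_m(\mathcal{G}_{m+1})$ already, set $\phi(G') = G'$. Otherwise $U$ must be an independent set of $G'$; fix any two vertices $u_1, u_2 \in U$ and apply Lemma \ref{cor:short_alt_cycle} to the non-edge $\{u_1,u_2\}$ with respect to the near-regular degree sequence $\bm{d}(G') \in \mathcal{F}_{(\alpha,r)}$ to produce $H' \in \mathcal{G}(\bm{d}(G'))$ satisfying $\{u_1,u_2\} \in E(H')$ and $|E(G') \triangle E(H')| \leq 12$. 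Since $H'$ inherits the degree sequence of $G'$, it lies in $\mathcal{G}_{m+1}$, the set $U$ is unchanged, and $H'$ now contains an internal edge of $U$, so $H' \in \partial_m(\mathcal{G}_{m+1})$; set $\phi(G') = H'$. A routine counting argument then bounds $|\phi^{-1}(H')|$ by $O(n^{24})$, since every preimage differs from $H'$ in at most $12$ of the $\binom{n}{2}$ edge slots, yielding $|\mathcal{G}_{m+1}| \leq q_2(n) |\partial_m(\mathcal{G}_{m+1})|$ for a suitable polynomial $q_2$.

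The only nontrivial step is handling the case $G' \notin \partial_m(\mathcal{G}_{m+1})$, and this is precisely where strong stability---through Lemma \ref{cor:short_alt_cycle}---carries the argument by supplying a nearby $H'$ with the desired internal edge of $U$. The restriction to intervals of size at most one plays no role beyond securing $|U| \geq 2$; everything else reduces to bookkeeping.
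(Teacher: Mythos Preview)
Your proof is correct and rests on the same core tool as the paper's---Lemma~\ref{cor:short_alt_cycle} applied to a pair of ``slack'' vertices---but the bookkeeping is organized differently. The paper works from the $\mathcal{G}_m$ side: it defines $\mathcal{H}\subseteq\mathcal{G}_m$ as the graphs admitting an edge addition, shows $|\partial_m(\mathcal{G}_{m+1})|\geq|\mathcal{H}|/(m+1)$, then $|\mathcal{H}|\geq|\mathcal{G}_m|/n^{12}$ via Lemma~\ref{cor:short_alt_cycle}, and finally $|\mathcal{G}_m|\geq|\mathcal{G}_{m+1}|/p(n)$ by a separate deletion argument. You instead characterize $\partial_m(\mathcal{G}_{m+1})$ directly inside $\mathcal{G}_{m+1}$ and build a single map $\phi:\mathcal{G}_{m+1}\to\partial_m(\mathcal{G}_{m+1})$ with polynomially bounded fibers, which collapses the paper's three comparisons into one. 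Your route is slightly more economical; the paper's route has the minor advantage that the intermediate inequality $|\mathcal{G}_m|\geq|\mathcal{G}_{m+1}|/p(n)$ is reused elsewhere (e.g., in bounding the ratios $w_i/w_j$ for Theorem~\ref{thm:mixing_concave}). Your explicit use of the size-one interval assumption to secure $|U|\geq 2$ is appropriate---the paper's proof needs the analogous fact (two vertices below their upper bound in $\mathcal{G}_m$) but leaves it implicit.
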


\begin{proof}
	Assume that $m' = m + 1$ (the case $m' = m-1$ is analogous). Let $G \in \mathcal{G}_{\myd}$ for some $\myd \in \mathcal{D}_m$. Note that $m' \geq m_1 + 1 > m_1$, which implies that there are nodes $i$ and $j$ whose degrees in $G$ are not equal to the upper bound of their degree interval.
	Note that the set
	\[
	\mathcal{H} = \{G \in \mathcal{G}_{m} : \exists i,j \in [n] \text{ with } d_i(G) <  u_i, d_j(G) < u_j \text{ and } \{i,j\} \notin E(G) \}
	\]
	has the property that
	\begin{equation}\label{eq:T_1}
		|\partial_m(\mathcal{G}_{m'})| \geq \frac{1}{m+1} |\mathcal{H}|\,.
	\end{equation}
	In order to see this, note that for any $G \in \mathcal{H}$ we can add the edge $\{i,j\}$ (recall that these nodes depend on the choice of $G$) to obtain an element in $\mathcal{G}_{m'}$. On the other hand,there can be at most $m+1$ graphs $G$ that map onto a given graph $H \in \mathcal{G}_{m'}$ using this procedure. This gives the inequality \eqref{eq:T_1}.
	
	Moreover, using the first part of Lemma \ref{cor:short_alt_cycle} and following the same argument as in the proof of Lemma \ref{lem:partial2}  it can be shown that
	\begin{equation}\label{eq:T_2}
		|\mathcal{H}| \geq \frac{1}{n^{12}}|\mathcal{G}_{m}|\,.
	\end{equation}
	To see this, note that for any graph $G \in \mathcal{G}_{m}$, nodes  $i$ and $j$ with $d_i(G) < u_i$ and $d_j(G) < u_j$ always exist, as $m < m_2$. Moreover, if $\{i,j\} \in E(G)$ we know from Lemma \ref{cor:short_alt_cycle} that there is a graph $G'$ with the same degree sequence not containing edge $\{i,j\}$ close to $G$.

	We next show that $|\mathcal{G}_{m}| \geq |\mathcal{G}_{m'}|/p(n)$ for some polynomial $p(n)$. To see this, note that for any $G' \in |\mathcal{G}_{m'}|$ there exist nodes $x$ and $y$ such that $d_x(G') > \ell_x$ and $d_y(G') > \ell_y$ as $m' = m + 1 > m_1$. If $\{x,y\} \in E(G')$ we can remove it to obtain an element of $|\mathcal{G}_{m}|$. Otherwise, again using Lemma \ref{cor:short_alt_cycle} we can first find an element $G'' \in \mathcal{G}_{m'}$ close to $G'$ that contains $\{x,y\}$ and the remove it. Combining this with \eqref{eq:T_2} yields the existence of a polynomial $q_2(n)$ such that
	\[
	|\mathcal{H}| \geq \frac{1}{q_2(n)}|\mathcal{G}_{m'}|\,.
	\]
	Finally, combining the latter inequality with \eqref{eq:T_1} and dividing both sides by $\sum_{m \in \mathcal{T}} w_m$, then gives the desired result.
\end{proof}

In order to show that $\mathcal{M}_{\mathcal{T}}$ is rapidly mixing or, in particular, that the gap $\gap(\mathcal{T})$ can be polynomially bounded, it is sufficient to show that the sequence $(w_m)_{m \in \mathcal{T}}$ is \emph{log-concave}. Log-concavity means that for any $m \in \mathcal{T} \mysetminus \{m_1,m_2\}$, $w_{m-1}w_{m+1} \leq w_m^2$.

\begin{theorem}\label{thm:log-concavity}
	The sequence $(w_m)_{m \in \mathcal{T}}$ is log-concave for all interval sequences $[\bm{\ell},\bm{u}]$ for which $u_i \in \{\ell_i,\ell_i+1\}$ for all $i \in[n]$.
\end{theorem}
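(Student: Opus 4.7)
The plan is to prove the inequality $w_{m-1}\,w_{m+1} \le w_m^{\,2}$ by constructing an injection
\[
\Phi \colon \mathcal{G}_{m-1}(\bm{\ell},\bm{u}) \times \mathcal{G}_{m+1}(\bm{\ell},\bm{u}) \,\hookrightarrow\, \mathcal{G}_m(\bm{\ell},\bm{u}) \times \mathcal{G}_m(\bm{\ell},\bm{u})
\]
in the spirit of the symmetric-difference / augmenting-path technique of Jerrum and Sinclair. Given a pair $(G_1,G_2)$ with $|E(G_1)|=m-1$ and $|E(G_2)|=m+1$, I form $H = G_1 \triangle G_2$ and colour each edge $+$ if it lies in $G_2 \setminus G_1$ and $-$ if it lies in $G_1 \setminus G_2$, so that $|H^+| - |H^-| = 2$. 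Writing $\delta_v(G) = d_v(G) - \ell_v$, the hypothesis $u_v - \ell_v \le 1$ forces $\delta_v \in \{0,1\}$ for every vertex, so the local excess $d_v(H^+) - d_v(H^-) = \delta_v(G_2) - \delta_v(G_1)$ lies in $\{-1,0,+1\}$.

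Because the local excess is at most one in absolute value, I pair $+$-edges with $-$-edges at each vertex via a canonical matching rule, leaving at most one unpaired edge per vertex. This decomposes $H$ into alternating paths and cycles; classifying paths as ``$++$'', ``$--$'', or ``$+-$'' according to the colours of their two unpaired endpoints, the identity $|H^+| - |H^-| = 2$ yields two more $++$ paths than $--$ paths, so at least two $++$ paths always exist. Canonically selecting one such path $P$, I set $\Phi(G_1,G_2) = (G_1 \triangle E(P),\, G_2 \triangle E(P))$. Internal vertices of $P$ are incident to exactly one $+$- and one $-$-edge of $P$, so their degrees in $G_1$ and $G_2$ are preserved; each endpoint $v$ of $P$ has $\delta_v(G_2) = 1$ and $\delta_v(G_1) = 0$, and after the flip these toggle to $\delta_v(G_2') = 0$ and $\delta_v(G_1') = 1$, still in $[\ell_v,u_v]$ precisely because $u_v - \ell_v \leq 1$. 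Counting the edges of $P$ (one more $+$ than $-$) then gives $|E(G_1 \triangle E(P))| = m = |E(G_2 \triangle E(P))|$, as required.

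The main obstacle is proving that $\Phi$ is injective. After the flip, the underlying edge set of $G_1' \triangle G_2'$ equals $H$ but with the colours along $E(P)$ reversed, so $P$ becomes a $--$ path in the image and its endpoints $v_0, v_k$ transition from local excess $+1$ in the source to $-1$ in the image. A naive inverse (``flip the lexicographically smallest $--$ path in the image'') can fail because pre-existing $--$ paths of $H$ survive into the image and may outrank $P$ in the ordering. Designing the canonical rule so that the forward selection of a $++$ path in the source and the inverse selection of a $--$ path in the image coincide on $P$ will be the technical heart of the proof. My plan is to choose $P$ by a rule that depends not only on the underlying edges of $H$ but also on structural invariants of the canonical pairing that behave symmetrically under the side-swap induced by $\Phi$; with such a rule in place, the inverse procedure canonically identifies the same path and recovers $(G_1, G_2)$, giving the required injection and hence the log-concavity $w_{m-1}\,w_{m+1} \leq w_m^{\,2}$.
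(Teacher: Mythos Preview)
Your setup---forming $H=G_1\triangle G_2$, colouring edges by origin, using the bound $|d_v(G_2)-d_v(G_1)|\le 1$ (a direct consequence of $u_v-\ell_v\le 1$) to decompose $H$ canonically into alternating paths and cycles, and observing that the ``$++$'' paths outnumber the ``$--$'' paths by exactly two---is precisely the framework the paper uses, adapted from Jerrum and Sinclair's matching argument.

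The gap is exactly where you locate it: injectivity of $\Phi$. Your proposed fix, a canonical selection rule for $P$ that is somehow symmetric under the side-swap, is left entirely unspecified, and it is genuinely unclear that any such rule exists. After flipping $P$ the image pair has $r$ paths of each type (if the source had $r+1$ ``$++$'' paths and $r-1$ ``$--$'' paths), and nothing intrinsic to the image distinguishes the flipped path from the $r-1$ pre-existing ``$--$'' paths. Chasing a clever tie-breaking rule here is the wrong move.

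The paper sidesteps the injection entirely via a double-counting argument. The key observation (which you implicitly use but do not state) is that because every interval has length at most one, the canonical path/cycle decomposition of $E(G_1)\triangle E(G_2)$ is \emph{identical} to that of $E(G_1')\triangle E(G_2')$: flipping a path swaps colours along it but leaves the decomposition itself unchanged. One then stratifies the domain and codomain by the path counts: let $A_r$ be the pairs in $\mathcal{G}_{m+1}\times\mathcal{G}_{m-1}$ whose decomposition has $r+1$ ``$++$'' paths and $r-1$ ``$--$'' paths, and $B_r$ the pairs in $\mathcal{G}_m\times\mathcal{G}_m$ with $r$ of each. From any element of $A_r$ one can flip any of its $r+1$ ``$++$'' paths to land in $B_r$; conversely, by the invariance of the decomposition, each element of $B_r$ is reached from exactly $r$ elements of $A_r$ (flip any of its $r$ ``$--$'' paths back). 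Hence $|B_r|/|A_r|=(r+1)/r>1$ for every $r$, and summing gives $w_{m-1}w_{m+1}\le w_m^2$. No injection is needed; the many-to-many correspondence with controlled fibre sizes does the job.
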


\begin{proof}
	We follow the notation, terminology and general outline of the proof of Theorem 5.1 in \cite{Jerrum1989}.
	Define $A = \mathcal{G}_{m+1} \times \mathcal{G}_{m-1}$ and $B = \mathcal{G}_{m} \times \mathcal{G}_{m}$. We will show that $|A| \leq |B|$, from which the claim follows.
	
	Note that the symmetric difference of any two subgraphs of $K_n$ can be decomposed into a collection of alternating cycles and simple paths. We will do this in a canonical way.\footnote{\ This decomposition is the main extra step we need compared to the proof of Theorem 5.1 in \cite{Jerrum1989}. The symmetric difference of two matchings is by construction already a disjoint union of cycles and paths. This is also where the analysis breaks down in case the degree intervals have length at least two.} Fix some total order $\preceq_e$ on the edges of $K_n$. For two subgraphs $G$ and $G'$ we will call edges in $E(G) \mysetminus E(G')$ blue, and edges in $E(G')\mysetminus E(G)$ red. Around every node, we will pair up blue edges with red edges as much as possible. We do this by repeatedly selecting a node and pairing up the lowest ordered red and blue edge that have not yet been paired up.  This yields a decomposition of the symmetric difference into i) alternating red-blue cycles, ii) alternating simple red-blue paths of even length (with same number of red and blue edges),  iii) simple paths ending and starting with a red edge, iv) simple paths ending and starting with a blue edge. We call this the \emph{canonical symmetric difference decomposition of $E(G)\triangle E(G')$ with respect to $\preceq_e$}, or simply the canonical decomposition of $E(G) \triangle E(G')$. We call a simple path a $G$-path if it contains one more edge of $G$ than of $G'$ (i.e., red edges are one more than blue edges),
	and a $G'$-path if it contains one more edge of $G'$.  We emphasize that any path of odd length in the symmetric difference is of one of these two types.
	
	Now, for every pair $(G,G') \in A$ it holds that the number of $G$-paths exceeds the number of $G'$-paths by precisely two (as $G$ has two edges more than $G'$). For this reason, we partition $A$ into disjoint classes $\{A_r : r = 1,\dots,m\}$ where
	\[
	\begin{array}{ll}
		A_r = \{(G,G') \in A : &\text{ the canonical decomposition of } E(G)\triangle E(G')  \\
		&  \text{ contains } r+1\ G\text{-paths and } r-1\ G'\text{-paths}\}\,.
	\end{array}
	\]
	In order to prove $|A| \leq |B|$ it suffices to show $|A_r| \leq |B_r|$ for all $r$. We call a pair $(L,L') \in B$ \emph{reachable} from $(G,G')\in A$ if and only if $E(G) \triangle E(G) = E(L) \triangle E(L')$ and $L$ is obtained by from $G$ by taking some $G$-path in the canonical decomposition and flipping the parity
	of the edges with respect to $G$ and $G'$. It is important to see that the canonical symmetric difference decomposition of the pairs $(G,G')$ and $(L,L')$ is the same because all degree intervals have length one.  Note that the number of pairs in $B_r$ reachable from a given $(G,G') \in A_r$ is precisely the number of $G$-paths in the canonical decomposition of $G$ and $G'$, which is $r+1$. Conversely, any given $(L,L') \in B_r$ is reachable from precisely $r$ pairs in $A_r$. Therefore, if $|A_r| > 0$, we have
	\[
	\frac{|B_r|}{|A_r|} = \frac{r+1}{r} > 1\,.
	\]
	This proves the claim.
\end{proof}

We are ready to bound  the spectral  gap of $P_{\mathcal{T}}$. Note that $\Omega$ in the statement of Theorem \ref{thm:mixing_concave} is actually $\mathcal{T}$. Recall that $|\mathcal{T}| =m_2-m_1+1\le n/2 +1 \le n$. Moreover, the ratios $w_i/w_j$  are also polynomially bounded for any $i,j \in \mathcal{T}$ with $|i - j| = 1$.
This can be shown exactly as in the proofs of the Lemmata \ref{lem:partial2} and \ref{lem:partial1}; see also Appendix \ref{sec:counting}.
As a result, it is sufficient to prove the statement in Theorem \ref{thm:mixing_concave} below in order to bound the gap of $P_{\mathcal{T}}$.

\begin{theorem}\label{thm:mixing_concave}
	Let $(w_m)_{m \in \Omega}$ be a log-concave sequence of non-negative numbers and let $\mathcal{M} = (\Omega,P)$ be a Markov chain 
	with transition probabilities
	\[
	P(i,j) = \left\{ \begin{array}{ll}
		\frac{1}{4} \min\left\{1, w_j/w_i\right\} & \text{ if } |i - j| = 1\,, \\
		0 & \text{ if } |i - j| > 1\,, \\
		1 - P(i,i-1) - P(i,i+1) & \text{ if } i = j\,.
	\end{array}\right.
	\]
	Then $1/\gap(P) \leq 4 \,|\Omega|^3 \max_{i,j : |i-j| = 1}  w_i/w_j$.\footnote{\ We suspect a similar result is true without the dependence on the $w_i$ but this is not needed for our purpose.}	
\end{theorem}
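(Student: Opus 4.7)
The plan is to apply the canonical paths (or flow) method to the one-dimensional state space graph, where each directed edge corresponds to a pair $(i,i+1)$ with $i,i+1 \in \Omega$. Recall from the preliminaries that $\gap(P)^{-1} \leq \sigma(f)\ell(f)$ for any flow $f$. First, I would define the canonical paths by routing between any two distinct states $x < y$ along the unique monotone path $x, x+1, \ldots, y$ in the state space graph, and assigning unit flow $f(p_{xy}) = \pi(x)\pi(y)$. Since any such path has length at most $|\Omega|$, we immediately obtain $\ell(f) \leq |\Omega|$.

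Next, I would compute the load on an arbitrary directed edge $e = (i,i+1)$. Writing $Z = \sum_{m \in \Omega} w_m$, the transition capacity is
\[
Q(e) \ =\ \pi(i)\,P(i,i+1) \ =\ \frac{w_i}{Z}\cdot\frac{1}{4}\min\!\left\{1,\frac{w_{i+1}}{w_i}\right\} \ =\ \frac{\min\{w_i, w_{i+1}\}}{4Z},
\]
while the total flow through $e$ (summing over all pairs $x \leq i < i+1 \leq y$) equals
\[
f(e) \ =\ \frac{1}{Z^{2}} \Bigl(\sum_{x \leq i} w_x\Bigr)\Bigl(\sum_{y > i} w_y\Bigr).
\]
Thus $\sigma(e) = f(e)/Q(e) = 4 Z^{-1} (\sum_{x\leq i} w_x)(\sum_{y>i} w_y)/\min\{w_i,w_{i+1}\}$.

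The key step, and the heart of the argument, is to bound the partial sums using log-concavity. Since $(w_m)$ is log-concave, the ratios $r_k = w_{k+1}/w_k$ are non-increasing in $k$, so $(w_m)$ is \emph{unimodal}. I would split into two cases: if $w_{i+1} \geq w_i$ (equivalently $r_i \geq 1$), then $r_k \geq 1$ for all $k \leq i$, so $(w_m)$ is non-decreasing on $\{m_1,\ldots,i\}$, giving $\sum_{x \leq i} w_x \leq |\Omega|\, w_i$ and $\sum_{y > i} w_y \leq Z$, hence $\sigma(e) \leq 4|\Omega| w_i \cdot Z /(Z \cdot w_i) = 4|\Omega|$. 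If instead $w_{i+1} < w_i$, then $r_k < 1$ for $k \geq i$, so $(w_m)$ is non-increasing on $\{i+1,\ldots,m_2\}$, giving $\sum_{y > i} w_y \leq |\Omega|\, w_{i+1}$ and $\sum_{x \leq i} w_x \leq Z$, and the same bound $\sigma(e) \leq 4|\Omega|$ follows.

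Combining $\sigma(f) \leq 4|\Omega|$ with $\ell(f) \leq |\Omega|$ yields $1/\gap(P) \leq 4|\Omega|^2$, which is in fact \emph{stronger} than the statement (since $\max w_i/w_j \geq 1$), and independently confirms the footnote's suspicion. The main obstacle is really just the partial-sum bound, which is immediate from unimodality; once that is in place, there is no edge-specific ratio blow-up, and the ratios $w_i/w_j$ never need to appear explicitly.
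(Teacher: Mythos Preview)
Your proof is correct and in fact yields a sharper bound than the paper's. Both arguments use the same canonical-paths setup: the unique monotone path between states, with $\ell(f)\le|\Omega|$ and the Sinclair bound $\gap(P)^{-1}\le\sigma(f)\ell(f)$. The divergence is in how the congestion $\sigma(e)$ is estimated.

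The paper first replaces $P(z,z+1)$ by the crude global lower bound $\tfrac14\min_{|i-j|=1} w_j/w_i$, which is where the factor $\max_{|i-j|=1} w_i/w_j$ enters. It then bounds the double sum $\sum_{i\le z}\sum_{j>z} w_i w_j / w_z$ term-by-term via the ``shift'' inequality $w_i w_j \le w_{i+a}w_{j-a}$ (a direct consequence of log-concavity), choosing $a$ so that one factor becomes $w_z$; this gives each term $\le \sum_p w_p$ and hence an $|\Omega|^2$ factor. Combined with $\ell(f)\le|\Omega|$, this yields the stated $4|\Omega|^3\max w_i/w_j$.

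Your argument instead keeps the exact capacity $Q(e)=\min\{w_i,w_{i+1}\}/(4Z)$ and exploits unimodality (an immediate consequence of the non-increasing ratios $r_k$) to bound the \emph{appropriate} partial sum by $|\Omega|\cdot\min\{w_i,w_{i+1}\}$, so the endpoint weight cancels exactly against $Q(e)$. This saves a factor of $|\Omega|$ and removes the dependence on $\max w_i/w_j$ entirely, giving $\gap(P)^{-1}\le 4|\Omega|^2$ and confirming the footnote's suspicion. One small caveat worth stating explicitly: you are implicitly assuming all $w_m>0$ (so that the ratios $r_k$ are well-defined), but this is harmless since log-concavity forces the support to be an interval and the chain is only meaningful on the positive support; the paper's proof makes the same tacit assumption when dividing by $w_z$.
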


\begin{proof}
	First note that the stationary distribution $\pi$ of $\mathcal{M}$  is proportional to the weights $(w_i)_{i \in \Omega}$, i.e., $\pi(i) = w_i/\sum_{p \in \Omega} w_p$, as desired.
	We bound the congestion of the straightforward multi-commodity flow $f$ in which we route $\pi(i)\,\pi(j)$ units of flow over the path $i\to (i+1)\to \dots \to j$ if $i < j$, or $i \to (i-1) \to \dots \to j$ if $i > j$.
	
	We consider a fixed transition $e = (z,z+1)$. Note that the proof for transitions of the form $(z,z-1)$ is symmetric, since a sequence $(w_i)_{i \in \Omega}$ is log-concave if and only if the sequence $(w_{|\Omega|-i+1})_{i \in \Omega}$ is log-concave.
	We have
	\begin{align}
		\textrm{load}(e) &= \sum_{1 \leq i \leq z} \sum_{z < j \leq |\Omega|} \frac{\pi(i)\,\pi(j)}{\pi(z)P(z,z+1)} \leq 4 \max_{i,j : |i-j| = 1} \frac{w_i}{w_j} \sum_{1 \leq i \leq z} \sum_{z < j \leq |\Omega|} \frac{\pi(i)\,\pi(j)}{\pi(z)} \nonumber\\
		&= 4 \max_{i,j : |i-j| = 1} \frac{w_i}{w_j} \,\bigg(\sum_{p \in \Omega} w_p \bigg)^{-1} \!\!\!\sum_{1 \leq i \leq z} \sum_{z < j \leq |\Omega|} \frac{w_iw_j}{w_z} \,. \label{eq:double_sum}
	\end{align}
	Log-concavity of the sequence $(w_q)_{q \in \Omega}$ implies that for any fixed $i < j$, and any $a \in \N$ such that $i + a \leq j - a$, we have
	\begin{equation}\label{eq:strong_log}
		w_iw_j \leq w_{i+a}w_{j-a} \,.
	\end{equation}
	This follows from repeatedly applying the log-concavity condition. Indeed, log-concavity gives us
	$\frac{w_i}{w_{i+1}}\le \frac{w_{i+1}}{w_{i+2}}\le \ldots \le \frac{w_{j-2}}{w_{j-1}}\le \frac{w_{j-1}}{w_{j}}$
	and thus $w_iw_j \leq w_{i+1}w_{j-1}$. By repeating this with $i+1$ and $j-1$ (i.e., by removing the outer terms) we get
	$\frac{w_{i+1}}{w_{i+2}}\le \ldots \le \frac{w_{j-2}}{w_{j-1}}$
	and thus $w_{i+1}w_{j-1}\leq w_{i+2}w_{j-2}$. After $a$ steps we get \eqref{eq:strong_log}.

	Now, for a fixed $i$ and $j$ in the double summation  in \eqref{eq:double_sum}, let $a_{ij}$ be such that $w_{i+a_{ij}}$ or $w_{j-a_{ij}}$ (or both) equals $w_z$.  Then \eqref{eq:strong_log} gives us that
	$w_iw_j \leq w_z\,w_p$
	for some $p \in \Omega$. Note that for any choice of $z$, the double summation in \eqref{eq:double_sum} has at most $|\Omega|^2$ terms (as there are at most $|\Omega|$ choices for $i$ and $j$). This implies that
	\[
	\sum_{1 \leq i \leq z} \sum_{z < j \leq |\Omega|} w_i w_j / w_z \leq |\Omega|^2 \sum_{p \in \Omega}  w_z w_p/w_z = |\Omega|^2 \sum_{p \in \Omega} w_p \,.
	\]
	Combining this inequality with  \eqref{eq:double_sum}, we obtain
	\[
	\textrm{load}(e) \leq 4|\Omega|^2\max_{i,j : |i-j| = 1} w_i/w_j  \,.
	\]
	The proof is completed once we notice that, by the definition of the flow $f$, $\textrm{length}(f) \leq |\Omega|$.
\end{proof}

This then completes the proof of Theorem \ref{thm:main_chain}.

\section{Discussion and future directions}
\label{sec:disc}
\noindent We did not attempt to optimize the upper bounds on the mixing times of the Markov chains involved.
Already for the \emph{switch Markov chain} no low-degree polynomial upper bounds are known on its mixing time. For instance, the best known upper bound for $r$-regular graphs is $r^{23}n^8(rn\log(rn) + \log(1/\epsilon))$  \cite{Cooper2007,Cooper2012corrigendum}. This is a central  issue for many MCMC approaches for sampling graphs with given degrees (or degree intervals in our case). Various non-MCMC approaches to the problem, see, e.g., \cite{Bayati2010,Gao2017,GaoW18,Steger1999,Kim2006,McKay1990}, often have better running times,  but only work for smaller classes of degree sequences or have weaker guarantees on the uniformity of the output than we require in our setting.

An interesting first direction for future work is determining whether the degree interval chain is rapidly mixing for more general instances. 
The most intriguing question from our point of view, however, is whether there is a black-box reduction implying that if the switch Markov chain is rapidly mixing for all degree sequences $\myd$ satisfying $\bm{\ell} \leq \myd \leq \bm{u}$, then the degree interval Markov chain is also rapidly mixing. Even more generally, can the problem of sampling graphs with given degree intervals always be reduced to the problem of sampling graphs with given degrees?

Further, one could explore other, non-MCMC, approaches for approximate sampling, especially when the degree ranges are relatively large.
Can one come up with an algorithm in which resampling certain ``bad events'' (e.g., resampling edges adjacent to a node not satisfying its degree interval constraints) yields an exactly uniform sample, following the ``partial rejection sampling'' framework of Guo, Jerrum and Liu \cite{Guo2019}? While this seems unlikely when sampling graphs with given degrees, we suspect it is possible for the problem of sampling graphs with (sufficiently large) given degree intervals.

\bibliographystyle{plainurl}
\bibliography{references_STACS}

\newpage
\appendix

\section{Missing details from Section \ref{sec:preliminaries}}
\label{app:missing_from_prelim}
\noindent We give the full definition of the degree interval Markov chain and the two missing proofs from Section \ref{sec:preliminaries}.

\subsection{Degree interval Markov chain}
\label{app:interval_chain}

\noindent The \emph{degree interval Markov chain} to uniformly sample elements from $\mathcal{G}(\bm{\ell},\bm{u})$, based on the local operations of
Figure \ref{fig:add_del}, is defined as follows.

\begin{mdframed}[style=exampledefault]
	Assuming $G \in \mathcal{G}(\bm{\ell},\bm{u})$ is the current state of the \emph{degree interval Markov chain}:
	\begin{itemize}[topsep=6pt,itemsep=6pt,parsep=0pt,partopsep=20pt,leftmargin=22pt]
		\item With probability $1/2$, do nothing.
		\item Otherwise:
		\begin{itemize}[topsep=3pt,itemsep=6pt,parsep=0pt,partopsep=20pt,leftmargin=18pt]
			\item With probability $1/6$, try to perform a \emph{switch operation}: choose an ordered tuple of distinct nodes $(v,w,x,y)$ uniformly at random. If $\{w,v\}, \{x,y\} \in E(G)$, and $\{y,v\}, \{x,w\} \notin E(G)$, then delete $\{w,v\},\{x,y\}$ from $E(G)$, and add $\{y,v\},\{x,w\}$ to $E(G)$.
			
			\item With probability $1/6$, try to perform a \emph{hinge flip operation}: choose an ordered tuple of distinct nodes $(v,w,x)$ uniformly at random. If $\{w,v\} \in E(G)$ and $\{w,x\} \notin E(G)$, then delete $\{w,v\}$ from and add $\{w,x\}$ to $E(G)$ if the resulting graph is in $\mathcal{G}(\bm{\ell},\bm{u})$.
			
			\item With probability $1/6$, try to perform an \emph{addition/deletion operation}: select an ordered tuple of distinct nodes $(v,w)$ uniformly at random. If $\{v,w\} \in E(G)$, then delete it from $E(G)$ if the resulting graph is in $\mathcal{G}(\bm{\ell},\bm{u})$. Similarly, if $\{v,w\} \notin E(G)$, then add it to $E(G)$ if the resulting graph is in $\mathcal{G}(\bm{\ell},\bm{u})$.
		\end{itemize}
	\end{itemize}
\end{mdframed}
\smallskip

\noindent Due to the symmetry of the transition probabilities, it is not hard to see that the chain is time-reversible with respect to the uniform  distribution. Also because of the holding probability of at least $1/2$, the chain is aperiodic. Finally, by a simple counting argument, there exists a polynomial $t(n)$ such that $P_{\mathcal{G}(\bm{\ell},\bm{u})}(G, H) \geq 1/t(n)$ for all $G, H \in \mathcal{G}(\bm{\ell},\bm{u})$ with $P_{\mathcal{G}(\bm{\ell},\bm{u})}(G, H)>0$.
The irreducibility of the chain (i.e., the fact that its state space is strongly connected) for the intervals of interest  will follow implicitly from our analysis, in particular Lemmata \ref{lem:partial2} and \ref{lem:partial1}.

In order to obtain the  \emph{switch-hinge flip Markov chain} for the uniform sampling from $\Gmul$, one may simply ignore addition/deletion operations, resulting in the description in Section \ref{sec:preliminaries}, which is given here as well as a quick reference. This explains the holding probability of $2/3$.

\begin{mdframed}[style=exampledefault]
	Assuming $G \in \Gmul$ is the current state of \emph{switch-hinge flip Markov chain on $\Gmul$}:
	\begin{itemize}[topsep=6pt,itemsep=6pt,parsep=0pt,partopsep=20pt,leftmargin=22pt]
		\item With probability $2/3$, do nothing.
		\item With probability $1/6$, perform a \emph{switch operation}.
		\item With probability $1/6$, perform a \emph{hinge flip operation}.
	\end{itemize}
\end{mdframed}

\smallskip

\subsection{Missing proofs from Subsection \ref{sec:near-regular}}
\label{app:near-regular}
\noindent For convenience, before giving the proofs here, we first repeat the corresponding statements. \medskip

\begin{rtheorem}{Proposition}{\ref{prop:strong}}
	Let $0<\alpha < 1/2$ be a constant and assume that $2 \leq r(n) \leq (1 - \sigma)n$ for some  constant $0 < \sigma < 1$ and $n \in \N$. Then there exists some $n_1\in \N$ so that the class $\mathcal{F}_{(\alpha,r)}$ of $(\alpha,r)$-near-regular degree sequences of length at least $n_1$ is strongly stable.
\end{rtheorem}

\begin{proof}
	Let $n \ge n_1= \Big\lceil\left( 2/\sigma\right)^\frac{1}{1-2\alpha}\Big\rceil$. It is then a matter of simple calculations to verify that the condition $(d_{\max} - d_{\min} + 1)^2 \leq 4d_{\min}(n - d_{\max} - 1)$
	is satisfied for all $\myd \in \mathcal{F}_{(\alpha,r)}$, where $d_{\min}$ and $d_{\max}$ are the minimum and maximum value of $\myd$, respectively. Sequences satisfying this condition are strongly stable  for $k = 10$ \cite{Jerrum1989graphical,AK2019}.
\end{proof}

\begin{rtheorem}{Lemma}{\ref{cor:short_alt_cycle}}
	Let $\myd \in \mathcal{F}_{(\alpha,r)}$. Suppose that $G \in \Gd$ and let $\{u,v\} \in E(G)$ (resp. $\{u,v\} \notin E(G)$).
	Then there exists a graph $G' \in \Gd$ with $\{u,v\} \notin E(G')$ (resp.~$\{u,v\} \in E(G)$) and
	$E(G) \triangle E(G')$ is an alternating cycle of length at most $12$.
	Similarly, suppose that $\{u,w\}, \{u,v\} \in E(G)$. Then there exists a graph $G' \in \Gd$ with $\{u,w\} \in E(G')$ and $\{u,v\} \notin E(G')$, and
	$E(G) \triangle E(G')$ is an alternating cycle of length at most $12$.
\end{rtheorem}
\begin{proof}
	Assume $n \ge n_1= \Big\lceil\left( 2/\sigma\right)^\frac{1}{1-2\alpha}\Big\rceil$ as in the proof of Proposition \ref{prop:strong}.
	Note that, in all cases below, the degree sequence $\myd$ itself plays the role of being a perturbed degree sequence in the argument. By inspecting the proof of Proposition \ref{prop:strong}, this is allowed since $n$ here is sufficiently large.

	For the first case of the first part (i.e., when $\{u,v\}\in E(G)$), let $y$ be such that $\{y,u\} \notin E(G)$. Such a non-edge is guaranteed to exist, as $n\ge n_1 > 2/\sigma$ and the maximum degree of any node will then be bounded away from $n-2$. Also note that $y$ has degree at least $2$.
	By Proposition \ref{prop:strong}, we know that there exists some alternating $(y,v)$-path of length at most $10$. Combining this path with the non-edge $\{y,u\}$ and the edge $\{u,v\}$, results in an alternating cycle of length at most $12$. Hence, if we flip all the edges on this alternating cycle, we obtain a $G' \in \mathcal{G}(\myd)$ with the desired property.
	
	For the second case of the first part (i.e., when $\{u,v\}\notin E(G)$), we pick a $y$  such that $\{y,u\} \in E(G)$. By Proposition \ref{prop:strong}, we consider some alternating $(v,y)$-path (of length at most $10$). Combining this path with the edge $\{y,u\}$ and the non-edge $\{u,v\}$, we again obtain an alternating cycle of length at most $12$. By flipping  this cycle, we get a $G' \in \mathcal{G}(\myd)$ with the desired property.
	
	\medskip

	For the second part of the lemma, we make a similar, albeit more complicated, argument. We distinguish two cases and then consider subcases depending on the relative position of the edge $\{u,w\}$ with respect to some alternating path.
	\smallskip
	
	\noindent \underline{Case 1:}  \emph{there exists $y$ such that $\{y,u\}, \{y,v\} \notin E(G)$.} Consider such a node $y$. By Proposition \ref{prop:strong}, there exists an alternating $(y,v)$-path of length at most $10$.
	A key observation here is that this alternating $(y,v)$-path might contain the edge $\{u,w\}$. Of course, if this is not true, we proceed like in the first case of the first part above. So assume that the alternating $(y,v)$-path does contain $\{u,w\}$.
	If $\{u,w\}$ goes from $w$ to $u$ as we traverse the path from $y$ to $v$, then by taking the remaining $(u,v)$-subpath of the alternating path together with the edge $\{u,v\}$ we get an alternating cycle (of length at most $8$), that contains the edge $\{u,v\}$ but not the edge $\{u,w\}$.
	If $\{u,w\}$ goes from $u$ to $w$ as we traverse the path from $y$ to $v$, then by taking the $(y,u)$-subpath preceding $\{u,w\}$ on the alternating path together with the edge $\{u,v\}$ and the non-edge $\{y,v\}$ we again get an alternating cycle (of length at most $10$), that contains  $\{u,v\}$ but not  $\{u,w\}$.
	
	In any case, by flipping the edges on the corresponding alternating cycle, we obtain a $G' \in \mathcal{G}(\myd)$ with the desired property.
	\smallskip
	
	\noindent \underline{Case 2:} \emph{for every $y$ such that $\{y,u\} \notin E(G)$ we have $\{y,v\} \in E(G)$.} Consider a node $x$ such that $\{x,v\} \notin E(G)$. Given the assumption of the current case, it must be that $\{x,u\} \in E(G)$.
	By Proposition \ref{prop:strong}, there exists an alternating $(x,u)$-path of length at most $10$. If the edge $\{u,w\}$ is not contained in this alternating path, then the whole path together with the non-edge $\{x,v\}$ and the edge $\{u,v\}$ is an alternating cycle (of length at most $12$) that contains  $\{u,v\}$ but not  $\{u,w\}$. Now, assume that the alternating $(x,u)$-path contains $\{u,w\}$.
	If $\{u,w\}$ goes from $u$ to $w$ as we traverse the path from $x$ to $u$, then by taking the $(x,u)$-subpath preceding $\{u,w\}$ on the alternating path together with the edge $\{u,v\}$ and the non-edge $\{x,v\}$ we  get an alternating cycle (of length at most $8$), that contains $\{u,v\}$ but not  $\{u,w\}$.
	Finally, if $\{u,w\}$ goes from $w$ to $u$ as we traverse the path from $x$ to $u$, then by taking the remaining $(u,u)$-subpath of the alternating path together with the edges $\{u,v\}, \{x,u\}$ and the non-edge $\{x,v\}$, we again get an alternating cycle (of length at most $10$), that contains the edge $\{u,v\}$ but not the edge $\{u,w\}$.
	
	In all subcases, by flipping the edges on the corresponding alternating cycle, we obtain a $G' \in \mathcal{G}(\myd)$ with the desired property.
\end{proof}

\newpage

\section{Modified log-Sobolev constant}
\label{sec:mlsc}
\noindent Let $\mathcal{M} = (\Omega,P)$ be a time-reversible Markov chain with stationary distribution $\pi$, and $f,g : \Omega \rightarrow \R_{\geq 0}$. Let
$\mathbb{E}_{\pi}(f) = \sum_{x \in \Omega} \pi(x)f(x)$.
Furthermore, define the entropy-like quantity
\[
\text{Ent}_{\pi}(f) = \mathbb{E}_{\pi}\left[f\log(f) - f\log(\mathbb{E}_{\pi}(f))\right]\,,
\]
and the \emph{Dirichlet form}
\begin{align*}
	\mathcal{E}_P(f,g) &= \frac{1}{2}\sum_{x \in \Omega} \sum_{y \in \Omega} \pi(x)P(x,y)[f(x) - f(y)][g(x) - g(y)]\,.
\end{align*}
The \emph{modified log-Sobolev constant} of the Markov chain $\mathcal{M}$ is defined by
\[
\rho(P) = \inf\left\{ \frac{\mathcal{E}_P(f,\log(f))}{\text{Ent}_{\pi}(f)} \ \Big| \ f: \Omega \rightarrow \R_{\geq 0}, \ \text{Ent}_{\pi}(f) \neq 0 \right\}.
\]
As stated in Section \ref{sec:preliminaries}, it holds that (see, e.g., \cite{Bobkov2006})
\begin{align*}
	\tau_x(\epsilon) \leq \frac{1}{\rho(P)}\left( \log \log \pi(x)^{-1} + \log\left(\frac{1}{2\epsilon^2} \right)\right).
\end{align*}
Furthermore, for any Markov chain it holds that
\begin{equation}
	\label{eq:spectral_sobolev}
	2(1- \lambda_1(P)) \geq \rho(P) \,,
\end{equation}
where $\lambda_1(P)$ is the second-largest eigenvalue of $P$ (assuming the Markov chain is lazy).

\subsection{Markov chain comparison}\label{app:markov_comparison}
\noindent A useful property of proving mixing time bounds through the modified log-Sobolev constant, is that it is easy to see that small perturbations in the transition probabilities and the stationary distribution only result in mild variations in the modified log-Sovolev constant of the resulting Markov chain (by means of a Markov chain comparison argument). Goel \cite{Goel2004} states the following for the modified log-Sobolev constant, based on similar results for the other constants by Diaconis and Saloff-Coste \cite{Diaconis1996}. The notation $W(\Omega,\pi)$ is used to denote the set of all (test) functions $f : \Omega \rightarrow \R_{\geq 0}$.

\begin{theorem}[Lemma 4.1 \cite{Goel2004}]\label{thm:markov_comparison} Let $\mathcal{M} = (\Omega,P)$ and $\mathcal{M}' = (\Omega',P')$ be two finite, reversible Markov chains with stationary distributions $\pi$ and $\pi'$, respectively, and modified log-Sobolev constant $\rho$ and $\rho'$, respectively.
	Assume there is a mapping $\phi: W(\Omega,\pi) \rightarrow W'(\Omega',\pi')$ mapping $f \rightarrow f'$ for $f : \Omega \rightarrow \R_{\geq 0}$, and constants $C, c > 0$ and $B \geq 0$ such that for all $f \in W(\Omega,\pi)$, we have
	\[
	\mathcal{E}_{P'}(f',\log f') \leq C \cdot \mathcal{E}_{P}(f, \log f) \ \ \text{ and } \ \ c \cdot \text{Ent}_{\pi}(f) \leq \text{Ent}_{\pi'}(f') + B\cdot \mathcal{E}_{P}(f, \log f)\,.
	\]
	Then
	\[
	\frac{c \rho'}{C + B\rho'} \leq \rho \,.
	\]

\end{theorem}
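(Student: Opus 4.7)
The plan is to exploit the variational characterization of the modified log-Sobolev constant and chain the two comparison hypotheses together into a single inequality for an arbitrary test function $f$.

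Concretely, I would fix an arbitrary $f : \Omega \to \R_{\geq 0}$ with $\mathrm{Ent}_\pi(f) \neq 0$ and set $f' = \phi(f) \in W'(\Omega',\pi')$. The goal is to prove
\[
\frac{\mathcal{E}_P(f,\log f)}{\mathrm{Ent}_\pi(f)} \;\geq\; \frac{c\rho'}{C + B\rho'},
\]
since taking the infimum over all such $f$ then yields $\rho \geq c\rho'/(C+B\rho')$ by definition.

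To do this, I would first invoke the definition of $\rho'$ applied to $f'$ on $\mathcal{M}'$: either $\mathrm{Ent}_{\pi'}(f') = 0$, in which case the bound $\rho'\,\mathrm{Ent}_{\pi'}(f') \leq \mathcal{E}_{P'}(f',\log f')$ holds trivially (the Dirichlet form is nonnegative), or $\mathrm{Ent}_{\pi'}(f') > 0$, in which case the same bound is the variational definition. Either way,
\[
\mathrm{Ent}_{\pi'}(f') \;\leq\; \frac{1}{\rho'}\,\mathcal{E}_{P'}(f',\log f').
\]
Substituting this into the second hypothesis $c\,\mathrm{Ent}_\pi(f) \leq \mathrm{Ent}_{\pi'}(f') + B\,\mathcal{E}_P(f,\log f)$ gives
\[
c\,\mathrm{Ent}_\pi(f) \;\leq\; \tfrac{1}{\rho'}\,\mathcal{E}_{P'}(f',\log f') + B\,\mathcal{E}_P(f,\log f).
\]
Then applying the first hypothesis $\mathcal{E}_{P'}(f',\log f') \leq C\,\mathcal{E}_P(f,\log f)$ bounds the right-hand side by
\[
\left(\frac{C}{\rho'} + B\right)\mathcal{E}_P(f,\log f) \;=\; \frac{C + B\rho'}{\rho'}\,\mathcal{E}_P(f,\log f).
\]
Dividing by $\mathrm{Ent}_\pi(f) > 0$ and rearranging delivers the desired ratio bound, and taking the infimum over admissible $f$ completes the proof.

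There is no genuine obstacle here—the argument is a one-line chain of substitutions once the variational definition of $\rho'$ is applied to $f'$. The only technical care needed is the handling of the degenerate case $\mathrm{Ent}_{\pi'}(f') = 0$, which is covered by noting that the Dirichlet form $\mathcal{E}_{P'}(f',\log f')$ is always nonnegative (so the variational inequality $\rho'\mathrm{Ent}_{\pi'}(f') \leq \mathcal{E}_{P'}(f',\log f')$ is valid whether or not $f'$ is entropy-trivial), together with the fact that if $f \in W(\Omega,\pi)$ is nonnegative then $\log f$ is defined on the support of $f$ and the standard convention $0\log 0 = 0$ makes all the quantities well-defined.
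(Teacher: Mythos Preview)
The paper does not prove this statement; it is quoted as Lemma 4.1 of \cite{Goel2004} and stated without proof in Appendix~\ref{sec:mlsc}. Your argument is the standard variational proof and is correct: applying the defining inequality $\rho'\,\mathrm{Ent}_{\pi'}(f') \leq \mathcal{E}_{P'}(f',\log f')$ to $f' = \phi(f)$, substituting into the entropy-comparison hypothesis, and then invoking the Dirichlet-form hypothesis yields the claimed bound after taking the infimum over admissible $f$. There is nothing to compare against in the paper itself, and no gap in your reasoning.
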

\noindent In particular, if $\Omega = \Omega'$ and there exists a $0 < \delta < 1$ such that $(1-\delta)P(x,y) \leq P'(x,y) \leq (1+\delta)P(x,y)$ for all $x,y \in \Omega$, and $(1-\delta)\pi(x) \leq \pi'(x) \leq (1 + \delta)\pi(x)$ for $x \in \Omega$, it directly follows that
\begin{equation}
	\label{eq:mlsc_comparison}
	\frac{1}{\rho} \leq \frac{1+\delta}{1-\delta}\cdot \frac{1}{\rho'} \,.
\end{equation}

\newpage

\section{Reductions for approximate sampling and counting}
\label{sec:counting}
We first explain how Theorem \ref{thm:main_result} follows from Theorem \ref{thm:main_switch_hinge}.
The induced approximate sampler from Theorem \ref{thm:main_switch_hinge} can be turned into an approximate counter for $|\Gmul|$ by standard techniques (Section \ref{app:sm_to_cm}). Furthermore, this approximate counter can then be turned into an approximate counter for $|\Gul|$ by means of a simple reduction (Section \ref{app:cm_to_c}). In turn, the approximate counter for $|\Gul|$ can be transformed into an approximate sampler from $\Gul$, again by a standard technique (Section \ref{app:c_to_s}).

A subtle point is that it is not known whether the problem of sampling and counting from $\Gul$, or $\Gmul$, is \emph{self-reducible} \cite{JerrumVV86}. This follows roughly from the fact that it is not known whether the problem of sampling/counting from $\Gd$ is self-reducible in general. However, if one restricts to degree intervals $[\bm{\ell},\bm{u}]$ for which both an FPRAS and FPAUS for $\Gd$ is known for every $\bm{\ell} \leq \bm{d} \leq \bm{u}$, like the set of P-stable degree sequences \cite{Jerrum1990},  then standard reduction techniques for self-reducible problems can still be applied.

To give a more concrete intuition, in the reduction of Section \ref{app:sm_to_cm} the problem is that one needs to be able to compute the final factor in the telescoping product \eqref{eq:telescope} efficiently, which we do not know how to do for an  arbitrary degree sequence $u$ (although we do know it for $P$-stable degree sequences by the results of Jerrum and Sinclair \cite{Jerrum1990}).

\subsection{From approximate sampling from \texorpdfstring{$\Gmul$}{} to approximating \texorpdfstring{$|\Gmul|$}{}}
\label{app:sm_to_cm}
Using a standard reduction technique, see, e.g., Chapter 12 in \cite{Jerrum1996}, we can  turn our FPAUS into an FPRAS for counting the number of graphs with given degree intervals. 

We first show how to express $|\mathcal{G}_m(\bm{\ell},\bm{u})|$ as a \emph{telescoping product}. We write
\begin{equation}\label{eq:telescope}
	|\mathcal{G}_m(\bm{\ell},\bm{u})| = \frac{|\mathcal{G}_m(\bm{\ell},\bm{u})|}{|\mathcal{G}_m(\bm{a}^{1},\bm{u})|}\frac{|\mathcal{G}_m(\bm{a}^{1},\bm{u})|}{|\mathcal{G}_m(\bm{a}^{2},\bm{u})|}\cdots \frac{|\mathcal{G}_m(\bm{a}^{p-1},\bm{u})|}{|\mathcal{G}_m(\bm{u})|}|\mathcal{G}_m(\bm{u})|
\end{equation}
for a sequence of vectors $\bm{\ell} = \bm{a}^0,\bm{a}^1,\bm{a}^2,\dots,\bm{a}^p = \bm{u}$,
that are  recursively defined as follows. We define $\bm{a}^{i+1}$ by choosing the lowest indexed nodes\footnote{\ A number $i \in [n]$ is lower indexed than $j \in [n]$ if $i < j$.} $v$ and $w$ for which $a^i_v < u^i_v$ and $a^i_w < u^i_w$, and then setting
\[
a^{i+1}_j = \left\{\begin{array}{ll}a^{i}_j + 1 & \text{ if } j \in \{v,w\} \,,\\
	a^{i}_j & \text{ otherwise} \,.
\end{array}\right.
\]
It is clear that there is some $p \leq 2\sum_{i}u_i$ such that this procedure gives $\bm{a}^p = \bm{u}$. Also, if $c = \max_i (u_i - \ell_i)$, all intermediate degree intervals $[\bm{a}^i,\bm{u}]$ also have length at most $c$ component-wise, as $u_j-a^i_j\le u_j-\ell_j\le c$.  Finally, note that we have
\begin{equation}\label{eq:componentwise}
	\bm{a}^0 < \bm{a}^1 < \dots < \bm{a}^{p-1} < \bm{a}^p,
\end{equation}
where for two sequences $\bm{a}$ and $\bm{b}$, we write $\bm{a} < \bm{b}$ if $\bm{a} \le \bm{b}$ and $a_i < b_i$ for at least one $i \in \{1,\dots,n\}$.

In order to approximate the size of $\mathcal{G}_m(\bm{\ell},\bm{u})$, it will be sufficient to approximate the ratios in the telescoping product, as well as the last factor $|\mathcal{G}_m(\bm{u})|$. The latter can be approximated by employing, e.g., the approximate counting scheme of Jerrum and Sinclair \cite{Jerrum1990} for \emph{$P$-stable} degree sequences. For approximating the ratios, we need the following two (sufficient) components: (1) the existence of a FPAUS and  (2) the fact that the ratios can be polynomially bounded. This implies that polynomially many  samples are  enough in order to estimate the ratio up to the desired accuracy.

We sketch how to formalize this argument. Using strong stability, i.e., Proposition \ref{prop:strong}, and very similar reasoning as in the proofs of Lemmas \ref{lem:partial1} and \ref{lem:partial2}, it is easy to show that all ratios in \eqref{eq:telescope} are upper bounded by some polynomial $p_2(n)$ (independent of $\bm{\ell}$ and $\bm{u}$). By setting
\[
\nu_i = \frac{|\mathcal{G}_m(\bm{a}^{i},\bm{u})|}{|\mathcal{G}_m(\bm{a}^{i+1},\bm{u})|} \,,
\]
this just means
$
1 \leq \nu_i \leq p_2(n)
$.
Moreover, using \eqref{eq:componentwise}, it follows that, for $i = 0,\dots,p-1$, we have
$
\mathcal{G}_m(\bm{a}^{i+1},\bm{u}) \subseteq \mathcal{G}_m(\bm{a}^{i},\bm{u})
$.
If we define $X_i$ to be the indicator variable of the event that a random sample from $\mathcal{G}_m(\bm{a}^{i},\bm{u})$ is indeed contained in $\mathcal{G}_m(\bm{a}^{i+1},\bm{u})$, then $\nu_i = 1/\mathbb{E}(X_i)$. The high-level idea is now to show that polynomially many samples from the sampler (the switch hinge-flip Markov chain) not only suffice for an accurate estimate for $\nu_i$ but, crucially, they suffice for an accurate estimate of the product
\[
\prod_{i = 0}^{p-1} \nu_i = \frac{|\mathcal{G}_m(\bm{\ell},\bm{u})|}{|\mathcal{G}_m(\bm{u})|}
\]
up to a factor $(1 \pm \epsilon/3)$. This can be done by standard arguments, e.g., see Chapter 12 of \cite{Jerrum1996} or Chapter 3.2 of \cite{Jerrum2003book}. 
Finally, as mentioned above, we may use the approximate counter from \cite{Jerrum1990} for approximating $|\mathcal{G}_m(\bm{u})|$ up to a factor $(1 \pm \epsilon/3)$. This then implies that we can also approximate $|\mathcal{G}_m(\bm{\ell},\bm{u})|$ up to a factor $(1 \pm  \epsilon)$.

\subsection{From approximating \texorpdfstring{$|\Gmul|$}{} to approximating \texorpdfstring{$|\Gul|$}{}}
\label{app:cm_to_c}
In order to provide an FPRAS for approximating $|\Gul|$, it suffices to give an FPRAS for approximating $|\Gmul|$ for every $\frac{1}{2}\sum_i \ell_i \leq m \leq \frac{1}{2}\sum_i u_i$. Recall that $\Gmul$ is the set of graphs with degree intervals $[\bm{\ell},\bm{u}]$ and for which the total number of edges is equal to $m$.

\begin{lemma}
	Suppose there is an FPRAS for approximating $|\Gmul|$ for every $\frac{1}{2}\sum_i \ell_i \leq m \leq \frac{1}{2}\sum_i u_i$. Then there is an FPRAS for approximating $|\Gul|$.
\end{lemma}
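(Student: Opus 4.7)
The plan is to decompose $|\Gul|$ as a sum over all admissible edge counts. Since every $G \in \Gul$ has some number of edges $m$, and two graphs with different edge counts are distinct, the sets $\Gmul$ for distinct integer values of $m$ partition $\Gul$. Moreover, $\Gmul = \emptyset$ whenever $m \notin \big[\tfrac{1}{2}\sum_i \ell_i,\tfrac{1}{2}\sum_i u_i\big]$, so
$$|\Gul| \,=\, \sum_{m=m_1}^{m_2} |\Gmul|\,,$$
where $m_1 = \lceil \tfrac{1}{2}\sum_i \ell_i\rceil$ and $m_2 = \lfloor \tfrac{1}{2}\sum_i u_i\rfloor$. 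Since $m_2 \leq \binom{n}{2}$, the number of summands $M := m_2 - m_1 + 1$ is polynomial in $n$.

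The key observation is the elementary, well-known fact that multiplicative $(1\pm\epsilon)$-approximations of non-negative quantities are preserved under summation. Concretely, given $\epsilon,\delta > 0$, the reduction runs the assumed FPRAS for $|\Gmul|$ once for each $m \in \{m_1,\dots,m_2\}$, with accuracy parameter $\epsilon$ and confidence parameter $\delta' = \delta/M$, obtaining estimates $\hat{N}_m$. A union bound gives that, with probability at least $1-\delta$, every estimate simultaneously satisfies
$$(1-\epsilon)\,|\Gmul| \,\leq\, \hat{N}_m \,\leq\, (1+\epsilon)\,|\Gmul|\,,$$
in which case the output $\hat{N} := \sum_{m=m_1}^{m_2} \hat{N}_m$ obeys $(1-\epsilon)\,|\Gul| \leq \hat{N} \leq (1+\epsilon)\,|\Gul|$, as required.

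For the running time, a single invocation of the assumed FPRAS runs in time polynomial in $n$, $1/\epsilon$ and $\log(1/\delta') = O(\log M + \log(1/\delta)) = O(\log n + \log(1/\delta))$, and we invoke it $M = \mathrm{poly}(n)$ times, so the total running time is polynomial in $n$, $1/\epsilon$ and $\log(1/\delta)$. There is no serious obstacle to the argument: it is a standard union-bound reduction that works because $m$ ranges over a polynomially sized set and because multiplicative approximation is preserved by addition of non-negative terms. If desired, one can handle edge cases ($|\Gmul| = 0$ for some $m$) by assuming the FPRAS returns $0$ exactly in that case, or equivalently by first checking feasibility in polynomial time via any standard graphicality-with-intervals test before invoking the subroutine.
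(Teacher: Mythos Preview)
Your proof is correct and follows essentially the same approach as the paper: decompose $|\Gul|$ as the sum $\sum_m |\Gmul|$ over the polynomially many admissible values of $m$, call the assumed FPRAS on each term with failure probability $\delta' = \delta/M$, and combine via a union bound and the fact that $(1\pm\epsilon)$-approximations of non-negative terms sum to a $(1\pm\epsilon)$-approximation. Your write-up is in fact slightly more careful than the paper's (you make the running-time analysis and the $|\Gmul|=0$ edge case explicit), but the argument is the same.
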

\begin{proof}
	We write $a = \frac{1}{2}\sum_i \ell_i$ and $b = \frac{1}{2}\sum_i u_i$. Note that there are at most $b - a \leq n^2$ possible choices for $m$, and that
	\[
	|\Gul| = \sum_{m = a}^{b} |\Gmul| \,.
	\]
	Now, for every $m$ use the given FPRAS for approximating $|\Gmul|$ with $\delta' = \delta/n^2$. It outputs a number $c_m$ satisfying $(1-\epsilon)|\Gmul|  \leq c_m \leq (1+\epsilon)|\Gmul|$ with probability at least $1 - \delta/n^2$. Then $c = \sum_m c_m$ satisfies $(1-\epsilon)|\Gul|  \leq c \leq (1+\epsilon)|\Gul|$ with probability at least
	\[
	(1 - \delta/n^2)^{b - a} \geq (1 - \delta/n^2)^{n^2} \geq 1- \delta \,.
	\]
	This completes the proof.
\end{proof}

\subsection{From approximating \texorpdfstring{$|\Gul|$}{} to approximate sampling from \texorpdfstring{$\Gul$}{}}
\label{app:c_to_s}

Again, using a reduction inspired by a similar one for self-reducible problems, see, e.g., \cite{JerrumVV86}, we can  turn our FPRAS for computing $|\Gul|$ into an FPAUS for sampling from $\Gul$. Note that if $\bm{\ell} = \bm{u}$, we can simply use the approximate sampler from Jerrum and Sinclair \cite{Jerrum1990} when $\bm{\ell}$ is near-regular.

As long as $\bm{\ell} \neq \bm{u}$ there is some $i$ such that $\ell_i < u_i$. We can partition the set $\Gul$ based on whether or not the degree of a graphical realization $G$ with $\bm{\ell} \leq \bm{d}(G) \leq \bm{u}$ satisfies $d_i = \ell_i$ or $d_i \geq \ell_i + 1$. For a given vector $\bm{z} = (z_1,\dots,z_n) \in \R^n$ and $z_i' \in \R$ we write $(z_i',\bm{z}_{-i}) = (z_1,\dots,z_{i-1},z_i',z_{i+1},\dots,z_n)$. We then have $\Gul$ as the disjoint union
\[
\Gul = \mathcal{G}(\bm{\ell},(\ell_i,\bm{u}_{-i})) \cup \mathcal{G}((\ell_i + 1,\bm{\ell}_{-i}),\bm{u}) \,.
\]
Using the approximation scheme, we can approximate the marginal probabilities 
\[
\frac{|\mathcal{G}(\bm{\ell},(\ell_i,\bm{u}_{-i}))|}{|\Gul|} \ \ \text{\ \  and\ \ }  \ \ \frac{|\mathcal{G}((\ell_i + 1,\bm{\ell}_{-i}),\bm{u})|}{|\Gul|}
\]
up to a sufficient accuracy, and then sample one of the sets $\mathcal{G}(\bm{\ell},(\ell_i,\bm{u}_{-i}))$ or $\mathcal{G}((\ell_i + 1,\bm{\ell}_{-i}),\bm{u})$ according to these---sufficiently accurate---marginals.

We then repeat this step until the lower and upper bound defining the intervals are equal. Note that this step is only carried out a polynomial number of times. After this we have, roughly speaking, sampled a degree sequence $\myd$ with $\bm{\ell} \leq \bm{d} \leq \bm{u}$ according to the (approximately) correct marginal probability. After this we can use the approximate sampler from \cite{Jerrum1990} to sample from $\mathcal{G}(\bm{\myd})$ (or, e.g., simply the switch Markov chain).

\end{document}